\newtheorem{theorem}{Theorem}
\newtheorem{lemma}[theorem]{Lemma}
\newtheorem{corollary}[theorem]{Corollary}
\theoremstyle{remark}
\newtheorem{observation}[theorem]{Observation}
\newtheorem{remark}[theorem]{Remark}
\newtheorem{example}[theorem]{Example}
\newtheorem{conjecture}[theorem]{Conjecture}
\newtheorem*{claim}{Claim}
\newcommand{\half}{\nicefrac{1}{2}}
\newcommand{\codomainall}{\mathcal{U}}
\def\codomainbool{\mathcal{B}}
\def\codomainnneg{\codomainbool}
\def\codomainpos{\codomainbool^{>0}}
\def\codomainreal{\mathscr{B}}
\def\codomaineff{\codomainbool^{p}}
\def\codomainup{\codomainbool^{\mathrm{up}}}
\def\codomaindown{\codomainbool^{\mathrm{down}}}
\def\codomainupeff{\codomainbool^{\mathrm{up},p}}
\def\codomaindowneff{\codomainbool^{\mathrm{down},p}}
\def\prob#1#2#3{\goodbreak\begin{list}{}{\labelwidth\z@ \itemindent-\leftmargin
                        \itemsep\z@  \topsep6\p@\@plus6\p@
                        \let\makelabel\descriptionlabel}
                \item[\it Name]#1
                \item[\it Instance]#2
                \item[\it Output]#3
                \end{list}}
\def\mumin{\mu_{\min}}
\def\mumax{\mu_{\max}}
\def\numin{\nu_{\min}}
\def\numax{\nu_{\max}}
\def\constzero{\mathbf{0}}
\def\constone{\mathbf{1}}
\def\unaryzero{\unary_0}
\def\unaryone{\unary_1}
\def\unary{\delta}
\def\RHPi{\mathrm{\#RH}\Pi_1}
\def\fclone#1{\langle#1\rangle}
\def\rclone#1{\langle#1\rangle_{\mathrm{R}}}
\def\fclonelim#1{\langle#1\rangle_\omega}
\def\fclonelimeff#1{\langle#1\rangle_{\omega,p}}
\def\NP{\mathsf{NP}}
\def\RP{\mathsf{RP}}
\def\numP{\mathsf{\#P}}
\def\nCSP{\mathrm{\#\mathsf{CSP}}}
\def\CSP{\mathsf{CSP}}
\def\LSM{\mathsf{LSM}}
\def\LSMp{\mathsf{LSM}^{>0}}
\def\BIS{\mathsf{\#BIS}}
\def\SAT{\mathsf{\#SAT}}
\def\B{\{0,1\}}
\def\R{\mathbb{R}}
\def\Q{\mathbb{Q}}
\def\Rnonneg{\R^{\geq0}}
\def\Cnum{\mathbb{C}}
\def\Rpos{\R^{>0}}
\def\Reff{\R^p}
\def\Nat{\mathbb{N}}
\def\poly{\mathop\mathrm{poly}\nolimits}
\def\calA{\mathcal A}
\def\calC{\mathcal C}
\def\calF{\mathcal F}
\def\calR{\mathcal R}
\def\calM{\mathcal M}
\def\calP{\mathcal P}
\def\Fhat{\widehat F}
\def\Fhat{\widehat F}
\def\Ghat{\widehat G}
\def\Hhat{\widehat H}
\def\hatF{\Fhat}
\def\vecc{\boldsymbol c}
\def\vecw{\boldsymbol w}
\def\vecx{\boldsymbol x}
\def\vecy{\boldsymbol y}
\def\vecz{\boldsymbol z}
\renewcommand{\Bar}[1]{\overline{#1}}
\newcommand{\vecxny}{\vecx\wedge\vecy}
\newcommand{\vecxuy}{\vecx\vee\vecy}
\newcommand{\ind}[1][A]{\constone_A}
\def\OR{\mathrm{OR}}
\def\EQ{\mathrm{EQ}}
\def\IMP{\mathrm{IMP}}
\def\NEQ{\mathrm{NEQ}}
\def\NAND{\mathrm{NAND}}
\def\maxF{F_{\max}}
\def\Fmax{\maxF}
\def\APred{\leq_\mathrm{AP}}
\def\APeq{=_\mathrm{AP}}
\def\dom{D}
\def\ring{\mathscr{C}}
\def\IDo{\mathrm{ID}_1}
\def\IRt{\mathrm{IR}_2}
\def\ILt{\mathrm{IL}_2}
\def\omegadef{$\mathrm{pps}_\omega$-definable}
\def\ppdef{pps-definable}
\def\ppsdefinability{$\mathrm{pps}$-definability}
\def\ppdefinable{\ppdef}
\def\ppformula{pps-formula}
\def\ppformulas{pps-formulas}
\def\omegadefinition{$\mathrm{pps}_\omega$-definition}
\def\omegadefinable{\omegadef}
\def\omegadefinability{$\mathrm{pps}_\omega$-definability}
\let\epsilon=\varepsilon
\let\phi=\varphi
\let\rho=\varrho
\def\EQp{\mathrm{EQ}'}
 \def\ZIsing{Z_\mathrm{Ising}}
 \def\calC{\mathcal{C}}
\begin{document}

\begin{titlepage}
\renewcommand{\thefootnote}{\fnsymbol{footnote}}
\begin{center}
{\Large\textsc{\strut
The expressibility of functions on the Boolean domain, with applications to
Counting CSPs\footnote{The work reported in this paper was supported by an EPSRC Research Grant 
``Computational Counting'' (refs. EP/I011528/1,  EP/I011935/1, EP/I012087/1), and by an NSERC Discovery Grant, 
and by an EPSRC doctoral training grant. Part of the work was supported by a visit to the Isaac Newton Institute 
for Mathematical Sciences, under the programme ``Discrete Analysis''.
Some of the results were announced in the preliminary
papers~\cite{BuDyGJ12} and \cite{McQuil11}.}}}\\\vspace{0.8cm}
\renewcommand{\thefootnote}{\arabic{footnote}}\setcounter{footnote}{0}
{\large
Andrei A. Bulatov\footnote{School of Computing Science,
Simon Fraser University, 8888 University Drive, Burnaby BC, V5A 1S6, Canada.}\hspace{1cm}
Martin Dyer\footnote{School of Computing,
University of Leeds, Leeds LS2~9JT, United Kingdom.}\\[0.25\baselineskip]
Leslie Ann Goldberg\footnote{Department of Computer Science,
Ashton Building, University of Liverpool, Liverpool L69 3BX,
United Kingdom.}\hspace{6mm}
Mark Jerrum\footnote{School of Mathematical Sciences,
Queen Mary, University of London, Mile End Road, London E1 NS,
United Kingdom.}\hspace{6mm}
Colin McQuillan\footnotemark[3]\\\vspace{1cm}
\today}
\end{center}\vspace{0.8cm}

\begin{abstract}
An important tool  in the study of the complexity of Constraint Satisfaction Problems (CSPs) is
the notion of a relational clone, which is the set of all relations expressible
using primitive positive formulas over a particular set of base relations.
Post's lattice gives a complete classification of all
Boolean
relational clones, and this has been used to
classify the computational difficulty of CSPs.
Motivated by a desire to understand the computational complexity of (weighted) counting CSPs,
we develop an analogous notion of functional clones and study the landscape of these clones.
One of these clones is the collection of log-supermodular (lsm) functions, which turns
out to play a significant role in classifying counting CSPs.
In the conservative case (where  all nonnegative unary functions are available),
we show that there are no functional clones lying strictly between
the clone of lsm functions and the total clone (containing all functions).  Thus, any counting
CSP that contains a single nontrivial non-lsm function is computationally as hard
to approximate as any problem in \#P.
Furthermore, we show that any non-trivial functional clone (in a sense that will be made precise) contains the
binary function ``implies''.  As a consequence, in the conservative case, all non-trivial counting CSPs are  as hard
as $\BIS$, the  problem of counting independent sets in a  bipartite graph.
Given the complexity-theoretic results, it is natural to ask whether the ``implies'' clone is equivalent to
the clone of lsm functions. We use the M\"obius transform and the Fourier transform to show that these clones coincide precisely up to arity 3. It is an intriguing open question whether the lsm clone is finitely generated.
Finally, we investigate functional clones in which only restricted classes of
unary  functions are available.
\end{abstract}
\end{titlepage}

\setcounter{footnote}{0}
\section{Introduction}\label{sec:intro}

In the classical setting, a
(non-uniform)
constraint satisfaction problem $\CSP(\varGamma)$ is specified by a finite domain~$D$ and a constraint language $\varGamma$, which is a set of relations of varying arities over~$D$. For example, $D$ might be the Boolean domain $\{0,1\}$ and $\varGamma$ might be the set containing the single relation $\NAND = \{(0,0),(0,1),(1,0)\}$. An instance of $\CSP(\varGamma)$ is a set of $n$ variables taking values in~$D$, together with a set of constraints on those variables.  Each constraint is a relation $R$ from~$\varGamma$ applied to a tuple of variables, which is called the ``scope'' of the constraint.  The problem is to find an assignment of domain elements to the variables which satisfies all of the constraints. For example, the problem of finding an independent set in a graph can be represented as a CSP with $\varGamma=\{\NAND\}$. The vertices of the graphs are the variables of the CSP instance. The instance has one $\NAND$ constraint for each edge of the graph. Vertices whose variables are mapped to domain element~$1$ are deemed to be in the independent set. Constraint satisfaction problems (CSPs) may be viewed as generalised satisfiability problems, among which usual satisfiability is a very special case.

The notion of expressibility is key to understanding the complexity of CSPs.
A \emph{primitive positive formula} (pp-formula)
in variables $V =\{v_1,\ldots,v_n\}$
is a formula of the form
$$\exists v_{n+1} \ldots v_{n+m}\, \bigwedge_i \phi_i,$$
where each atomic formula $\phi_i$ is
either a relation $R$ from $\varGamma$ applied to some of the variables in~$V'=\{v_1,\ldots,v_{n+m}\}$
or an equality relation of the form $v_i=v_j$,
which we write as $\EQ(v_i,v_j)$.
For example, the formula
$$\exists v_3\,\NAND(v_1,v_2)\EQ(v_1,v_3)\EQ(v_2,v_3)$$
is a pp-formula in variables $v_1$ and $v_2$.
This formula corresponds to the relation $\{(0,0)\}$ since the only
way to satisfy the constraints is to map both $v_1$ and $v_2$ to the domain element~$0$.

The \emph{relational clone} $\rclone \varGamma$ is the set of all relations
expressible as pp-formulas over~$\varGamma$.
Relational clones have played a key role in the development of the complexity of CSPs
because of the following important fact, which is described,
for example, in the expository chapter of Cohen and
Jeavons~\cite{CohenJeavons}: If two sets of relations
$\varGamma$ and~$\varGamma'$ generate the same relational clone,
then the computational complexities
of the corresponding CSPs, $\CSP(\varGamma)$ and $\CSP(\varGamma')$, are
exactly the same. Thus,
in order to understand the complexity of CSPs, one
does not to consider all sets of
relations~$\varGamma$. It suffices to  consider those that are
relational clones.

Recently, there has been considerable interest in the computational
complexity of counting CSPs (see, for example~\cite{BDGJJR12,   CCL11, Chen11, wbool, trichotomy,   Tomo}).  Here, the goal is to count the number of
solutions of a CSP rather than merely to decide if a solution exists.  In fact, in order to
encompass the computation of partition functions of models from statistical
physics and other generating functions, it is  common (see, for example, \cite{BG}) to consider weighted
sums, which can be expressed by replacing the relations in the constraint
language by real-valued or complex-valued functions.
In this case, the weight of an
assignment (of domain values to the variables)
is the product of the function values corresponding to that
assignment, while the value of the CSP instance itself is the sum of the
weights of all assignments.
If $I$ is an instance of such a counting CSP
then this weighted sum is called the ``partition function of~$I$'' (by analogy
with the concept in statistical physics) and is denoted~$Z(I)$. For a
finite set of functions~$\varGamma$ we are interested in the problem
$\nCSP(\varGamma)$, which is the problem of computing~$Z(I)$, given an instance~$I$  which uses only functions
from~$\varGamma$.

Our first goal (see \S\ref{sec:functionalClones}) is
to determine the most useful analogues of pp-definability and relational clones   in the context of (weighted) counting
CSPs (\#CSPs), and to see what insight  this provides into the
computational complexity of these problems.
It is clear that, in order to adapt  the  concept of pp-definability to
the counting setting, one should  replace a conjunction of
relations by a product of functions  and
replace existential quantification by summation.
However, there are sensible alternatives for the detailed definitions,
and these have ramifications for the complexity-theoretic consequences.
There is at least one proposal in the literature for extending
pp-definability to the algebraic/functional setting --- that of Yamakami~\cite{Tomo}.
However, we find it useful to adopt a more liberal notion of pp-definability,
including a limit operation.
Without this, a functional clone could contain arbitrarily close approximations to a function~$F$ of interest, without including~$F$ itself.
We call this analogue of pp-definability   ``\omegadefinability''.
The notion of \omegadefinability\ leads to a more  inclusive functional clone than the one considered in~\cite{Tomo}.

Aside from a desire for tidiness, there is a good empirical motivation for
introducing limits. Just as pp-definability is closely related to
polynomial-time reductions between classical CSPs, so is \omegadefinability{} related to approximation-preserving reductions between (weighted) counting CSPs.
Lemma~\ref{obsAPclones} is a precise statement of this connection.  Many
approx\-imation-preserving reductions in the literature (for example, those in~\cite{GJ07}) are based not on a fixed ``gadget'' but on sequences of
increasingly-large gadgets that come arbitrarily close to some property
without actually attaining it. Our notion of \omegadefinability{} is intended to capture this phenomenon.

The second, more concrete goal of this paper (see \S\ref{sec:relclones}--\S\ref{sec:main})
is to explore the space of functional clones
and to use what we learn about this space to classify
the complexity of approximating \#CSPs.  We
restrict attention to the Boolean situation so the domain is $\B$ and
the allowed functions are of the form $\B^k\to\Rnonneg$ for some integer~$k$.
We examine the landscape of
functional clones
for the case in which all nonnegative unary functions
(weights) are available.
This case is known as the \emph{conservative} case. It is also
studied in the context of decision and optimisation CSPs~\cite{Bulato11,KolZiv12}
and in work related to counting CSPs
such as Cai, Lu and Xia's work on classifying ``$\mathrm{Holant}^*$''
problems~\cite{CLX11}.
The conservative case is easier to to classify than the general
case, so we are able to  construct a useful map of the landscape of functional
clones  (see Theorem~\ref{thm:main}).
Note that Yamakami~\cite{Tomo} has considered an even more special case in which
all unary weights (including negative weights) are available.
In that case the landscape turns out to be less rich and more pessimistic ---
negative weights introduce cancellation, which tends to drive approximate
counting CSPs in the direction of intractability.

An issue that turns out to be important in the classification of conservative functional
clones is \emph{log-supermodularity}.
Roughly, a function with Boolean domain is said to be log-supermodular if its logarithm is
supermodular. (A formal definition appears later.)
It is a non-trivial fact
(Lemma~\ref{lem:lsmClosure}) that the set $\LSM$ of log-supermodular
 functions is  a functional clone (using our notion of \omegadefinability).

Conservative functional clones are classified as follows.
 A particularly simple functional clone is the
 clone  generated by the disequality relation.
A counting CSP derived from this
clone is trivial to solve exactly, as the partition function factorises.   We say that functions from this clone are of ``product form''.  Our main
result (Theorem~\ref{thm:main}) is that any clone that contains a
function~$F$ that is not of product form necessarily contains
the   binary relation
$\IMP=\{(0,0),(0,1),(1,1)\}$.
This has important complexity-theoretic consequences, which will be discussed presently.
Furthermore, (also Theorem~\ref{thm:main}),
any non-trivial
clone that contains a function~$F$ that is  not log-supermodular
actually
contains all functions.
Therefore a large part of the functional
clone landscape is very simple. In particular, we have a complete understanding of
the clones below the clone generated by $\IMP$ and
of the clones above $\LSM$.
The complexity of the landscape of functional clones is thus sandwiched between the clone generated by~$\IMP$ and the clone $\LSM$.

In order to derive complexity-theoretic consequences (see Theorem~\ref{corthm:main}),
we also present an efficient version of \omegadefinability, and a corresponding notion of functional clone.
The complexity-theoretic consequences are  the third contribution of the paper.
In order to describe these, we need a quick digression into the complexity of approximate counting.
The complexity class $\RHPi$
of counting problems was introduced by
Dyer,  Goldberg, Greenhill and Jerrum~\cite{trichotomy} as a means
to classify a wide class of approximate counting problems that were
previously of indeterminate computational complexity.  The problems in $\RHPi$
are those that
can be expressed in terms of counting the number of models of a logical formula
from a certain syntactically restricted class.
The complexity class $\RHPi$ has a completeness class (with respect to
approximation-preserving ``AP-reductions'')
which includes a wide and ever-increasing
range of natural counting problems, including:
independent sets in a bipartite graph,
downsets in a partial order,
configurations in the Widom-Rowlinson model (all~\cite{trichotomy})
and stable matchings~\cite{stablematchings}.
Either all of these problems admit a Fully Polynomial Randomised
Approximation Scheme (FPRAS), or none do.
The latter is conjectured. A typical complete problem in this class
is $\BIS$, the problem of  counting independent sets
in a bipartite graph.

Our complexity-theoretic results
are presented in~\S\ref{sec:complexity}.
As noted above, $\nCSP(\calF)$ is computationally easy if every function in~$\calF$ is of product form.
We show that, in every other (conservative) case, it is as difficult to approximate as $\BIS$.
If, in addition, $\calF$ contains a function~$F$ which is not   log-supermodular, then
the counting problem $\nCSP(\calF)$ turns out to be universal for Boolean counting CSPs and hence is provably $\NP$-hard to approximate. As immediate corollaries, we recover existing results concerning the complexity of computing the partition function of the Ising model~\cite{GJ07}.

Given the above discussion, one might speculate that the IMP-clone and $\LSM$ are the same.  In fact, they are not. In \S\ref{sec:cloneLSMk}, we examine the classes $\fclone{\LSM_k}$ generated by lsm functions of arity at most $k$. We show that $\fclone{\LSM_3}$ is equal to the $\IMP$-clone, but we give a proof  that
$\fclone{\LSM_3}$ is strictly contained in $\fclone{\LSM_4}$. This mirrors the situation for VCSPs, where binary submodular functions can express all ternary submodular functions but not all arity 4 submodular functions \cite{ZCJ09}. However, we do not know whether
there is a fixed~$k$ such that
$\LSM=\fclone{\LSM_k}$. We conjecture that this is not the case. If $\LSM=\fclone{\LSM_k}$ for some $k$, there would still remain the question of whether $\LSM$ is finitely generated, i.e., whether it is the functional clone generated by some finite set of functions $\calF$. We conjecture the opposite, that there is no such $\calF$.

Finally, in \S\ref{sec:restrictedWeights} and \S\ref{sec:fewWeights}, we step outside the conservative case, and study functional clones in which only restricted classes of unary functions are available.  As might be expected, this yields a richer structure of functional clones, including one that corresponds to the ferromagnetic Ising model with a consistent field, a problem that is tractable in the FPRAS sense~\cite{JS93}.  We also exhibit in this setting two clones that are provably incomparable with respect to inclusion, even though their corresponding counting CSPs are related by approximation-preserving reducibility.  These counting CSPs have natural interpretations as (a)~evaluating the weight enumerator of a binary code, and (b)~counting independent sets in a bipartite graph.  This example shows that, in demonstrating intractability, it may sometimes be necessary to use reductions that go beyond \omegadefinability.

Although we focus on approximation of the partition functions of (weighted)
\#CSPs in this paper, there is, of course, an extensive literature on exact
evaluation.  See, for example, the recent survey of Chen~\cite{Chen11}.

\section{Functional clones}\label{sec:functionalClones}
As usual, we denote the natural numbers by $\Nat$, the real numbers by $\R$ and the complex numbers by $\Cnum$. For $n\in\Nat$, we denote the set $\{1,2,\ldots,n\}$ by $[n]$.

Let $(\ring,+,\times)$ be any subsemiring of $(\Cnum,+,\times)$, and
let~$\dom$ be a finite domain. For $n\in\Nat$, denote by~$\codomainall_n$ the set of all functions $\dom^n\to\ring$; also denote by $\codomainall =\codomainall_0\cup\codomainall_1\cup\codomainall_2\cup\cdots$
the set of functions of all arities.
Note that we do not specify the domain, which we take to be understood from the context, in this notation.
Suppose $\calF\subseteq\codomainall$ is some collection of functions,
$V=\{v_1,\ldots,v_n\}$ is a set of variables and  $\vecx:\{v_1,\ldots,v_n\}\to\dom$ is
an assignment to those variables.
An atomic formula has the form
$\phi=G(v_{i_1},\ldots,v_{i_a})$ where $G\in\calF$, $a=a(G)$ is the arity of~$G$,
and $(v_{i_1},v_{i_2},\ldots,v_{i_a})\in V^a$
is a scope.  Note that repeated variables are allowed.
The function $F_\phi:\dom^n\to\ring$
represented by the atomic formula $\phi=G(v_{i_1},\ldots,v_{i_a})$ is just
$$
F_\phi(\vecx)=G(\vecx(v_{i_1}),\ldots,\vecx(v_{i_a}))=G(x_{i_1},\ldots,x_{i_a}),
$$
where from now on we write $x_j=\vecx(v_j)$.

A \ppformula{} (``primitive product summation formula'') is a summation\footnote{To avoid ambiguity, we will try to use ``summation'' of functions only with the meaning given here. Sums of different functions will be referred to as ``addition''.} of a product of atomic formulas.
A \ppformula~$\psi$ in variables $V\subseteq V'=\{v_1,\ldots,v_{n+m}\}$ over~$\calF$
has the form
\begin{equation}\label{eq:ppformulaDef}
\psi=\sum_{v_{n+1},\ldots,v_{n+m}}\,\prod_{j=1}^s\phi_j,
\end{equation}
where $\phi_j$ are all atomic formulas over~$\calF$ in the variables~$V'$.
(The variables $V$ are free, and the others, $V'\setminus V$, are bound.)
The formula~$\psi$ specifies a function $F_\psi:\dom^n\to\ring$
in the following way:
\begin{equation}\label{eq:ppfunctionDef}
F_\psi(\vecx)=\sum_{\vecy\in\dom^m}\prod_{j=1}^sF_{\phi_j}(\vecx,\vecy),
\end{equation}
where $\vecx$ and $\vecy$ are assignments
$V\to\dom$,
$V'\setminus V\to\dom$.
The functional clone~$\fclone\calF$ generated by~$\calF$
is the set of all  functions in $\codomainall$
that can be represented
by a \ppformula{} over~$\calF\cup\{\EQ\}$
where $\EQ$ is the binary equality function
defined by $\EQ(x,x)=1$ and $\EQ(x,y)=0$ for $x\not=y$.
We refer to the \ppformula{} as an \emph{implementation} of the function.

Since \ppformula{s} are defined using sums of products (with just one level of each),
we need to check that functions that are \ppdef{} in terms of functions
that are themselves \ppdef{} over~$\calF$ are actually directly \ppdef{} over~$\calF$.
The following lemma ensures that this is the case.

\begin{lemma}
\label{lem:fcloneTransitive}
If $G\in\fclone\calF$ then $\fclone{\calF,G}=\fclone{\calF}$.
\end{lemma}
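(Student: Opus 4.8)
The plan is to establish the two inclusions separately. The inclusion $\fclone\calF \subseteq \fclone{\calF,G}$ is immediate, since every \ppformula{} over $\calF\cup\{\EQ\}$ is already a \ppformula{} over $\calF\cup\{G\}\cup\{\EQ\}$. All the work is in the reverse inclusion. So I would take an arbitrary $H\in\fclone{\calF,G}$, fix a \ppformula{} $\psi$ implementing $H$ over $\calF\cup\{G\}\cup\{\EQ\}$, and --- using the hypothesis $G\in\fclone\calF$ --- fix a \ppformula{} $\psi_G$ over $\calF\cup\{\EQ\}$ implementing $G$. The goal is to rewrite $\psi$ as a \ppformula{} over $\calF\cup\{\EQ\}$ computing the same function $H$, which will show $H\in\fclone\calF$.

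First I would perform the substitution. Each atomic factor $\phi_j$ of $\psi$ is either an atomic formula over $\calF\cup\{\EQ\}$, which is left as is, or has the form $G(v_{i_1},\ldots,v_{i_a})$. For each occurrence of the latter, introduce a fresh isomorphic copy of $\psi_G$, renaming all of its variables (both the free ones and the quantified ones) so that the copies use pairwise disjoint variable sets, all disjoint from the variables of $\psi$. To plug the copy in for $G(v_{i_1},\ldots,v_{i_a})$, it is cleanest to keep the renamed free variables $u_1,\ldots,u_a$ of the copy as new bound variables and adjoin the factors $\prod_{\ell=1}^{a}\EQ(v_{i_\ell},u_\ell)$; this is legitimate because $\EQ$ is among the available functions by definition of a functional clone, and it sidesteps any awkwardness when the scope $(v_{i_1},\ldots,v_{i_a})$ has repeated entries. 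Since $\psi_G$ represents $G$, formula \eqref{eq:ppfunctionDef} shows that the function computed is unchanged.

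The result is a formula of the form $\sum\prod(\cdots)$ in which some of the factors are themselves summations $\sum\prod$ (the copies of $\psi_G$). The remaining step --- and the only place any care is needed --- is to flatten this to a single level of summation and a single level of product. Because $(\ring,+,\times)$ is a subsemiring of $(\Cnum,+,\times)$, addition and multiplication are associative and commutative and multiplication distributes over addition, so the generalised distributive law lets me pull each inner summation outside the product, and a finite Fubini interchange lets me merge it with the outer summation. Writing $\vecy$ for the bound variables of $\psi$, and $\vecz$ for the collection of all bound variables introduced by the copies of $\psi_G$ together with the $u_\ell$'s, this produces
$$
H(\vecx)=\sum_{\vecy,\vecz}\ \prod_{k}F_{\phi_k'}(\vecx,\vecy,\vecz),
$$
where each $\phi_k'$ is an atomic formula over $\calF\cup\{\EQ\}$ --- namely the untouched atomic formulas of $\psi$, the atomic formulas of all the copies of $\psi_G$, and the adjoined $\EQ$-factors. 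This is a \ppformula{} over $\calF\cup\{\EQ\}$ representing $H$, so $H\in\fclone\calF$, as required. There is no genuine obstacle here: since every sum and product in sight is finite, the only thing to watch is the variable bookkeeping that guarantees the distributive-law rearrangement is valid.
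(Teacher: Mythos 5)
Your proposal is correct and follows essentially the same route as the paper: substitute a \ppformula{} for each occurrence of $G$ using fresh bound variables for each copy, then flatten the nested sum-of-products by distributivity and interchange of finite sums. The only (harmless) difference is that you link each copy's free variables to the scope via extra $\EQ$ factors, whereas the paper substitutes the scope variables directly into the copies, which works equally well since atomic formulas permit repeated variables.
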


Note that, to simplify notation, we write $\fclone{\calF,G}$ in place of the more correct $\fclone{\calF\cup\{G\}}$.  More generally, we shall often
drop set-brackets, replace the union symbol~$\cup$ by a comma, and confuse
a singleton set with the element it contains.

\begin{proof}[Proof of Lemma~\ref{lem:fcloneTransitive}]
Let $\calF' = \calF \cup \{\EQ\}$.
Suppose  that $\psi$ is  a \ppformula{} over $\calF'\cup\{G\}$ given by
\begin{equation}\label{eqone}
\psi = \sum_{v_{n+1},\ldots,v_{n+m}} \left(\prod_{i=1}^{r} \phi_i\right)\left( \prod_{j=1}^s\psi_j\right),\end{equation}
where  $\{\phi_i\}$ are atomic $\calF'$-formulas and $\{\psi_j\}$ are atomic $G$-formulas
in the variables $V'$.
Then
\begin{equation}
\label{eqtwo}
F_\psi(\vecx)=\sum_{\vecy\in\dom^m}\left(\prod_{i=1}^{r}F_{\phi_i}(\vecx,\vecy)\right)\left(\prod_{j=1}^sF_{\psi_j}(\vecx,\vecy)\right),
\end{equation}
where $\vecx$ and $\vecy$ are assignments
$
\vecx:\{v_1,\ldots,v_n\}\to\dom$ and
 $\vecy:\{v_{n+1},\ldots,\allowbreak v_{n+m}\}\to\dom$.
Now, since $G$ is \ppdef{} over~$\calF'$,
and each~$\psi_j$ is an atomic $G$-formula in the variables $V'$, we can
write each
$\psi_j$ as
$$\psi_j = \sum_{v_{\nu_j+1},\ldots,v_{\nu_j+\ell}}\,\prod_{k=1}^t\phi_{j,k},$$
where $\ell$ is the number of bound variables used in the definition of $\psi_j$
($\ell$ is independent of~$j$),
$\nu_j= n+m+(j-1)\ell$
is the number of free variables plus the number of bound variables that are ``used up'' by
$\psi_1,\ldots,\psi_{j-1}$, and
each $\phi_{j,k}$ is an atomic $\calF' $-formula
over the variables $V' \cup \{v_{\nu_j+1},\ldots,v_{\nu_j+\ell}\}$.
We get
\begin{align*}
F_{\psi}(\vecx) &=\sum_{\vecy\in\dom^{m}}\left(\prod_{i=1}^{r}F_{\phi_i}(\vecx,\vecy)\right)
\left(\prod_{j=1}^s
\sum_{\vecz^j\in\dom^\ell}\prod_{k=1}^tF_{\phi_{j,k}}(\vecx,\vecy,\vecz^j)\right)\\
&=\sum_{\vecy\in\dom^{m}}\sum_{\vecz^1\in\dom^\ell}\cdots\sum_{\vecz^{s}\in\dom^\ell}
\left(\prod_{i=1}^{r}F_{\phi_i}(\vecx,\vecy)\right)\left(\prod_{j=1}^s\prod_{k=1}^tF_{\phi_{j,k}}(\vecx,\vecy,\vecz^j)\right),
\end{align*}
where each $\vecz^j$ is an assignment $\vecz^j:\{v_{\nu_j+1},\ldots,v_{\nu_j+\ell}\}\to\dom$.  So
$$ \phi=
\sum_{v_{n+1},\ldots,v_{\nu_{s}+\ell}}\,\left(\prod_{i=1}^{r} \phi_i\right)\left( \prod_{j=1}^s
\prod_{k=1}^t\phi_{j,k}\right)$$
is a   \ppformula{} over~$\calF'$ for the function~$F_\psi$.
\end{proof}

To extend the notion of definability, we allow limits as follows.
We say that an $a$-ary function $F\in\codomainall$ is \omegadef{} over $\calF$ if there exists a finite subset~$S_F$
of $\calF$, such that, for every $\epsilon>0$, there exists an $a$-ary function $\Fhat$, pps-definable over $S_F$, such that
$$\|\Fhat-F\|_\infty=\max_{\vecx\in\dom^a}|\Fhat(\vecx)-F(\vecx)|<\epsilon.$$

Denote the set of functions that are \omegadef{} over~$\calF\cup\{\EQ\}$
by~$\fclonelim\calF$;  we call this the {\it \omegadef{} functional clone\/} generated by~$\calF$. Observe that functions in $\fclonelim\calF$ are determined only by finite subsets of~$\calF$.
Also, some  functions taking values outside~$\ring$
may be the limit of functions pps-definable over $\calF$. But they are not \omegadef{}, since the function values of the limit must be in~$\ring$.  The domain $\ring$ of the universal class of functions~$\codomainall$ in operation at any time will be clear from the context.

The following lemma is an analogue of Lemma~\ref{lem:fcloneTransitive}.
\begin{lemma}
\label{lem:fclonelimTransitive}
If $G\in\fclonelim\calF$ then $\fclonelim{\calF,G}=\fclonelim{\calF}$.
\end{lemma}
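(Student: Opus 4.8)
The plan is to combine the exact-case transitivity (Lemma~\ref{lem:fcloneTransitive}) with a straightforward $\epsilon$-pushing argument that controls how approximation errors propagate through a pps-formula. The inclusion $\fclonelim{\calF}\subseteq\fclonelim{\calF,G}$ is immediate, so the work is the reverse inclusion: given $H\in\fclonelim{\calF,G}$ we must show $H\in\fclonelim{\calF}$. By definition there is a finite $S_H\subseteq\calF\cup\{G\}$ such that for every $\epsilon>0$ some $\widehat H$ that is pps-definable over $S_H$ satisfies $\|\widehat H-H\|_\infty<\epsilon$. Since $G\in\fclonelim{\calF}$, it is $\omega$-definable over some finite $S_G\subseteq\calF$; set $S:=(S_H\setminus\{G\})\cup S_G\subseteq\calF$, which is finite, and this will be the finite witness subset for $H$ over $\calF$.

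The core step is: fix $\epsilon>0$; first pick $\widehat H$ pps-definable over $S_H\cup\{\EQ\}$ with $\|\widehat H-H\|_\infty<\epsilon/2$, say with implementation $\psi=\sum_{\vecy}\prod_{i}\phi_i\prod_j\psi_j$ as in~\eqref{eqone}, where the $\phi_i$ are atomic over $S_H\setminus\{G\}$ (together with $\EQ$) and the $\psi_j$ are the $s$ atomic $G$-formulas. Now replace each occurrence of $G$ by a sufficiently good pps-approximation $\widetilde G$ over $S_G\cup\{\EQ\}$: the resulting formula $\widetilde\psi$ is pps-definable over $S\cup\{\EQ\}$ (invoking Lemma~\ref{lem:fcloneTransitive} to flatten the nested summation into a single legal pps-formula, exactly as in its proof), and it represents some function $\widetilde H$. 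The remaining task is to bound $\|\widetilde H-\widehat H\|_\infty$. Because all functions take values in $\ring\subseteq\Cnum$ and all sums/products in~\eqref{eqtwo} range over finitely many assignments $\vecy\in\dom^m$, each $F_{\widetilde H}(\vecx)$ and $F_{\widehat H}(\vecx)$ is a fixed polynomial expression in the values of the $\phi_i$ and in the values of $G$ versus $\widetilde G$ at the finitely many points of $\dom$ that actually occur. A routine telescoping/triangle-inequality estimate — replacing $G$ by $\widetilde G$ one atomic occurrence at a time, and bounding the other factors by a constant $M$ depending only on $\psi$ and on $\|G\|_\infty$ (which is finite since its domain $\dom^{a(G)}$ is finite) — shows $\|\widetilde H-\widehat H\|_\infty \le C\,\|\widetilde G-G\|_\infty$ for a constant $C=C(\psi,G)$. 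Choosing $\widetilde G$ so that $\|\widetilde G-G\|_\infty<\epsilon/(2C)$ gives $\|\widetilde H-H\|_\infty\le\|\widetilde H-\widehat H\|_\infty+\|\widehat H-H\|_\infty<\epsilon$, which is exactly what $\omega$-definability of $H$ over $\calF$ requires.

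The one subtlety — and the step I expect to need the most care — is that the constant $C$ must not depend on $\epsilon$. Here the important point is that the structure of the implementation $\psi$ (number of bound variables $m$, number of atomic factors $r+s$, the scopes) is fixed \emph{before} we start shrinking the error in $G$: we choose $\widehat H$ once for the given $\epsilon/2$, and only then approximate $G$ inside that fixed gadget. This is legitimate precisely because the definition of $\omega$-definability quantifies $\widehat H$ after $\epsilon$, so we are free to fix $\psi$ first. With $\psi$ fixed, $m$, $s$ and $\sup_{\vecx,\vecy}\prod_i |F_{\phi_i}(\vecx,\vecy)|$ are all constants, and a crude bound such as $C \le s\,2^{m}\,(\max(1,\|G\|_\infty,\|\widetilde G\|_\infty))^{s-1}\cdot(\text{bound on the }\phi_i\text{ product})$ suffices; one may additionally assume $\|\widetilde G-G\|_\infty\le 1$ so that $\|\widetilde G\|_\infty\le\|G\|_\infty+1$ and the constant is genuinely uniform. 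No new idea beyond Lemma~\ref{lem:fcloneTransitive} and finiteness of the domain is needed.
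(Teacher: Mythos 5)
Your proposal is correct and follows essentially the same route as the paper: fix the witness set $S=(S_H\setminus\{G\})\cup S_G$, choose $\widehat H$ for the given $\epsilon/2$ first, then substitute a sufficiently good pps-approximation of $G$ into the now-fixed formula $\psi$, flatten via Lemma~\ref{lem:fcloneTransitive}, and conclude by the triangle inequality. The only difference is cosmetic: the paper simply asserts ``continuity of the operators of \ppformula{s}'' and defers the explicit error bound to the proof of Lemma~\ref{lem:fclonelimeffTransitive}, whereas you carry out the telescoping estimate in place (and your worry that $C$ ``must not depend on $\epsilon$'' is stronger than necessary here --- it may depend on $\epsilon$ through $\psi$, since $\widetilde G$ is chosen after $C$ is known; independence only matters for the efficient version).
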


\begin{proof}
Let $\calF' = \calF \cup \{\EQ\}$. Suppose that $H$ is an $a$-ary function in $\fclonelim{\calF,G}$.
Let $S_H$ be a finite subset of $\calF'\cup\{G\}$ such that the following is
true:
Given $\epsilon>0$, there exists an $a$-ary function
$\Hhat$, pps-definable over~$S_H$,
such that
$\|\Hhat-H\|_\infty<\epsilon/2$.  Let $\psi$ be a \ppformula{}
over $S_H$ representing~$\Hhat$.
For any function~$\Ghat$ with the same arity as~$G$, denote by $\psi[G{:=}\Ghat]$ the formula
obtained from~$\psi$ by replacing all occurrences of $G$ by~$\Ghat$.  By continuity of the operators of \ppformula{s},
we know there exists $\delta>0$ such that, for every function~$\Ghat$ of
the same arity as~$G$,
$\|\Ghat-G\|_\infty<\delta$ implies
$$\|F_{\psi[G{:=}\Ghat]}-F_\psi\|_\infty<\epsilon/2.$$
This claim will be explicitly quantified in the proof of Lemma~\ref{lem:fclonelimeffTransitive}, but we don't
need so much detail here.
Of course, $\Hhat = F_\psi$ so for each such $\Ghat$ we have
$\|F_{\psi[G{:=}\Ghat]}-\Hhat\|_\infty<\epsilon/2$.
Now
let $S_G$ be the finite subset of $\calF'$ used to show that $G$ is \omegadef{} over $\calF'$.
Let $S = S_G \cup S_H \setminus \{G\} \subseteq \calF'$.
 Choose a function
 $\Ghat$, pps-definable over~$S_G$,
 satisfying $\|\Ghat-G\|_\infty<\delta$.  Notice that
 $\Ghat \in \fclone{S}$
 and
 $F_{\psi}\in \fclone{S,G}$ so
$F_{\psi[G{:=}\Ghat]}\in\fclone  S$ (by Lemma~\ref{lem:fcloneTransitive}), and
$$\|F_{\psi[G{:=}\Ghat]}-H\|_\infty \leq \|F_{\psi[G{:=}\Ghat]}-\Hhat\|_\infty+\|\Hhat-H\|_\infty<\epsilon.$$
Since $\epsilon>0$ is arbitrary,
and $S\subseteq\calF'$ is finite, we conclude that $H\in\fclonelim{\calF}$.
\end{proof}

That completes the setup for expressibility.  In order to deduce complexity results,
we need an efficient version of $\fclonelim\calF$.
We say that a function~$F$ is \emph{efficiently} \omegadef{} over $\calF$ if
there is a finite subset $S_F$ of
$\calF$,
and a TM $\calM_{F,S_F}$ with the following property:
on input $\epsilon>0$, $\calM_{F,S_F}$ computes
a \ppformula~$\psi$ over $S_F$
such that $F_\psi$ has the same arity as~$F$ and
  $\| F_\psi-F\|_\infty<\epsilon$.
The running time of $\calM_{F,S_F}$ is at most a polynomial in  $\log\epsilon^{-1}$.
Denote the set of functions in $\codomainall$ that are efficiently \omegadef{} over~$\calF\cup\{\EQ\}$
by~$\fclonelimeff\calF$;  we call this the {\it efficient \omegadef{} functional clone\/}
generated by~$\calF$,

The following useful observation is immediate from the definition
of $\fclonelimeff{\calF}$.
\begin{observation}
\label{obs:26July}
Suppose $F\in \fclonelimeff{\calF}$ (or $F\in \fclonelim{\calF}$). Then there is a finite subset~$S_F$ of~$\calF$ such
that $F\in\fclonelimeff{S_F}$ (resp.\ $F\in\fclonelim{S_F}$).
\end{observation}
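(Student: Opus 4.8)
The plan is to unwind the definitions of $\fclonelimeff{}$ and $\fclonelim{}$; the observation is pure bookkeeping about which base functions occur in the data witnessing definability, so I expect no genuine difficulty — the sole point needing care is the role of $\EQ$.

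First I would recall the definition. If $F\in\fclonelimeff{\calF}$, then by definition $F$ is efficiently \omegadef{} over $\calF\cup\{\EQ\}$, so there is a finite subset $S\subseteq\calF\cup\{\EQ\}$ together with a Turing machine $\calM$ that, on input $\epsilon>0$, outputs in time $\poly(\log\epsilon^{-1})$ a \ppformula{} $\psi$ over $S$ with $F_\psi$ of the same arity as $F$ and $\|F_\psi-F\|_\infty<\epsilon$. I would then set $S_F:=S\setminus\{\EQ\}$. Because $S\subseteq\calF\cup\{\EQ\}$, the set $S_F$ is a finite subset of $\calF$, and $S\subseteq S_F\cup\{\EQ\}$. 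Hence every \ppformula{} over $S$ is in particular a \ppformula{} over $S_F\cup\{\EQ\}$, so the very same machine $\calM$ (using the finite set $S$, which is contained in $S_F\cup\{\EQ\}$) witnesses that $F$ is efficiently \omegadef{} over $S_F\cup\{\EQ\}$; that is, $F\in\fclonelimeff{S_F}$, as required.

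For the parenthetical claim I would argue identically, replacing the machine $\calM$ by the quantifier ``for every $\epsilon>0$ there exists a \ppformula{} $\Fhat$ over $S$ with $\|\Fhat-F\|_\infty<\epsilon$'': each such $\Fhat$ is again a \ppformula{} over $S_F\cup\{\EQ\}$, so the same family of approximants establishes $F\in\fclonelim{S_F}$. The one thing I would be careful about — the ``main obstacle'', such as it is — is that the definition hands us a finite set living inside $\calF\cup\{\EQ\}$ rather than inside $\calF$; discarding $\EQ$ repairs this at no cost, since $\EQ$ is automatically re-adjoined when one forms $\fclonelimeff{\cdot}$ or $\fclonelim{\cdot}$, so the approximating formulas remain legitimate over $S_F\cup\{\EQ\}$.
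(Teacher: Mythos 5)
Your proposal is correct and matches the paper's treatment: the paper states Observation~\ref{obs:26July} as immediate from the definition of $\fclonelimeff{\calF}$, and your argument is exactly that unwinding, with the only non-trivial point (that the witnessing finite set lives in $\calF\cup\{\EQ\}$ rather than $\calF$, repaired by discarding $\EQ$ since it is re-adjoined in forming $\fclonelimeff{S_F}$) handled correctly.
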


The following lemma is an analogue of Lemma~\ref{lem:fclonelimTransitive} .
\begin{lemma}
\label{lem:fclonelimeffTransitive}
If $G\in\fclonelimeff\calF$ then $\fclonelimeff{\calF,G}=\fclonelimeff{\calF}$.
\end{lemma}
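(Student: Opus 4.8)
The plan is to mimic the proof of Lemma~\ref{lem:fclonelimTransitive}, but to make every ``there exists'' into an explicit, quantitatively controlled, algorithmic step, so that the composed approximation can be computed by a Turing machine running in time polynomial in $\log\epsilon^{-1}$. First I would set $\calF'=\calF\cup\{\EQ\}$, take an $a$-ary $H\in\fclonelimeff{\calF,G}$, and let $S_H\subseteq\calF'\cup\{G\}$ together with its machine $\calM_{H,S_H}$ witness this: on input $\epsilon/2$ it outputs a \ppformula{}~$\psi$ over $S_H$ with $\|F_\psi-H\|_\infty<\epsilon/2$. Let $S_G\subseteq\calF'$ and $\calM_{G,S_G}$ witness $G\in\fclonelimeff{\calF'}$, and set $S=(S_G\cup S_H)\setminus\{G\}\subseteq\calF'$. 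The target machine $\calM_{H,S}$ will, on input $\epsilon$: (i) run $\calM_{H,S_H}$ on $\epsilon/2$ to get $\psi$; (ii) from the syntactic shape of $\psi$ compute a suitable $\delta>0$; (iii) run $\calM_{G,S_G}$ on $\delta$ to get a \ppformula{}~$\chi$ over $S_G$ with $\|F_\chi-G\|_\infty<\delta$; and (iv) output $\psi[G{:=}\chi]$, i.e.\ the formula over $S$ obtained by substituting $\chi$ for each occurrence of $G$ in $\psi$ (renaming bound variables as in the proof of Lemma~\ref{lem:fcloneTransitive} to flatten nested summations into a single \ppformula{}).

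The correctness of the output value is exactly the estimate already used for Lemma~\ref{lem:fclonelimTransitive}: by the continuity-of-pps-operators claim, if $\delta$ is chosen small enough then $\|F_{\psi[G:=\chi]}-F_\psi\|_\infty<\epsilon/2$, so by the triangle inequality $\|F_{\psi[G:=\chi]}-H\|_\infty<\epsilon$; and by Lemma~\ref{lem:fcloneTransitive}, $F_{\psi[G:=\chi]}\in\fclone{S}$ since $\chi\in\fclone{S_G}\subseteq\fclone{S}$ and $F_\psi\in\fclone{S,G}$. What is new, and what must be pinned down here (the statement in the proof of Lemma~\ref{lem:fclonelimTransitive} explicitly promises ``this claim will be explicitly quantified in the proof of Lemma~\ref{lem:fclonelimeffTransitive}''), is an effective version of that continuity claim: given $\psi$ and $\epsilon$, exhibit $\delta=\delta(\psi,\epsilon)$ that works, with $\log\delta^{-1}$ polynomially bounded in the size of $\psi$ and in $\log\epsilon^{-1}$. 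The natural route is a telescoping / hybrid argument: replace the occurrences of $G$ in $\psi$ one at a time by $\chi$, bound the change at each step, and sum. If $\psi$ has $s$ occurrences of $G$ (arity $g$), $r$ other atomic factors, and $m$ bound variables over domain $D$, then replacing a single occurrence changes a summand by at most $\delta$ times the product of (absolute values of) the remaining factors; bounding each factor by the maximum $M$ of $|F|$ over the finitely many $F\in S_H$, one gets $\|F_{\psi[G:=\chi \text{ one step}]}-F_\psi\|_\infty \le |D|^{m}\,M^{r+s-1}\,\delta$, hence overall $\|F_{\psi[G:=\chi]}-F_\psi\|_\infty \le s\,|D|^{m}\,M^{r+s-1}\,\delta$. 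So it suffices to take $\delta \le \epsilon\,/\,(2\,s\,|D|^{m}\,M^{r+s-1})$; since $s$, $r$, $m$ are read off from $\psi$ and $M$ is a fixed constant depending only on the finite set $S_H$, $\log\delta^{-1}$ is $O(\mathrm{size}(\psi)) + \log\epsilon^{-1} + O(1)$.

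With that quantification in hand the running-time bookkeeping is routine and I would only sketch it: $\calM_{H,S_H}$ on $\epsilon/2$ runs in time $\poly(\log\epsilon^{-1})$, so $\mathrm{size}(\psi)=\poly(\log\epsilon^{-1})$ and $\log\delta^{-1}=\poly(\log\epsilon^{-1})$; $\calM_{G,S_G}$ on $\delta$ then runs in time $\poly(\log\delta^{-1})=\poly(\log\epsilon^{-1})$ producing $\chi$ of size $\poly(\log\epsilon^{-1})$; and forming $\psi[G{:=}\chi]$ with the bound-variable renaming is a linear-time syntactic substitution. Thus $\calM_{H,S}$ runs in time $\poly(\log\epsilon^{-1})$, $S\subseteq\calF'$ is finite, and we conclude $H\in\fclonelimeff{\calF}$; the reverse inclusion $\fclonelimeff{\calF}\subseteq\fclonelimeff{\calF,G}$ is trivial. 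The one real obstacle is getting the hybrid bound cleanly stated in the presence of repeated variables and shared bound variables between $\chi$ and $\psi$ --- one must be careful that the per-step factor bound $|D|^{m}M^{r+s-1}$ is genuinely uniform over all partial substitutions and all free assignments $\vecx$, which is why fixing $M=\max_{F\in S_H}\|F\|_\infty$ up front (legitimate because $S_H$ is finite) is the key simplification; everything else is the same argument as Lemma~\ref{lem:fclonelimTransitive} with the quantifiers made explicit.
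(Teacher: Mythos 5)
Your proposal follows essentially the same route as the paper's proof: the same machine composition (run $\calM_{H,S_H}$ on $\epsilon/2$, read $\delta$ off the syntactic shape of the resulting $\psi$, run $\calM_{G,S_G}$ on $\delta$, substitute and flatten via Lemma~\ref{lem:fcloneTransitive}), the same triangle inequality, and the same running-time bookkeeping; the paper merely derives the perturbation bound by expanding $\prod_j(F_{\psi_j}+\delta_{j,\vecy})-\prod_j F_{\psi_j}$ over nonempty subsets $T\subseteq[s]$ (giving $2^s|D|^m C\delta$ with $C\le\varDelta^{r+s}$), whereas you telescope one occurrence of $G$ at a time (giving $s\,|D|^m M^{r+s-1}\delta$). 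The only slip is the one you flag yourself: in the intermediate hybrids the already-substituted factors are bounded by $M+\delta$ rather than $M$, so the per-step bound should use $(M+1)^{r+s-1}$ --- harmless since $\delta<1$ can be arranged and the asymptotics of $\log\delta^{-1}$ are unchanged.
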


\begin{proof} Let $\calF' = \calF \cup \{\EQ\}$.
Suppose $H$ is an $a$-ary function in $\fclonelimeff{\calF',G}$.
Our goal is to
specify a finite subset~$S$ of $\calF'$ and to
construct a TM
$M_{H,S}$
with the following property:
on input $\epsilon>0$, $M_{H,S}$  should compute
an $a$-ary \ppformula~$\phi$ over $S$
such that
  $\| F_{\phi}-H\|_\infty<\epsilon$.
The running time of $M_{H,S}$  should be at most a polynomial in  $\log\epsilon^{-1}$.

Let $S_H$ be the finite subset of $\calF'\cup\{G\}$ from the efficient \omegadefinition{} of~$H$ over $\calF'\cup\{G\}$.
 Given
an input $\epsilon/2$, the TM $M_{H,S_H}$
computes an $a$-ary \ppformula~$\psi$ over $S_H$
such that
$\| F_\psi-H\|_\infty<\epsilon/2$. Write $\psi$ as
in Equation~(\ref{eqone}) so $F_\psi$ is written as in Equation~(\ref{eqtwo}).
Suppose that,
for $j\in[s]$ and $\vecy\in\B^m$,
$\delta_{j,\vecy}(\vecx)$ is a function of $\vecx$.
Consider the expression
\begin{align*}\Upsilon(\vecx) &= \sum_{\vecy\in\dom^m}\left(\prod_{i=1}^{r}F_{\phi_i}(\vecx,\vecy)\right)\left(\prod_{j=1}^s (F_{\psi_j}(\vecx,\vecy) + \delta_{j,\vecy}(\vecx))\right)\\
&\qquad- \sum_{\vecy\in\dom^m}\left(\prod_{i=1}^{r}F_{\phi_i}(\vecx,\vecy)\right)\left(\prod_{j=1}^sF_{\psi_j}(\vecx,\vecy)\right),
\end{align*}
which can be expanded as
$$\Upsilon(\vecx) = \sum_{\vecy\in\dom^m}
\sum_{\emptyset \subset T\subseteq [s]} C_{\vecy,T}(\vecx) \prod_{j\in T} \delta_{j,\vecy}(\vecx),$$
where
$$C_{\vecy,T}(\vecx) =   \prod_{i=1}^{r}F_{\phi_i}(\vecx,\vecy)
\prod_{j \in [s]\setminus T} F_{\psi_j}(\vecx,\vecy).$$
Let $C = \max_{\vecx,\vecy,T} |C_{\vecy,T}(\vecx)|$ and let $\delta= \epsilon 2^{-(s+1)} |D|^{-m} C^{-1}<1$.

Now
let $S_G$ be the finite subset of $\calF'$ used to show that $G$ is
efficiently \omegadefinable{} over $\calF'$.
Given the input $\delta$, the TM $M_{G,S_G}$
computes a \ppformula{} $\widehat\psi$ over $S_G$ representing a function $F_{\widehat\psi}$
with the same arity as~$G$
such that $\|F_{\widehat\psi} -  G\|_\infty < \delta$.
Since each $\psi_j$ is an atomic $G$-formula in the variables~$V'$,
we may appropriately name the variables of $\widehat\psi$ to obtain
a \ppformula{} $\widehat\psi_j$ over $S_G$
such that  $\|F_{\widehat\psi_j} - F_{\psi_j}\|_\infty < \delta$.

For $\vecy\in\dom^m$, let $\delta_{j,\vecy} (\vecx)= F_{\widehat\psi_j}(\vecx,\vecy) - F_{\psi_j}(\vecx,\vecy)$
and note that $|\delta_{j,\vecy}(\vecx)| \leq \delta$.
Let $S = S_G \cup S_H \setminus\{G\} \subseteq \calF'$.
 Let $\psi'$ be the formula over $S$ formed from~$\psi$ by substituting each occurrence
of $\psi_j$ with $\widehat\psi_j$.
Let $\phi$ be the \ppformula{} over $S$ for
the function $F_{\psi'}$
which is constructed as in the proof of Lemma~\ref{lem:fcloneTransitive}.
From the calculation above,
\begin{align*}\|F_\phi - F_{\psi}\|_\infty &=
\|F_{\psi'} - F_{\psi}\|_\infty \\
&= \max_{\vecx} |F_{\psi'}(\vecx) - F_{\psi}(\vecx)|\\
&=\max_{\vecx} |\Upsilon(\vecx)|\\
&\leq 2^s |D|^m C \delta = \epsilon/2.
\end{align*}

Note that $$\|F_\phi - H\|_\infty \leq
\|F_\phi - F_\psi\|_\infty
+
\|F_\psi-H\|_\infty
< \epsilon.$$
Thus, the formula $\phi$ is an appropriate output for our TM $M_{H,S}$.

Finally, let us check how long the computation takes.
The running time of $M_{H,S_H}$ is at most $\poly(\log\epsilon^{-1})$.
Since this machine outputs the formula~$\psi$, we conclude that $m$ and $r$ and $s$
are bounded from above by polynomials in $\log\epsilon^{-1}$.
Let $\varDelta$ be the
ceiling of the maximum absolute value of any function in $S_H$.
Note that $C \leq \varDelta^{r+s}$.
The running time of $M_{G,S_G}$ is at most $\poly(\log(\delta^{-1}))$, which
is at most a polynomial in $m + s + \log(C) + \log(\epsilon^{-1})$ which
is at most a polynomial in $\log(\epsilon^{-1})$.
Finally, the direct manipulation of the formulas that we did
(renaming variables from $\widehat\psi$ to obtain
$\widehat\psi_j$ and producing the \ppformula{} $\phi$ from $\psi$ and the $\widehat\psi_j$ formulas)
takes time at most polynomial in  the size of $\psi$ and~$\widehat\psi$, which is at most a polynomial in
$\log(\epsilon^{-1})$.
\end{proof}

Lemma~\ref{lem:fclonelimeffTransitive} may have wider applications in the study
of approximate counting problems.  Often, approximation-preserving reductions
between counting problems are complicated to describe and difficult to analyse,
owing to the need to track error estimates.  Lemma~\ref{lem:fclonelimeffTransitive}
suggests breaking the reduction into smaller steps, and analysing each of them
independently.  This assumes, of course, that the reductions are \omegadefinable,
but that often seems to be the case in practice.

\section{Relational clones and nonnegative functions}
\label{sec:relclones}

A function $F\in \codomainall$ is \emph{Boolean}\footnote{Note that ``Boolean'' applies to the codomain here, 
not the domain. As noted earlier, all of the functions that we consider have Boolean domains. 
This usage of ``Boolean function'' is unfortunate, but is well established in the literature. 
When the range is not a subset of $\{0,1\}$ we emphasise this fact by referring to the function as a ``pseudo-Boolean'' function.}
if its range
is a subset of $\B$.
Then $F$ encodes a relation $R$ as follows: $\vecx$ is in the relation~$R$ iff $F(\vecx)=1$.
We will not distinguish between relations and the Boolean functions that define them.
Suppose that $\calR \subseteq \codomainall$ is a set of relations/Boolean functions.
A pp-formula over $\calR$ is an existentially quantified product of
atomic formulas (this is called an $\exists \mathrm{CNF}(\calR)$-formula in \cite{CKZ}).
More precisely, a pp-formula~$\psi$ over~$\calR$ in variables $V'=\{v_1,\ldots,v_{n+m}\}$
has the form
$$
\psi=\exists\, v_{n+1},\ldots,v_{n+m}\,\bigwedge_{j=1}^s\phi_j,
$$
where $\phi_j$ are all atomic formulas over~$\calR$ in the variables~$V'$.
As before, the variables $V=\{v_1,\ldots,v_n\}$ are called ``free'', and the others, $V'\setminus V$, are called ``bound''.
The formula~$\psi$ specifies a Boolean function $R_\psi:\dom^n\to\B$
in the following way.
$R_\psi(\vecx) = 1$ if there is a vector $\vecy\in\dom^m$
such that  $ \bigwedge_{j=1}^s R_{\phi_j}(\vecx,\vecy)$ evaluates to ``$1$'',
where $\vecx$ and $\vecy$ are assignments
$
\vecx:\{v_1,\ldots,v_n\}\to\dom$ and
 $\vecy:\{v_{n+1},\ldots,v_{n+m}\}\to\dom$;
$R_\psi(\vecx)=0$ otherwise.
We call the pp-formula an \emph{implementation} of  $R_\psi$.

A \emph{relational clone} (often called a ``co-clone'') is a set of  relations
containing the equality relation and closed under finite Cartesian products, projections, and identification of variables.
A \emph{basis} \cite{CKZ} for the relational clone~$I$ is a set~$\calR$ of Boolean relations
such that the relations in~$I$ are exactly the relations that  can be implemented with a pp-formula over~$\calR$.
Every relational clone has such a basis.

For every set $\calR$ of Boolean relations, let
$\rclone \calR$ denote the set of relations that can be represented by a pp-formula over~$\calR\cup\{\EQ\}$.
It is well-known that
if $R\in \rclone\calR$ then $\rclone{\calR \cup \{R\}} = \rclone{\calR}$ (This can be proved similarly
to the proof of Lemma~\ref{lem:fcloneTransitive}.)
Thus, $\rclone\calR$ is in fact a relational clone with basis~$\calR$.

 A basis~$\calR$ for a relational clone $\rclone\calR$ is called a ``plain basis''
 \cite[Definition 1]{CKZ} if every member of $\rclone\calR$ is definable by a CNF$(\calR)$-formula (a pp-formula over~$\calR$ with no $\exists$).

Pseudo-Boolean functions~\cite{BorosHammer} are defined on the Boolean domain $D=\B$, and have codomain $\ring=\R$, the real numbers. For $n\in\Nat$, denote by $\codomainreal_n$ the set of all functions $\B^n\to\R$, and denote the set of functions of all arities by $\codomainreal=\codomainreal_0\cup\codomainreal_1\cup\codomainreal_2\cup\cdots$.
Note that any tuple $\vecx\in\B^n$ is the indicator function of a subset of $[n]$. 
We write $|\vecx|$ for the cardinality of this set, i.e. $|\vecx|=\sum_{j=1}^n x_j$.

For most of this paper, we restrict attention to the codomain $\ring=\Rnonneg$ of nonnegative real numbers. Then $\codomainnneg_n$ is the set given by replacing $\R$ by $\Rnonneg$ in the definition of $\codomainreal_n$, and then $\codomainnneg$ is defined analogously to $\codomainreal$.
We will also need to consider the \emph{permissive} functions in $\codomainnneg$. These are functions  which are positive everywhere, so the codomain $\ring=\Rpos$, the positive real numbers. Thus $\codomainpos_n$ and $\codomainpos$ are given by replacing $\R$ by $\Rpos$ in the definitions of $\codomainreal_n$ and $\codomainreal$.

The advantage of working with the Boolean domain is (i)~it has a well-developed theory of relational clones, and (ii)~the concept of a log-supermodular function exists (see~\S4). As explained in the introduction, the advantage of working with nonnegative real numbers is that we disallow cancellation, and potentially obtain a more nuanced expressibility/complexity landscape.

Given a function $F\in \codomainnneg$, let $R_F$ be
the function corresponding to the relation underlying~$F$.
That is, $R_F(\vecx)=0$ if $F(\vecx)=0$ and
$R_F(\vecx)=1$ if $F(\vecx)>0$.
The following straightforward lemma will be useful.
\begin{lemma}\label{lem:underlyingrelation}
Suppose $\calF \subseteq \codomainnneg$.
Then $$ \rclone{\{R_F \mid F\in \calF\}}
 =    \{R_F \mid F \in \fclone{\calF}\}.$$
\end{lemma}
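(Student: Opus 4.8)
The plan is to prove the two inclusions separately. In both directions the translation is purely syntactic --- a pps-formula becomes a pp-formula, or vice versa, by passing to supports --- and the only substantive ingredient is the following elementary observation about nonnegative reals, which is precisely what excludes the cancellation that would otherwise break the correspondence: a finite sum of nonnegative reals is positive iff at least one summand is positive, and a finite product of nonnegative reals is positive iff every factor is positive. I would also record two trivial facts that let the equality-atoms and the scopes carry over verbatim, so that the definition of $\rclone{\cdot}$ and that of $\fclone{\cdot}$ line up: first, $R_{\EQ}=\EQ$, since $\EQ$ is already $\{0,1\}$-valued; and second, the support of an atomic $\calF$-formula $G(v_{i_1},\ldots,v_{i_a})$ is the atomic formula $R_G(v_{i_1},\ldots,v_{i_a})$ over $\{R_G\mid G\in\calF\}$.

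For the inclusion ``$\supseteq$'', I would take $F\in\fclone\calF$ and fix a pps-formula $F(\vecx)=\sum_{\vecy\in\dom^m}\prod_{j=1}^s F_{\phi_j}(\vecx,\vecy)$ over $\calF\cup\{\EQ\}$, which exists by the definition of $\fclone\calF$ (and is finite, so it uses only finitely many functions from $\calF$). By the observation above, $F(\vecx)>0$ iff there is an assignment $\vecy$ such that $F_{\phi_j}(\vecx,\vecy)>0$ for every $j$, equivalently such that the atomic relation $R_{\phi_j}$ holds at $(\vecx,\vecy)$ for every $j$. Hence $R_F$ is represented by the pp-formula $\exists\,\vecy\ \bigwedge_{j=1}^s R_{\phi_j}$ over $\{R_G\mid G\in\calF\}\cup\{\EQ\}$, and therefore $R_F\in\rclone{\{R_G\mid G\in\calF\}}$.

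For ``$\subseteq$'', I would run the argument in reverse. Take $R\in\rclone{\{R_F\mid F\in\calF\}}$ and fix a pp-formula $\exists\,v_{n+1}\ldots v_{n+m}\ \bigwedge_{j=1}^s\phi_j$ over $\{R_F\mid F\in\calF\}\cup\{\EQ\}$ representing $R$. Each $\phi_j$ is either an $\EQ$-atom or of the form $R_{G_j}(\cdot)$ with $G_j\in\calF$; replace the latter by $G_j$ applied to the same scope --- choosing an arbitrary preimage $G_j\in\calF$ when several functions share the same support --- and leave the $\EQ$-atoms unchanged, obtaining atomic $(\calF\cup\{\EQ\})$-formulas $\phi'_j$. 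Let $H$ be the function defined by the pps-formula $\sum_{v_{n+1}\ldots v_{n+m}}\prod_{j=1}^s\phi'_j$; then $H\in\fclone\calF$, and $H$ is a sum of products of nonnegative functions, so it indeed lies in the ambient class $\codomainnneg$. Applying the observation once more, $H(\vecx)>0$ iff some $\vecy$ makes every $F_{\phi'_j}(\vecx,\vecy)>0$, iff some $\vecy$ makes every $R_{\phi_j}$ hold, iff $R(\vecx)=1$. Thus $R_H=R$, so $R\in\{R_F\mid F\in\fclone\calF\}$.

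I do not expect a genuine obstacle here: the lemma is routine, consistent with the ``straightforward'' in its statement. The only things needing a modicum of care are the bookkeeping that the two translations (function $\mapsto$ support, relation $\mapsto$ chosen preimage function) are mutually compatible on $\EQ$ and on scopes, so that $\EQ$ is available on both sides; and the degenerate cases ($s=0$, arity $0$, or variables occurring in no atom), which cause no trouble since an empty product is $1$ with support the all-ones relation, matching an empty conjunction, and since $|\dom|>0$ means summing or projecting over a vacuous bound variable does not affect supports.
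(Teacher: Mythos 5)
Your proof is correct and follows essentially the same route as the paper: both directions are the purely syntactic translation between pp-formulas and pps-formulas obtained by passing between functions and their supports, with nonnegativity ruling out cancellation. The paper spells out only the direction from $\rclone{\{R_F \mid F\in\calF\}}$ into $\{R_F\mid F\in\fclone\calF\}$ and says the converse is obtained by reversing the construction; you have simply made both directions, and the underlying positivity observation, explicit.
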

\begin{proof}
Let $\calF$ be a subset of $\codomainnneg$.
First, we must show that, for any
$R \in  \rclone{\{R_F \mid F\in \calF\}}$, $R$ is in
$\{R_F \mid F \in \fclone{\calF}\}$.

Let $\psi$ be the pp-formula over
$\{R_F \mid F\in \calF\}\cup\{\EQ\}$ that is used to represent~$R$.
Write $\psi$ as
$$
\psi=\exists\,v_{n+1},\ldots,v_{n+m}\,\bigwedge_{j=1}^s
R_{F_j}(v_{i(j,1)},\ldots,v_{i(j,a_j)}),
$$
where  $F_j$ is an arity-$a_j$ function in~$\calF\cup\{\EQ\}$, and the index
function $i(\cdot,\cdot)$
picks out an index in the range $[1,n+m]$,
and hence a variable from
$V'=\{v_1,\ldots,v_{n+m}\}$. Let $\psi'$ be the \ppformula{} over
$\calF\cup\{\EQ\}$ given by
 $$
\psi'=\sum_{v_{n+1},\ldots,v_{n+m}}\,\prod_{j=1}^s
 F_j(v_{i(j,1)},\ldots,v_{i(j,a_j)}).
$$
Let $F' = F_{\psi'}$. Note that $F'\in \fclone{\calF}$ and that $R_{F'} = R$.

By reversing this construction, we can show that, for any
$R \in  \{R_F \mid F \in \fclone{\calF}\}$, $R$ is in
$\rclone{\{R_F \mid F\in \calF\}}$.
\end{proof}

\section{Log-supermodular functions}\label{sec:lsm}

A function $F \in \codomainnneg_n$
is \emph{log-supermodular} (lsm) if  $F(\vecxuy)F(\vecxny)\geq F(\vecx)F(\vecy)$ for all $\vecx,\vecy\in\B^n$.  The terminology is justified by the observation that $F\in\codomainpos_n$ is lsm if and only if $f=\log F$ is \emph{supermodular}. We denote by $\LSM\subset\codomainnneg$ the class of all lsm functions.
The second part of our main result (Theorem~\ref{thm:main}) says that, in terms of expressivity, everything of interest takes place within the class~$\LSM$.
Consequently, in \S\ref{sec:cloneLSMk}, we will investigate the internal structure of $\LSM$.

Note that $\codomainbool_0,\codomainbool_1\subset\LSM$, since log-supermodularity is trivial for nullary or unary functions, and hence the class $\LSM$ is conservative.
And it fits naturally into our study of expressibility
because of the following closure property:
functions that are \omegadef{} from lsm functions are lsm.
The non-trivial step in showing this is encapsulated in the following lemma.
It is a special case of the Ahlswede-Daykin ``four functions'' theorem~\cite{AD}. However~\cite{AD} is a much stronger result than is required, so we give an easier proof, using an argument similar to the base case of the induction in~\cite{AD}.

\begin{lemma}\label{lem:closedUnderProjection}
If $G \in \codomainnneg_{n+m}$, let
$G' \in \codomainnneg_n$ be defined by $G'(\vecx)=\sum_{\vecz\in\B^m}G(\vecx,\vecz)$.
Then $G\in\LSM$ implies $G'\in\LSM$.
\end{lemma}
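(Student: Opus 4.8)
The plan is to reduce to the case $m=1$ by a short induction, and then to handle the single summed coordinate by the elementary argument that forms the base case of the Ahlswede--Daykin induction. For the reduction, induct on $m$: the case $m=0$ is trivial, and for the inductive step split off the last summed coordinate, writing $\vecz=(\vecz',z)$ with $\vecz'\in\B^{m-1}$ and $z\in\B$, and set $H(\vecx,\vecz')=\sum_{z\in\B}G(\vecx,\vecz',z)$. Granting the $m=1$ case, $H\in\LSM$ (it is obtained from the $(n+m)$-ary lsm function $G$ by summing out one coordinate, regarding $(\vecx,\vecz')$ as the remaining input); then $G'(\vecx)=\sum_{\vecz'\in\B^{m-1}}H(\vecx,\vecz')$, and the induction hypothesis applied to $H$ gives $G'\in\LSM$.

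It remains to treat $m=1$. Fix $\vecx,\vecy\in\B^n$ and abbreviate, for $i\in\B$, $a_i=G(\vecxuy,i)$, $b_i=G(\vecxny,i)$, $c_i=G(\vecx,i)$, $d_i=G(\vecy,i)$, so that the desired inequality $G'(\vecxuy)G'(\vecxny)\ge G'(\vecx)G'(\vecy)$ reads $(a_0+a_1)(b_0+b_1)\ge(c_0+c_1)(d_0+d_1)$. Since join and meet in $\B^{n+1}$ act coordinatewise, i.e.\ $(\vecx,i)\vee(\vecy,j)=(\vecxuy,\,i\vee j)$ and $(\vecx,i)\wedge(\vecy,j)=(\vecxny,\,i\wedge j)$, applying log-supermodularity of $G$ to the four pairs $((\vecx,i),(\vecy,j))$ with $(i,j)\in\B^2$ yields
$$a_0b_0\ge c_0d_0,\qquad a_1b_1\ge c_1d_1,\qquad a_1b_0\ge c_0d_1,\qquad a_1b_0\ge c_1d_0.$$
The first two of these already give $a_0b_0+a_1b_1\ge c_0d_0+c_1d_1$, so after expanding the target inequality it suffices to prove the ``cross term'' bound $a_0b_1+a_1b_0\ge c_0d_1+c_1d_0$.

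For the cross-term bound I would distinguish whether $a_1b_0=0$. If $a_1b_0=0$, the last two displayed inequalities force $c_0d_1=c_1d_0=0$ and there is nothing left to prove. Otherwise, multiply the first two displayed inequalities to obtain $(a_0b_1)(a_1b_0)\ge(c_0d_1)(c_1d_0)$; writing $Q=a_1b_0>0$, $R=c_0d_1$, $S=c_1d_0$, we have $Q\ge R$, $Q\ge S$ and $a_0b_1\cdot Q\ge RS$, hence $a_0b_1\ge RS/Q$ and
$$\bigl(a_0b_1+a_1b_0\bigr)-\bigl(c_0d_1+c_1d_0\bigr)\;\ge\;\frac{RS}{Q}+Q-R-S\;=\;\frac{(Q-R)(Q-S)}{Q}\;\ge\;0.$$
I expect this cross-term step---specifically the device of multiplying two of the four inequalities and then invoking $(Q-R)(Q-S)\ge0$---to be the only genuinely non-routine part; the reduction to $m=1$ and the extraction of the four inequalities are bookkeeping. (This is exactly the argument alluded to in the text as ``the base case of the induction in~\cite{AD}''.)
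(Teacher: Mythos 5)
Your proof is correct and follows essentially the same route as the paper's: reduce to summing out a single coordinate, extract the four log-supermodularity inequalities, dispose of the diagonal terms, and settle the cross-term inequality by the case split on whether $\gamma_1\delta_0$ (your $Q=a_1b_0$) vanishes, finishing with the factorisation $(Q-R)(Q-S)\ge 0$. The only cosmetic difference is that you lower-bound the cross term $a_0b_1$ by multiplying two of the four inequalities rather than dividing, which is the same computation.
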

\begin{proof}
By symmetry and induction, it suffices to consider summation on the last variable.
Thus, let $(\vecx,x_{n+1}),\hspace{1pt}(\vecy,y_{n+1})\in\B^{n+1}$, and let
\[\alpha_z=G(\vecx,z),\ \beta_z=G(\vecy,z),\ \gamma_z=G(\vecxuy,z),\  \delta_z=G(\vecxny,z)\quad(z\in\B).\]
Then we must show that $G\in\LSM$ implies
\begin{equation*}
(\alpha_0+\alpha_1)(\beta_0+\beta_1) \,\leq\, (\gamma_0+\gamma_1)(\delta_0+\delta_1),
\end{equation*}
which expands to
\begin{equation}
\alpha_0\beta_0+\alpha_1\beta_0+\alpha_0\beta_1+\alpha_1\beta_1\,\leq\, \gamma_0\delta_0+\gamma_0\delta_1+\gamma_1 \delta_0+\gamma_1\delta_1.\label{sumlsm:eq0}
\end{equation}

Since $G\in\LSM$, we have the following four inequalities,\\[-0.5\baselineskip]
\begin{minipage}{0.25\linewidth}
\begin{equation}
\alpha_0\beta_0 \leq \gamma_0\delta_0 \label{sumlsm:eq1},
\end{equation}
\end{minipage}
\begin{minipage}{0.25\linewidth}
\begin{equation}
\alpha_0\beta_1 \leq \gamma_1\delta_0  \label{sumlsm:eq2},
\end{equation}
\end{minipage}
\begin{minipage}{0.25\linewidth}
\begin{equation}
\alpha_1\beta_0 \leq \gamma_1\delta_0 \label{sumlsm:eq3},
\end{equation}
\end{minipage}
\begin{minipage}{0.25\linewidth}
\begin{equation}
\alpha_1\beta_1 \leq \gamma_1\delta_1  \label{sumlsm:eq4}.
\end{equation}
\end{minipage}\vspace{6pt}

We will complete the proof by showing that \eqref{sumlsm:eq1} to \eqref{sumlsm:eq4} imply \eqref{sumlsm:eq0} for arbitrary nonnegative real numbers $\alpha_z,\beta_z,\gamma_z,\delta_z$ $(z\in\B)$.

Using \eqref{sumlsm:eq1} and \eqref{sumlsm:eq4}, it follows that \eqref{sumlsm:eq0} is implied by
\begin{equation}\label{sumlsm:eq5}
\alpha_1\beta_0+\alpha_0\beta_1\,\leq\, \gamma_0\delta_1+\gamma_1\delta_0.
\end{equation}
Observe that $\gamma_1\delta_0=0$ implies \eqref{sumlsm:eq5}, since the left side is zero by \eqref{sumlsm:eq2} and \eqref{sumlsm:eq3}. Thus we may assume $\gamma_1\delta_0>0$.

Now, using \eqref{sumlsm:eq1} and \eqref{sumlsm:eq4} again, \eqref{sumlsm:eq5} is implied by
\begin{equation}\label{sumlsm:eq6}
\alpha_1\beta_0+\alpha_0\beta_1\,\leq\, \frac{(\alpha_0\beta_0)(\alpha_1\beta_1)}{\delta_0\gamma_1}+\gamma_1\delta_0\,=\, \frac{(\alpha_1\beta_0)(\alpha_0\beta_1)}{\gamma_1\delta_0}+\gamma_1\delta_0.
\end{equation}
Now\nobreakspace \textup {(\ref {sumlsm:eq6})} can be rewritten as
\begin{equation*}
0\,\leq\,(\gamma_1\delta_0-\alpha_1\beta_0)(\gamma_1\delta_0-\alpha_0\beta_1),
\end{equation*}
which is implied by \eqref{sumlsm:eq2} and \eqref{sumlsm:eq3}.
\end{proof}

\begin{lemma}\label{lem:lsmClosure}
If $\calF\subseteq\LSM$ then $\fclonelim\calF\subseteq\LSM$.
\end{lemma}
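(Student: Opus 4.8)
The plan is to reduce the statement to elementary closure properties of $\LSM$ together with Lemma~\ref{lem:closedUnderProjection}, which supplies the one genuinely nontrivial step. First I would dispose of limits. By the definition of $\mathrm{pps}_\omega$-definability (see also Observation~\ref{obs:26July}), for $F\in\fclonelim\calF$ there is a finite $S_F\subseteq\calF\cup\{\EQ\}$ and, for each $\epsilon>0$, a function $\Fhat$ that is \ppdef{} over $S_F$ with $\|\Fhat-F\|_\infty<\epsilon$; moreover the limit value $F$ lies in $\Rnonneg$, so $F\in\codomainnneg$. The lsm condition $F(\vecxuy)F(\vecxny)\ge F(\vecx)F(\vecy)$ is closed, since multiplication is continuous and the inequality ``$\ge$'' passes to pointwise limits. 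Hence it suffices to prove the finitary statement $\fclone\calF\subseteq\LSM$ whenever $\calF\subseteq\LSM$: every function \ppdef{} over $\calF\cup\{\EQ\}$ is lsm.

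Next I would check the base cases and building blocks. The equality function $\EQ$ is lsm: its support $\{(0,0),(1,1)\}$ is a sublattice of $\B^2$, so for $\vecx,\vecy\in\B^2$ the product $\EQ(\vecx)\EQ(\vecy)$ is nonzero only when both lie on the diagonal, in which case $\vecxuy$ and $\vecxny$ are also on the diagonal and both sides of the lsm inequality equal $1$. For an atomic formula $\phi=G(v_{i_1},\dots,v_{i_a})$ over the full variable set $V'=\{v_1,\dots,v_{n+m}\}$ with $G\in\LSM$, the induced function $F_\phi$ on $\B^{n+m}$ is the precomposition of $G$ with the scope map $\vecz\mapsto(z_{i_1},\dots,z_{i_a})$; this map commutes with the coordinatewise operations $\vee$ and $\wedge$ (because $(\vecx\vee\vecy)_{i_k}=x_{i_k}\vee y_{i_k}$, and likewise for $\wedge$), so $F_\phi(\vecxuy)F_\phi(\vecxny)\ge F_\phi(\vecx)F_\phi(\vecy)$, i.e.\ $F_\phi\in\LSM$. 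This handles dummy arguments, permutations, and identification of variables uniformly.

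Then I would assemble the \ppformula. Log-supermodularity is preserved under pointwise products of functions of the same arity, since $[F(\vecxuy)F(\vecxny)]\,[H(\vecxuy)H(\vecxny)]\ge[F(\vecx)F(\vecy)]\,[H(\vecx)H(\vecy)]$ when $F,H\in\LSM$. Hence the product $\prod_{j}F_{\phi_j}$ appearing in~\eqref{eq:ppfunctionDef} is a nonnegative lsm function on $\B^{n+m}$. Finally, summing out the $m$ bound variables preserves lsm by Lemma~\ref{lem:closedUnderProjection} (applied once to the block of bound variables, or inductively one variable at a time). Therefore $F_\psi\in\LSM$ for every \ppformula{} $\psi$ over $\calF\cup\{\EQ\}$, which establishes $\fclone\calF\subseteq\LSM$ and, by the first paragraph, $\fclonelim\calF\subseteq\LSM$.

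The substantive content here is Lemma~\ref{lem:closedUnderProjection}, already proved; the remaining steps are routine verifications. The only point requiring a little care is the treatment of atomic formulas with repeated variables: one must note that identifying arguments of an lsm function again yields an lsm function, which is exactly the observation that the scope map is a lattice homomorphism for the coordinatewise order. (An alternative, if one prefers to keep atomic formulas variable-distinct, is to introduce fresh bound variables linked by $\EQ$ and then appeal to Lemma~\ref{lem:closedUnderProjection} as above, using that $\EQ\in\LSM$.)
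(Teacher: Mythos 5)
Your proposal is correct and follows essentially the same route as the paper: reduce to the four closure steps (atomic formulas, products, summation via Lemma~\ref{lem:closedUnderProjection}, limits), with the projection lemma carrying the only nontrivial content. The one small difference is that you handle limits by the direct observation that the weak inequality passes to pointwise limits, which is cleaner than the paper's contradiction argument with an explicit $\epsilon$--$\delta$ estimate, but this is a stylistic rather than a substantive divergence.
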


\begin{proof}
We just need to show that each level in the definition of \omegadef{} function preserves lsm:
first that  every
atomic formula over~$\calF\cup\{\EQ\}$ defines an lsm function,
then that a product of lsm functions is lsm,
then that a summation of an lsm function is lsm,
and finally that a limit of lsm functions is lsm.
As we shall see below,
only the third step is non-trivial, and it is covered by Lemma~\ref{lem:closedUnderProjection}.

First, note that the $\EQ$ is lsm, so every function in $\calF \cup\{\EQ\}$ is lsm,
An atomic formula $\phi=G(v_{i_1},\ldots,v_{i_a})$ defines
a function
$F_\phi(\vecx)=G(x_{i_1},\allowbreak\ldots,x_{i_a})$ which is lsm:
\begin{align*}
F_\phi(\vecxuy)F_\phi(\vecxny)&=
G(x_{i_1}\vee y_{i_1},\ldots,x_{i_a}\vee y_{i_a})\,G(x_{i_1}\wedge y_{i_1},\ldots,x_{i_a}\wedge y_{i_a})\\
&\geq G(x_{i_1},\ldots,x_{i_a})\,G(y_{i_1},\ldots,y_{i_a})\\
&=F_\phi(\vecx)F_\phi(\vecy).
\end{align*}
Note that we do not need to assume that $i_1,\ldots,i_a$ are all distinct.

It is immediate that the product of two lsm functions (and hence the product of
any number) is lsm.  Thus the product $\prod_{j=1}^sF_{\phi_j}$ appearing
in~\eqref{eq:ppfunctionDef} is lsm.  Then, by Lemma~\ref{lem:closedUnderProjection},
the \ppdef{} function~$F_\psi$ in~\eqref{eq:ppfunctionDef} is lsm.

Finally,
we will show that
any function that is approximated by lsm functions is lsm. Suppose that a function $F\in\codomainnneg_n$ has the property that,
for every $\epsilon>0$, there
is an arity-$n$ lsm function~$\Fhat$
satisfying
$$\|\Fhat-F\|_\infty=\max_{\vecx\in\B^a}|\Fhat(\vecx)-F(\vecx)|<\epsilon.$$
We wish to show that $F$ is lsm.
Let $\maxF = \max_{\vecx} F(\vecx)$.
Suppose for contradiction that $F$ is not lsm,
so there is a $\delta>0$ and
$\vecx, \vecy \in \B^n$
such that
$$F(\vecxuy)F(\vecxny)\leq F(\vecx)F(\vecy) - \delta.$$
Let $\epsilon>0$ be sufficiently small
that  $\epsilon \max(\Fmax,1)$ is tiny compared to $\delta$.
Then
\begin{align*}
\hatF(\vecxuy)\hatF(\vecxny)
&\leq (F(\vecxuy) + \epsilon)(F(\vecxny) + \epsilon) \\
&\leq F(\vecxuy)F(\vecxny) + 2 \epsilon  \maxF + \epsilon^2 \\
&\leq  F(\vecx)F(\vecy) - \delta + 2 \epsilon  \maxF + \epsilon^2\\
&\leq (\hatF(\vecx) + \epsilon)(\hatF(\vecy)+\epsilon) - \delta + 2 \epsilon  \maxF + \epsilon^2\\
&\leq \hatF(\vecx) \hatF(\vecy) + 2 \epsilon (\maxF + \epsilon) - \delta + 2 \epsilon  \maxF + 2\epsilon^2\\
&< \hatF(\vecx) \hatF(\vecy) ,
\end{align*}
so $\hatF$ is not lsm, giving a contradiction.
\end{proof}

An important example of an lsm function is the 0,1-function ``implies'',
$$
\IMP(x,y)=\begin{cases}0,&\text{if $(x,y)=(1,0)$};\\1,&\text{otherwise}.\end{cases}
$$
We also think of this as a binary relation $\IMP=\{(0,0),(0,1),(1,1)\}$.
Complexity-theoretic issues will be treated in detail in~\S\ref{sec:complexity}.
However, it may be helpful to give a pointer here to the importance of IMP in the
study of approximate counting problems.

The problem $\BIS$ is that of counting
independent sets in a bipartite graph.  Dyer et al.~\cite{DGGJ} exhibited a class of
counting problems, including $\BIS$, which are interreducible via approximation-preserving
reductions.  Further natural problems have been shown to lie in this class,
providing compelling evidence that is of intermediate complexity
between counting problems that are tractable (admit a polynomial-time approximation algorithm)
and those that are $\NP$-hard to approximate.
We will see in due course (Theorem~\ref{corthm:main} and Proposition~\ref{lem:weights})
that $\BIS$ and $\nCSP(\IMP)$ are interreducible
via approximation-preserving reductions, and hence are of equivalent
difficulty.

We know from Lemma~\ref{lem:lsmClosure} that $\fclonelim{\IMP,\codomainnneg_1}\subseteq\LSM$, and one might ask whether this inclusion is strict. We will address this question in \S\ref{sec:cloneLSMk}.

\section{Pinnings and modular functions}\label{sec:pinnings}

Let $\unaryzero$ be the unary function with $\unaryzero(0)=1$ and $\unaryzero(1)=0$ and let $\unaryone$ be the unary function with $\unaryone(0)=0$ and $\unaryone(1)=1$.

Let $S\subseteq[n]$, let $\vecx'=(x_j)_{j\notin S}$, and $\vecx''=(x_j)_{j\in S}$, and partition $\vecx\in\B^n$ as $(\vecx';\vecx'')$. Then, if $F\in\codomainbool_{n}$, the function $F(\vecx';\vecc)$ given by setting  $x''_j=c_j$ for constants $c_j\in\B$ $(j\in S)$ is a \emph{pinning} of $F$. Note that we allow the empty pinning $S=\emptyset$, which is $F$ itself, and the pinning of all variables $S=[n]$, which is a nullary function.

Clearly, every pinning of~$F$ is in $\fclone{F,\delta_0,\delta_1}$, since a constant~$c\in\B$ can be implemented using either~$\unaryzero$ or~$\unaryone$, i.e. we add $\delta_c(x_i)$ to the constraint set. We will use the notation $i\gets c$ to indicate that the $i$th variable has been pinned to $c$.

If $n\geq 2$ then a \emph{2-pinning} of a function $F\in \codomainbool_n$
is a function $F(x_i,x_j;\vecc)$ which pins \emph{all but} 2 of the variables. Thus $[n]\setminus S=\{i,j\}$, where $i$ and $j$ are distinct indices in $[n]$, and $\vecc\in\B^{n-2}$. Clearly, every $2$-pinning of $F$ is in $\fclone{F,\codomaineff_1}$, since it is a pinning of $F$.

We say that a function $F\in \codomainnneg_n$
is log-modular if $F(\vecxuy)F(\vecxny) =  F(\vecx)F(\vecy)$
for all $\vecx,\vecy\in\B^n$.

It is a fact that $\LSM$ and the class of log-modular functions are closed under pinning. This is a consequence of the following lemma about $2$-pinnings of  lsm and log-modular functions. It is due, in essence, to Topkis~\cite{Topkis}, but we provide a short proof for completeness.
\begin{lemma}[Topkis]\label{lem:Topkis}
A function $F\in\codomainpos$ is lsm iff every $2$-pinning is lsm, and is log-modular iff every $2$-pinning is log-modular.
\end{lemma}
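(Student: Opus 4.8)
The plan is to handle the two equivalences in parallel, dispatching the (easy) forward implications first. For these I would just observe that any $2$-pinning $G(x_i,x_j)=F(x_i,x_j;\vecc)$ inherits lsm-ness (resp.\ log-modularity) directly: given arguments $(a_i,a_j)$ and $(b_i,b_j)$, form the full vectors $\vecx=(a_i,a_j;\vecc)$ and $\vecy=(b_i,b_j;\vecc)$; since $\vecx$ and $\vecy$ agree on the pinned coordinates, so do $\vecxuy$ and $\vecxny$, and $G(a_i\vee b_i,\,a_j\vee b_j)\,G(a_i\wedge b_i,\,a_j\wedge b_j)=F(\vecxuy)F(\vecxny)$, which is $\geq F(\vecx)F(\vecy)=G(a_i,a_j)\,G(b_i,b_j)$ (resp.\ $=$). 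It is then convenient to pass to logarithms, writing $f=\log F$ (legitimate because $F\in\codomainpos$): $F$ is lsm iff $f$ is \emph{supermodular}, i.e.\ $f(\vecxuy)+f(\vecxny)\geq f(\vecx)+f(\vecy)$ for all $\vecx,\vecy\in\B^n$, and log-modular iff the same holds with equality; the $2$-pinning hypotheses transform accordingly.

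The substance is the converse, which I would prove by induction on $k=|\{\,t:x_t\neq y_t\,\}|$, the number of coordinates on which $\vecx$ and $\vecy$ disagree. For $k\leq 1$ the inequality is trivial, since $\vecxuy$ and $\vecxny$ are then just $\vecx$ and $\vecy$ in some order; for $k=2$ it is exactly the supermodularity of the $2$-pinning of $f$ to the two disagreeing coordinates, which holds by hypothesis. For $k\geq 3$, pick a coordinate $i$ with $x_i\neq y_i$; by symmetry of the desired inequality under $\vecx\leftrightarrow\vecy$ I may assume $x_i=1$ and $y_i=0$. Let $\vecx^-$ be $\vecx$ with coordinate $i$ reset to $0$, and let $\vecw$ be $\vecxuy$ with coordinate $i$ reset to $0$. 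A routine check gives $\vecx^-\vee\vecy=\vecw$ and $\vecx^-\wedge\vecy=\vecxny$, and $\vecx^-$ disagrees with $\vecy$ in exactly $k-1$ places, so the inductive hypothesis gives
$$f(\vecw)+f(\vecxny)\;\geq\;f(\vecx^-)+f(\vecy).$$
Independently, for each coordinate $j\neq i$ the $2$-pinning of $f$ to $\{i,j\}$ being supermodular says precisely that the discrete derivative $\vecz\mapsto f(1_i,\vecz)-f(0_i,\vecz)$ (with $\vecz$ ranging over assignments to the coordinates other than $i$, and $(b_i,\vecz)$ denoting the full vector) is non-decreasing in $z_j$; hence it is monotone with respect to the product order. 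Applying this to the assignments obtained by restricting $\vecx$ and $\vecxuy$ to the coordinates other than $i$ — which are ordered since $\vecx\leq\vecxuy$ — yields
$$f(\vecxuy)-f(\vecw)\;\geq\;f(\vecx)-f(\vecx^-).$$
Adding the two displayed inequalities, $f(\vecw)$ and $f(\vecx^-)$ cancel and we get $f(\vecxuy)+f(\vecxny)\geq f(\vecx)+f(\vecy)$, closing the induction. For the log-modular claim the identical argument works with every ``$\geq$'' replaced by ``$=$'': the $2$-pinning hypothesis now makes the discrete derivative constant in the other coordinates, and the inductive step is an equality.

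I expect the only real obstacle to be locating this argument at all — specifically, recognising that the correct induction parameter is the Hamming distance between $\vecx$ and $\vecy$, and that the $2$-pinning hypothesis yields exactly two complementary facts (monotonicity of the single-coordinate discrete derivative of $f=\log F$, and the inequality for nearer pairs) which telescope additively once everything is written in terms of $f$. Everything else — the gadget for the forward direction, the identities $\vecx^-\vee\vecy=\vecw$ and $\vecx^-\wedge\vecy=\vecxny$, and the base cases $k\leq 2$ — is routine bookkeeping; in particular no use is made of the four-functions-type machinery underlying Lemma~\ref{lem:closedUnderProjection}.
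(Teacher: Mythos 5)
Your proof is correct. It reaches the same conclusion as the paper's proof but organises the induction differently, so a brief comparison is worthwhile. The paper works multiplicatively with $F$ itself: it first reduces to pairs $\vecx,\vecy$ that disagree in \emph{every} coordinate, permutes so that the claim becomes $F(0^r,1^s)F(1^r,0^s)\le F(1^r,1^s)F(0^r,0^s)$, and then in the inductive step applies the induction hypothesis to \emph{two} different pinnings of $F$ (one coordinate pinned to $0$, and the first $r-1$ coordinates pinned to $1$), multiplies the resulting inequalities, and cancels a common factor --- the cancellation being where permissivity is used. You instead pass to $f=\log F$ at the outset (using permissivity there), induct directly on the Hamming distance between $\vecx$ and $\vecy$ without the preliminary reduction to antipodal pairs, and in the inductive step combine \emph{one} application of the induction hypothesis with a separately derived fact, namely that the discrete derivative $\vecz\mapsto f(1_i,\vecz)-f(0_i,\vecz)$ is monotone in the product order, obtained by chaining the $2$-pinning hypotheses one coordinate at a time; the two inequalities then telescope additively. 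The identities $\vecx^-\vee\vecy=\vecw$ and $\vecx^-\wedge\vecy=\vecxny$ and the bookkeeping in the derivative step all check out, and the log-modular variant goes through with equalities throughout, exactly as you say. What your route buys is a slightly cleaner statement of what the $2$-pinning hypothesis provides (monotonicity of a single-coordinate increment) and no need to justify a cancellation of function values; what the paper's route buys is that it never leaves the multiplicative setting, which keeps the statement and proof in the same form as the definition of lsm. Both arguments use permissivity in an essential and equivalent way, and neither would survive without it, consistent with the counterexample in the paper's subsequent remark.
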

\begin{proof}
The necessity of the $2$-pinning condition is obvious, but we must prove sufficiency. We need only show $F(\vecx)F(\vecy)\leq F(\vecxuy)F(\vecxny)$ for $\vecx,\vecy$ such that $x_i\neq y_i$ $(i\in[n])$. All other cases follow from this.
Note that log-supermodularity is preserved under arbitrary permutation of variables. Thus, if $0^r$, $1^r$ denote $r$-tuples of $0$'s and $1$'s respectively, we must show that, for all $r, s>0$ with $r+s=n$,
\begin{equation}
F(0^r,1^s)F(1^r,0^s) \leq F(1^r,1^s)F(0^r,0^s).\label{eq:pinning1}
\end{equation}
We will prove this by induction, assuming it is true for all $r',s'>0$ such
that $r'+s'<n$. The base case, $r'=s'=1$, is the 2-pinning assumption.
If $r>1$, then we have
\begin{align}
F(0^r,1^s)F(1^{r-1},0^{s+1}) &\leq F(1^{r-1},0,1^s)F(0^r,0^s),\label{eq:pinning2}
\intertext{by induction, after pinning the $r$th position to $0$,}
F(1^{r-1},0,1^s)F(1^r,0^s) &\leq F(1^r,1^s)F(1^{r-1},0^{s+1})\label{eq:pinning3}
\end{align}
by induction, after pinning the first $r-1>0$ positions to $1$.\vspace{1ex}

Now, multiplying \eqref{eq:pinning2} and \eqref{eq:pinning3} gives \eqref{eq:pinning1} after cancellation, which is valid since $F$ is permissive. If $r=1$, we do not have the induction giving \eqref{eq:pinning3}, so we use instead
\begin{align}
F(1,0^s)F(0,1,0^{s-1}) &\leq F(0,1^s) F(1,1,0^{s-1})\label{eq:pinning4},\\
\intertext{by induction, using the base case after pinning the last $s-1$ positions to $0$,}
F(1,1,0^{s-1})F(0,1^s)
&\leq F(1,1^s)F(0,1,0^{s-1})\label{eq:pinning5}.
\end{align}
by induction, after pinning the second position to $1$.\vspace{1ex}

Now, multiplying \eqref{eq:pinning4} and \eqref{eq:pinning5} gives \eqref{eq:pinning1} after cancellation,
completing the proof for log-supermodularity.
The proof for log-modularity is identical, except that every ``$\leq$'' must be replaced by ``$=$''.
\end{proof}
\begin{remark}
    We have proved Lemma~\ref{lem:Topkis} only for permissive functions because, in fact, it is false more generally. Consider, for example, the function $F\in\codomainbool_4$ such that $F(1,1,0,0)=F(0,0,1,1)=1$, $F(\vecx)=0$ otherwise. It is easy to see that all the 2-pinnings $F'$ of $F$ have $F'(x,y)>0$ for at most one $(x,y)\in\B^2$. It follows that every 2-pinning of $F$ is log-modular. But $F$ is not even lsm, since $F(1,1,0,0)F(0,0,1,1) > F(1,1,1,1)F(0,0,0,0)$.
\end{remark}

\section{Computable real numbers}\label{sec:reals}

Since we want to be able to derive computational results,
we will now focus attention on functions whose co-domain
is restricted to efficiently-computable real numbers. We will say that
a real number is polynomial-time computable if the $n$ most significant bits
of its binary expansion can be computed in time polynomial in $n$.
This is essentially the definition given in~\cite{KoFri82}.
Let $\Reff$ denote the set of \emph{nonnegative}
real numbers that are polynomial-time computable.
For $n\in\Nat$, denote by $\codomaineff_n$  the set of all functions
$\B^n\to\Reff$;
also denote by $\codomaineff=\codomaineff_0\cup\codomaineff_1\cup\codomaineff_2\cup\cdots$ the set of functions
of all arities.

\begin{remark}
As we have defined them, it is known that the polynomial-time computable numbers form a field~\cite{KoFri82}, and hence a subsemiring $\ring$ of $\Cnum$, as we  require. Thus, in our definitions, there is no difficulty with \ppsdefinability. However, there could be a problem with \omegadefinability, since the limit of a sequence of polynomial-time computable reals may not be polynomial-time computable. Polynomial-time computability is ensured only by placing restrictions on speed of convergence. See~\cite{KoFri82,Tora87} for details. However, observe that our definition of efficient \omegadefinability\  avoids this difficulty entirely, by insisting that the limit of a sequence of reals will be permitted only if the limit is itself polynomial-time computable.
\end{remark}

\begin{remark}
The polynomial-time computable real numbers are a proper subclass of the \emph{efficiently approximable} real numbers, defined in~\cite{GoldbergJ10}. (This fact can be deduced from~\cite{KoFri82}.) We have made this restriction since it results in a more uniform treatment of limits when we discuss efficient \omegadefinability{} for functions in $\codomaineff$.
\end{remark}

\section{Binary functions}\label{sec:binary}
We begin the study of the conservative case in the simplest nontrivial situation. We  consider the functional clones $\fclonelimeff{F,\codomaineff_1}$, where $F$ is a single binary function.

Recall that $\EQ$ is the binary relation $\EQ = \{(0,0),(1,1)\}$.
(We used the name ``$\EQ$'' to denote the equivalent binary function, but
it will do no harm to use the same symbol for the relation and the function.)
Denote by $\OR$, $\NEQ$,   and $\NAND$
the binary relations  $\OR=\{(0,1),(1,0),(1,1)\}$, $\NEQ=\{(0,1),(1,0)\}$,
and $\NAND  = \{(0,0),(0,1),(1,0)\}$.
When we write a function
$F\in\codomainbool_2$, we will identify the arguments by writing $F(x_1,x_2)$.
We may represent $F$ by a $2\times 2$ matrix
\[  M(F)\ =\ \begin{bmatrix} F(0,0) & F(0,1)\\F(1,0) & F(1,1) \end{bmatrix}
\ =\
\begin{bmatrix} f_{00} & f_{01}\\f_{10} & f_{11} \end{bmatrix},\]
say, with rows indexed by $x_1\in\B$ and columns by $x_2\in\B$. We will assume $f_{01}\geq f_{10}$, since otherwise we may consider the function $F^T$, such that $F^T(x_1,x_2)=F(x_2,x_1)$, represented by the matrix $M(F)^T$. Clearly $\fclone{F^T}=\fclone{F}$.

If $U$ is a unary function, we will write $U=(U(0),U(1))=(u_0,u_1)$, say. Then we have
\[ M\big(U(x_1)F(x_1,x_2)\big)=\begin{bmatrix} u_0f_{00} & u_0f_{01}\\u_1f_{10} & u_1f_{11} \end{bmatrix},\quad  M\big(U(x_2)F(x_1,x_2)\big) = \begin{bmatrix} u_0f_{00} & u_1f_{01}\\u_0f_{10} & u_1f_{11} \end{bmatrix},\]
where both $U(x_1)F(x_1,x_2)$ and $U(x_2)F(x_1,x_2)$ are clearly in $\fclone{F,U}$.

If $F_1,F_2\in\codomainbool_2$, then $M(F_1)M(F_2)=M(F)$, where $F\in\fclone{F_1,F_2}$ is such that
\[ F(x_1,x_2)\ =\ \sum_{y=0}^1F_1(x_1,y)F_2(y,x_2).\]
\begin{lemma}\label{lem:binarycases}
Let $F\in\codomaineff_2$. Assuming $f_{01}\geq f_{10}$,
\begin{enumerate}[topsep=5pt,itemsep=0pt,label=(\roman*)]
  \item if $f_{00}f_{11}=f_{01}f_{10}$, then $\fclonelimeff{F,\codomaineff_1}=\fclonelimeff{\codomaineff_1}$;\label{case1}
  \item if $f_{01},f_{10}=0$ and $f_{00},f_{11}>0$, then $\fclonelimeff{F,\codomaineff_1}=\fclonelimeff{\codomaineff_1}$;\label{case2}
  \item if $f_{00},f_{11}=0$ and $f_{01},f_{10}>0$, then $\fclonelimeff{F,\codomaineff_1}=\fclonelimeff{\NEQ,\codomaineff_1}$;\label{case3}
  \item if $f_{00},f_{01},f_{11}>0$ and $f_{00}f_{11}>f_{01}f_{10}$, then $\fclonelimeff{F,\codomaineff_1}=\fclonelimeff{\IMP,\codomaineff_1}$;\label{case4}
  \item otherwise, $\fclonelimeff{F,\codomaineff_1}=\fclonelimeff{\OR,\codomaineff_1}$.\label{case5}
\end{enumerate}
The non-efficient version --- with $\codomainbool_1,\codomainbool_2$ replacing
$\codomaineff_1,\codomaineff_2$, and $\fclonelim\cdot$ replacing $\fclonelimeff\cdot$ --- also holds.
\end{lemma}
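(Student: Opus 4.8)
The plan is to prove each of (i)--(v) by establishing the two inclusions $\fclonelimeff{F,\codomaineff_1}\subseteq\fclonelimeff{X,\codomaineff_1}$ and $\fclonelimeff{X,\codomaineff_1}\subseteq\fclonelimeff{F,\codomaineff_1}$, where $X$ is the relevant generating set ($X=\emptyset,\NEQ,\IMP,\OR$ respectively), gluing implementations with Lemma~\ref{lem:fclonelimeffTransitive}. The only manipulations needed are those recorded just before the lemma: replacing $F(x_1,x_2)$ by $U(x_1)F(x_1,x_2)$ or $U(x_2)F(x_1,x_2)$ scales a row or column of $M(F)$ by the values of a unary weight $U$; ``series composition'' $\sum_y F_1(x_1,y)F_2(y,x_2)$ multiplies the matrices; and taking $k$ parallel copies of one constraint on the same pair of variables raises every entry of $M(F)$ to the $k$th power. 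I write $M(F)\sim M(F')$ when one is obtained from the other by scaling rows and columns by positive constants, i.e.\ $F'\in\fclone{F,\codomaineff_1}$ and $F\in\fclone{F',\codomaineff_1}$. Since the nonzero $f_{ij}$ are positive polynomial-time-computable reals, their reciprocals, quotients and real powers are again polynomial-time computable (\cite{KoFri82}), so every unary weight introduced below lies in $\codomaineff_1$; and since the approximating formulas constructed below are explicit and of size $\poly(\log\epsilon^{-1})$, efficient definability is automatic. The non-efficient version will then follow by the same constructions, simply dropping all running-time and polynomial-time-computability clauses.

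Cases (i)--(iii) are immediate. In (i), $\det M(F)=0$ with $f_{ij}\ge0$ forces $M(F)$ to have rank at most one, so $F(x_1,x_2)=U(x_1)V(x_2)$ for suitable $U,V\in\codomaineff_1$ (the sub-cases according to which entries vanish are routine); hence $F\in\fclone{\codomaineff_1}\subseteq\fclonelimeff{\codomaineff_1}$, and Lemma~\ref{lem:fclonelimeffTransitive} gives $\fclonelimeff{F,\codomaineff_1}=\fclonelimeff{\codomaineff_1}$. In (ii), $F(x_1,x_2)=U(x_1)\EQ(x_1,x_2)$ with $U=(f_{00},f_{11})$, and $\EQ$ is present in every functional clone, so again $F\in\fclone{\codomaineff_1}$. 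In (iii), $F(x_1,x_2)=U(x_1)\NEQ(x_1,x_2)$ with $U=(f_{01},f_{10})$, while $\NEQ(x_1,x_2)=(f_{01}^{-1},f_{10}^{-1})(x_1)\,F(x_1,x_2)$ recovers $\NEQ$ (both entries are positive, so the weights are legal).

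Case (iv) is where limits first do real work. If $f_{10}=0$ then $M(F)\sim M(\IMP)$ and both inclusions are trivial, so assume $f_{10}>0$; one checks $M(F)\sim\bigl[\begin{smallmatrix}1&1\\1&s\end{smallmatrix}\bigr]$ with $s=f_{00}f_{11}/(f_{01}f_{10})>1$, and it suffices to treat this normal form. For $F\in\fclonelimeff{\IMP,\codomaineff_1}$ implement $\psi(x_1,x_2)=\sum_{w}\IMP(w,x_1)\,\IMP(w,x_2)\,\lambda(w)$ with $\lambda=(1,s-1)$: evaluating gives $M(F_\psi)=\bigl[\begin{smallmatrix}1&1\\1&s\end{smallmatrix}\bigr]$ exactly, so no limit is even required here. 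For $\IMP\in\fclonelimeff{F,\codomaineff_1}$, rescale $F$ to $\bigl[\begin{smallmatrix}1&1\\s^{-1}&1\end{smallmatrix}\bigr]$ and take $k$ parallel copies: the matrix becomes $\bigl[\begin{smallmatrix}1&1\\s^{-k}&1\end{smallmatrix}\bigr]$, which is within $s^{-k}$ of $M(\IMP)$ in $\|\cdot\|_\infty$, so given $\epsilon>0$ one takes $k=O(\log\epsilon^{-1})$ and outputs the resulting formula in time $\poly(\log\epsilon^{-1})$; Lemma~\ref{lem:fclonelimeffTransitive} finishes the case.

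Case (v) I would reduce to case (iv). Here $\det M(F)<0$, so $f_{01},f_{10}>0$ and $M(F)\sim\bigl[\begin{smallmatrix}a&1\\1&b\end{smallmatrix}\bigr]$ with $ab<1$; if $a=b=0$ this is $\NEQ$ and we are done, and the cases $ab=0$ with $M(F)\ne M(\NEQ)$ are handled directly (an unbounded single pair of weights produces $\NEQ$, and a flip of $F$ rescales directly to $M(\IMP)$). So assume $a,b>0$. To extract $\NEQ$, take $k$ parallel copies, giving $\bigl[\begin{smallmatrix}a^k&1\\1&b^k\end{smallmatrix}\bigr]$, and then multiply by suitable polynomial-time-computable unary weights that keep the off-diagonal entries equal to $1$ and make both diagonal entries equal to $(ab)^{k/2}$; this converges to $M(\NEQ)$ at the exponential rate $(ab)^{k/2}$, so $\NEQ\in\fclonelimeff{F,\codomaineff_1}$. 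With $\NEQ$ available we may flip either argument of any function via $\sum_w\NEQ(\,\cdot\,,w)G(w,\,\cdot\,)$; flipping the first argument of $F$ and rescaling produces $\bigl[\begin{smallmatrix}1&1\\1&1/(ab)\end{smallmatrix}\bigr]$ with $1/(ab)>1$, a case-(iv) function, so case (iv) gives $\IMP\in\fclonelimeff{F,\codomaineff_1}$; conversely $\{\IMP,\NEQ\}$ rebuilds that lsm function (case (iv) again) and then a flip recovers $F$, whence $\fclonelimeff{F,\codomaineff_1}=\fclonelimeff{\IMP,\NEQ,\codomaineff_1}$. Finally $\fclonelimeff{\IMP,\NEQ,\codomaineff_1}=\fclonelimeff{\OR,\codomaineff_1}$, using $\OR(x,y)=\sum_w\NEQ(x,w)\IMP(w,y)$ and $\IMP(x,y)=\sum_w\NEQ(x,w)\OR(w,y)$ together with the fact that $\NEQ\in\fclonelimeff{\OR,\codomaineff_1}$ by the same unbounded-weight limit applied to $\OR$ itself. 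I expect the main obstacle to be exactly the bookkeeping in case~(v) and, to a lesser extent, (iv): tracking which normalisation is in force, checking that the degenerate sub-cases (a vanishing entry, or $ab=0$) really do collapse to the generic construction, and verifying that every introduced weight is polynomial-time computable and that the entire computation runs in time $\poly(\log\epsilon^{-1})$, which is precisely what membership in $\fclonelimeff{\cdot}$ demands. For the non-efficient version one repeats all of the above verbatim, ignoring running times and computability of the weights, so every inclusion established holds with $\codomainbool_1,\codomainbool_2$ replacing $\codomaineff_1,\codomaineff_2$ and $\fclonelim\cdot$ replacing $\fclonelimeff\cdot$.
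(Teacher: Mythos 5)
Your overall strategy is the same as the paper's: normalise $M(F)$ by unary row/column weights, use ``powering'' (parallel copies) to realise limits, and compose binary functions by matrix multiplication, with Lemma~\ref{lem:fclonelimeffTransitive} doing the gluing. Cases (i)--(iii) match the paper essentially verbatim. In case (iv) you improve on the paper slightly: your star gadget $\sum_{w}\IMP(w,x_1)\IMP(w,x_2)\lambda(w)$ with $\lambda=(1,s-1)$ realises the normal form $\bigl[\begin{smallmatrix}1&1\\1&s\end{smallmatrix}\bigr]$ exactly, whereas the paper recovers $F$ from $\IMP$ via the product $M(\IMP)M(F_1)$; both are fine. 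In case (v) you reduce to case (iv) by a $\NEQ$-flip instead of the paper's separate treatment of the two sub-cases ($f_{11}>0$ versus $f_{11}=0$), which is a legitimate and arguably cleaner organisation.

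There is, however, one step in case (v) that fails as written. Both the definition of \omegadefinable\ and its efficient version require a \emph{fixed finite} subset $S_F\subseteq\calF\cup\{\EQ\}$ over which \emph{all} the approximating formulas are built; the unary weights you use may not vary with the accuracy parameter. But your extraction of $\NEQ$ from $\bigl[\begin{smallmatrix}a&1\\1&b\end{smallmatrix}\bigr]$ first takes $k$ parallel copies and \emph{then} applies symmetrising weights forcing the diagonal to $(ab)^{k/2}$; those weights are of the form $(b/a)^{\pm k/4}$ and depend on $k$, hence on $\epsilon$, so no finite $S_F$ contains them all. (Note that powering alone does not suffice, since $ab<1$ does not force both $a<1$ and $b<1$.) The same objection applies to your phrase ``an unbounded single pair of weights produces $\NEQ$'' in the degenerate sub-case $ab=0$. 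The fix is exactly the paper's move: apply \emph{one fixed} pair of weights first --- symmetrising $\bigl[\begin{smallmatrix}a&1\\1&b\end{smallmatrix}\bigr]$ to diagonal $\sqrt{ab}<1$ when $a,b>0$, or producing $\bigl[\begin{smallmatrix}0&1\\1&\nicefrac14\end{smallmatrix}\bigr]$ from $\NAND$ or $\OR$ via $U=(\half,2)$ --- and only then power, so that $S_F$ stays finite. With that reordering (which also carries over to the non-efficient version, where $S_F$ must still be finite), your argument is complete and correct.
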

\begin{proof}To prove $\fclonelimeff{F_1,\codomaineff_1}=\fclonelimeff{F_2,\codomaineff_1}$, it suffices to show that $F_2\in\fclonelimeff{F_1,\codomaineff_1}$ and $F_1\in\fclonelimeff{F_2,\codomaineff_1}$. We will verify this in each of the five cases.
\begin{enumerate}[topsep=5pt,itemsep=5pt,label=(\roman*)]
  \item Suppose $f_{00}f_{11}=f_{01}f_{10}$. If $f_{00},f_{01}=0$, then
  $$F(x_1,x_2)=U_1(x_1)U_2(x_2)$$
      with
  $U_1=(0,1)$ and $U_2=(f_{10},f_{11})$.
    Similarly if $f_{00},f_{10}=0$, $f_{01},f_{11}=0$, or $f_{10},f_{11}=0$. In the remaining case $f_{00},f_{01},f_{10},f_{11}>0$. Then choose $U_1=(1,f_{10}/f_{00})$, $U_2=(f_{00},f_{01})$. In all cases $F\in\fclone{U_1,U_2}$,
      so $\fclonelimeff{F,\codomaineff_1}=\fclonelimeff{\codomaineff_1}$.
  \item if $f_{01},f_{10}=0$ and $f_{00},f_{11}>0$, then $F(x_1,x_2)=U(x_1)\EQ(x_1,x_2)$, where $U=(f_{00},f_{11})$, so $F\in\fclone{U}$.
Hence  $\fclonelimeff{F,\codomaineff_1}=\fclonelimeff{\codomaineff_1}$.
  \item If $f_{00},f_{11}=0$ and $f_{01},f_{10}>0$, then $F(x_1,x_2)=U(x_1)\NEQ(x_1,x_2)$, where $U=(f_{01},f_{10})$, so $F\in\fclone{\NEQ,U}$. Similarly $\NEQ(x_1,x_2)=U'(x_1)F(x_1,x_2)$, where $U'=(1/f_{01},1/f_{10})$, so $\NEQ \in\fclone{F,U'}$. So $\fclonelimeff{F,\codomaineff_1}=\fclonelimeff{\NEQ,\codomaineff_1}$.
 \item If $f_{00},f_{01},f_{11}>0$, $f_{00}f_{11}>f_{01}f_{10}$,
  we can apply unary weights $U_1,U_2$, where $U_1=(1/f_{00},f_{01}/f_{00}f_{11})$, $U_2=(1,f_{00}/f_{01})$, to implement $\IMP_\alpha(x_1,x_2)=U_1(x_1)U_2(x_2)F(x_1,x_2)$, where
\[  M(\IMP_\alpha)\ =\ \begin{bmatrix} 1 & 1\\  \alpha & 1 \end{bmatrix},
\]
where $\alpha=f_{01}f_{10}/f_{00}f_{11}<1$. Then we have $\IMP_\alpha\in\fclone{F,U_1,U_2}$. Note that $\IMP_0=\IMP$. If $\alpha>0$, consider the function $\IMP_\alpha^k$, with matrix
\[  M(\IMP_\alpha^k)\ =\ \begin{bmatrix} 1 & 1\\  \alpha^k & 1 \end{bmatrix}.
\]
Now $\IMP_\alpha^k$ can be implemented
as $\IMP_\alpha^k(x_1,x_2)=U_1^k(x_1)U_2^k(x_2)F^k(x_1,x_2)$, by taking  $k$ copies of $U_1$, $U_2$ and $F$.
Since $\alpha<1$, we see that $\lim_{k\to\infty}\IMP_\alpha^k=\IMP_0=\IMP$. Moreover, the limit is efficient, since $\|\IMP-\IMP_\alpha^k\|<\epsilon$ if $k=O(\log\epsilon^{-1})$, and so an $\epsilon$-approximation to $\IMP$ can be computed in $O(\log\epsilon^{-1})$ time. Hence $\IMP\in\fclonelimeff{F,\codomaineff_1}$.

Note that ``powering'' limits like that used here will be employed below without further discussion of their efficiency.

Conversely, from $\IMP$, we first implement $\IMP_\alpha$. If $\alpha=0$, we do nothing. Otherwise, we use unary weights $U_1, U_2$ such that $U_1=(1/\alpha-1,1)$, $U_2=(\alpha,1)$, to implement $F_1(x_1,x_2)=U_1(x_1)U_2(x_2)\IMP(x_2,x_1)$, where
\[ M(F_1)\ =\ \begin{bmatrix} 1-\alpha & 0 \\  \alpha & 1 \end{bmatrix}. \]
Then $M(\IMP_\alpha)=M(\IMP)M(F_1)$, so $\IMP_\alpha\in\fclone{\IMP,U_1,U_2}$.
Now we can recover
$F(x_1,x_2)=U_3(x_1)U_4(x_2)\IMP_\alpha(x_1,x_2)$,
where $U_3=(f_{00},f_{00}f_{11}/f_{01})$, $U_4=(1,f_{01}/f_{00})$, so we have $F\in\fclone{\IMP,U_1,U_2,U_3,U_4}$.
Hence $\fclonelimeff{F,\codomaineff_1}=\fclonelimeff{\IMP,\codomaineff_1}$.
\item The remaining cases are
(a)
$f_{01},f_{10},f_{11}>0$, $f_{00}f_{11}<f_{01}f_{10}$ and
(b) $f_{00}$, $f_{01}$, $f_{10}>0$, $f_{11}=0$.

First, we deal with part~(a): If $f_{01},f_{10},f_{11}>0$ and $f_{00}f_{11}<f_{01}f_{10}$, we apply unary weights $U_1,U_2$, where $U_1=(f_{11}/f_{01},1)$, $U_2=(1/f_{10},1/f_{11})$, to implement $\OR_\alpha(x_1,x_2)=U_1(x_1)U_2(x_2)F(x_1,x_2)$, where $\alpha=f_{00}f_{11}/f_{01}f_{10}<1$, and
  \[ M(\OR_\alpha)=\begin{bmatrix} \alpha & 1\\  1 & 1 \end{bmatrix}\]
   If $\alpha=0$, $\OR_0=\OR$, so we have $\OR\in\fclone{F,\codomaineff_1}$. Otherwise $\lim_{k\to\infty}\OR_\alpha^k=\OR_0=\OR$, so we have $\OR\in\fclonelimeff{F,\codomaineff_1}$.

   Conversely, from $\OR$, we first
       express
    $\NEQ$. Use the unary function $U=(2,\half)$ to implement $F_1=U(x_1)U(x_2)\OR(x_1,x_2)$, where
\[ M(F_1)=\begin{bmatrix} 0 & 1\\  1 & \nicefrac{1}{4} \end{bmatrix}.\]
 Then $\lim_{k\to\infty} F_1^k=\NEQ$, so $\NEQ\in\fclonelimeff{\OR,\codomaineff_1}$. Now we observe that $M(\IMP)=M(\NEQ)M(\OR)$, so $\IMP\in\fclonelimeff{\OR,\codomaineff_1}$.
Now we have $\IMP_\alpha\in\fclonelimeff{\OR,\codomaineff_1}$, as in~\ref{case4} above. Then $M(\OR_\alpha)=M(\NEQ)M(\IMP_\alpha)$, so $\OR_\alpha\in\fclonelimeff{\OR,\codomaineff_1}$. Now we can reverse the transformation from $F$ to $\OR_\alpha$ to recover $F$.

Now, we consider part~(b):
If $f_{00},f_{01},f_{10}>0$ and $f_{11}=0$,
  we apply unary weights $U_1,U_2$, where $U_1=(1/f_{00},1/f_{10})$, $U_2=(1,f_{00}/f_{01})$, to implement $\NAND(x_1,x_2)=U_1(x_1)U_2(x_2)F(x_1,x_2)$, where
  \[ M(\NAND)=\begin{bmatrix} 1 & 1\\  1 & 0 \end{bmatrix},\]
   so we have $\NAND\in\fclone{F,\codomaineff_1}$. We now use unary weight $U=(\half,2)$ to implement $F_1(x_1)=U(x_1)U(x_1)\NAND(x_1,x_2)$ with
   \[ M(F_1)=\begin{bmatrix} \nicefrac{1}{4} & 1\\  1 & 0 \end{bmatrix}.\]
  Then we have $\lim_{k\to\infty} F_1^k=\NEQ$, so again $\NEQ\in\fclonelimeff{F,\codomaineff_1}$. Then we observe that $M(\OR)=M(\NEQ)M(\NAND)M(\NEQ)$, so $\OR\in\fclonelimeff{F,\codomaineff_1}$.

Conversely, from $\OR$, we have $\NEQ\in\fclonelimeff{\OR,\codomaineff_1}$ from the above. Then we have $M(\NAND)=M(\NEQ)M(\OR)M(\NEQ)$,
so $\NAND\in\fclonelimeff{\OR,\codomaineff_1}$. Now we reverse the transformation above from $F$ to $\NAND$ to recover $F$. Thus $F\in\fclonelimeff{\OR,\codomaineff_1}$.\qedhere
\end{enumerate}
\end{proof}
\begin{remark}
From Lemma~\ref{lem:binarycases}, we see that $\IMP$ does not really occupy a special position in $\fclonelimeff{\IMP,\codomaineff_1}$, in the sense that there are other
functions
$F$ with $\fclonelimeff{F,\codomaineff_1}=\fclonelimeff{\IMP,\codomaineff_1}$.
Similarly,
$\OR$ does not occupy a special position
in $\fclonelimeff{\OR,\codomaineff_1}$. Nevertheless, it is useful to label the classes this way, and we will do so.
\end{remark}
\begin{remark}\label{rem:inclusions}
From the proof of Lemma~\ref{lem:binarycases}, we have the following inclusions between the four classes involved.
\[\fclonelimeff{\codomaineff_1}\,
\subseteq \begin{array}{c}\fclonelimeff{\NEQ,\codomaineff_1} \\[0.5ex] \fclonelimeff{\IMP,\codomaineff_1}\end{array}
\subseteq\fclonelimeff{\OR,\codomaineff_1}.\]
In fact, $\fclonelimeff{\NEQ,\codomaineff_1}$ and $\fclonelimeff{\IMP,\codomaineff_1}$ are incomparable,
and hence all the inclusions are actually strict.
 For one non-inclusion,
note the clone $\fclonelimeff{\IMP,\codomaineff_1}$ contains only lsm functions, and hence does
not contain~$\NEQ$.  For the other, we claim that any
binary function in the clone $\fclonelimeff{\NEQ,\codomaineff_1}$
has one of three forms, $U_1(x)U_2(y)$, $U(x)\EQ(x,y)$ or $U(x)\NEQ(x,y)$,
and then observe that $\IMP$ matches none of these.
The claim is a special case of a more general one, namely that any
function in $\fclonelimeff{\NEQ,\codomaineff_1}$ is of the form~$F_\phi$, where $\phi$ is a product of atomic formulas
involving only unary functions and $\NEQ$.
To show this, we need only consider summing over a single variable. By induction, assume we have a \ppformula\ of the form
$\sum_{y\in\B} F_\phi(\vecx,\vecz) F_\psi(\vecx,y)$, where $F_\psi$ is a product of atomic formulas involving~$y$
(and certain other variables $\vecx$), and $F_\phi$ is a product not involving~$y$.  Then
\begin{align*}
  \sum_{y\in\B} F_\phi(\vecx,\vecz) F_\psi(\vecx,y) &= F_\phi(\vecx,\vecz)\sum_{y\in\B} U(y)\prod_{i=1}^k\NEQ(x_i,y)\\
  &= F_\phi(\vecx,\vecz)\hspace{1pt}\bar{U}(x_1)\prod_{i=1}^k \EQ(x_i,x_1),
\end{align*}
where $\bar{U}(x_1)=U(1-x_1)$. The product of equalities can be removed by substituting $x_j$ $(j=2,\ldots,k$) by $x_1$ in $\phi(\vecx,\vecz)$, continuing the induction.

Finally, it is straightforward to show that this class is closed under limits. That is, the limit of a sequence of functions which are products of unary and $\NEQ$ functions must itself be of this form. To see this, note that every $k$-ary function in this class can be written as a product of $O(k^2)$ unary and $\NEQ$ functions. Then the conclusion follows by a standard compactness argument. The efficient version also follows.
\end{remark}

\section{The class $\fclonelimeff{\OR,\codomaineff_1}$}\label{sec:OR}
In the conservative case, we show that, somewhat surprisingly, the clone generated by the single binary function $\OR$ contains every function in $\codomaineff$.
\begin{lemma}\label{lem:OR}
$\fclonelimeff{\OR,\codomaineff_1}=\codomaineff$.
\end{lemma}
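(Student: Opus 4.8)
The goal is to show $\fclonelimeff{\OR,\codomaineff_1}=\codomaineff$. One inclusion is trivial, so the work is to implement an arbitrary $F\in\codomaineff_n$ as an efficient $\mathrm{pps}_\omega$-formula over $\{\OR\}\cup\codomaineff_1$. The plan is to build up expressive power in stages. First I would observe (as already done in Remark~\ref{rem:inclusions}) that $\NEQ$, and hence $\IMP$, $\NAND$, $\EQ$ all lie in $\fclonelimeff{\OR,\codomaineff_1}$; so we may freely use all of these binary relations together with arbitrary efficient unary weights. The key step is then to obtain the ternary relation whose support is $\{(x,y,z): z = x\oplus y\}$ (parity), or equivalently to simulate an ``$\mathrm{XOR}$ gate,'' since with parity available together with all unary weights one can glue variables into affine classes and then, using $\OR$ or $\IMP$ to impose genuine nonlinear constraints, realise every Boolean relation; having every Boolean relation in the relational clone, the underlying-relation correspondence (Lemma~\ref{lem:underlyingrelation}) reduces the problem to adjusting the nonzero values, which is done with unary weights on a per-tuple basis via a ``pinning and powering'' argument.

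Concretely, the main steps in order would be: (1) from $\OR$ recover $\NEQ$, $\EQ$, $\IMP$, $\NAND$ as in Remark~\ref{rem:inclusions}; (2) implement a parity/XOR relation of arity~3 — e.g.\ by a small $\mathrm{pps}$-gadget combining $\OR$/$\NAND$ with auxiliary variables, taking a limit of ``powered'' binary functions to kill the unwanted configurations exactly; (3) conclude that the relational clone generated contains all Boolean relations, since affine relations come from parity and $\EQ$, and a single non-affine relation such as $\OR$ (or $\IMP$) together with affine relations generates the whole Boolean relational clone — this is the classical fact behind Post's lattice, here used in the direction ``$\mathrm{OR}$ is not affine, not Horn, not dual-Horn, not bijunctive, so its co-clone together with $\EQ$ is everything''; (4) given an arbitrary $F\in\codomaineff_n$, let $R_F$ be its underlying relation; by step~(3) and Lemma~\ref{lem:underlyingrelation} there is a $\mathrm{pps}$-formula over $\{\OR,\EQ\}$ computing a function $G$ with $R_G=R_F$; (5) correct the values: write $F$ as $G$ times a product of unary ``indicator-of-a-single-tuple'' weights obtained as limits, or more cleanly, enumerate the satisfying tuples $\vecx^{(1)},\dots,\vecx^{(N)}$ of $R_F$ and build $F$ as a sum over these of $F(\vecx^{(i)})\cdot(\text{pinning of the all-ones relation to }\vecx^{(i)})$, each pinning being a product of $\unaryzero,\unaryone$ which are available since $\codomaineff_1$ is; sum these via fresh bound variables, or rather note directly that a function supported on a single tuple with a prescribed efficient value is in $\fclonelimeff{\codomaineff_1}$ and that $F$ is a sum of such — but summation of a \emph{sum} of functions is not a $\mathrm{pps}$ operation, so instead realise $F$ by first building the relation $R_F$ and then multiplying by a single carefully chosen product of unary weights after partitioning the cube, using that $R_F$ already isolates the support. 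Throughout, the powering limits are efficient (an $\epsilon$-approximation needs $O(\log\epsilon^{-1})$ copies), and the formula sizes are polynomial in $n$ and $\log\epsilon^{-1}$, so efficient $\mathrm{pps}_\omega$-definability is preserved; Lemma~\ref{lem:fclonelimeffTransitive} lets us compose the stages.

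The main obstacle I anticipate is step~(5): getting from ``we can express $R_F$'' to ``we can express $F$ with arbitrary efficient nonzero values,'' because $\mathrm{pps}$-formulas only allow a single product and a single summation, so one cannot naively add together single-tuple gadgets. The clean fix is to handle the values one at a time by \emph{ratios}: if we can implement, for any two tuples $\vecx,\vecy$ in the support and any efficient ratio $\rho>0$, a function equal to $R_F$ but with the value at $\vecy$ scaled by $\rho$ relative to $\vecx$, then iterating fixes all values up to a global constant (a nullary factor). Such a ratio-adjusting gadget can be built from $\IMP$ and $\NEQ$ plus unary weights, since these already let us make functions like $\IMP_\alpha$ with a tunable subdiagonal entry, and by attaching such a gadget to an auxiliary variable pinned appropriately one shifts the relative weight of exactly the configurations passing through a chosen coordinate pattern. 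Making this local surgery precise — ensuring it touches only the intended tuple and that the zero pattern $R_F$ is preserved exactly (hence the need for limits rather than approximations inside the construction, controlled by the error-propagation bookkeeping of Lemma~\ref{lem:fclonelimeffTransitive}) — is where the real care lies; the affine-plus-nonlinear generation of the Boolean relational clone in step~(3) is standard and the binary-function recoveries in step~(1) are already in hand.
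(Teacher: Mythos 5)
Your outline has the right ingredients at the edges (recovering $\IMP$, $\NAND$, $\NEQ$ from $\OR$ via Lemma~\ref{lem:binarycases}, using efficient powering limits, composing stages via Lemma~\ref{lem:fclonelimeffTransitive}), but the two load-bearing steps are both broken or missing, and the second of them is precisely where the paper's one real idea lives.

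First, the detour through parity and Post's lattice does not work as stated. $\OR$ is a $2$-clause and hence \emph{is} bijunctive, so the relational clone generated by $\OR$, $\NAND$, $\NEQ$ and the unary relations is only the bijunctive co-clone, not everything; your step~(3) as written is false. You plan to repair this by first manufacturing $\oplus_3$, but you offer no gadget for it --- ``powering binary functions'' can only ever yield binary functions, and a ternary relation outside the bijunctive co-clone cannot be reached by pps-formulas over $\{\OR\}\cup\codomaineff_1$ at all: Lemma~\ref{lem:underlyingrelation} concerns $\fclone{\cdot}$ without limits, so it tells you the underlying relations of $\fclone{\OR,\codomaineff_1}$ are exactly the bijunctive ones. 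Escaping that co-clone genuinely requires limits, for which no analogue of Lemma~\ref{lem:underlyingrelation} is available in the paper or supplied by you. Second, even granting all of $R_F$, your step~(5) is deferred rather than solved: a product of unary weights is log-modular and cannot retouch the value at a single tuple, and the ``ratio-adjusting gadget attached to an auxiliary variable'' that you gesture at is exactly the construction you would need to exhibit.

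The paper's proof avoids both problems with a single direct construction. For each $A$ in the support of $F$ it introduces one bound variable $z_A$ constrained by $\prod_{i\in A}\IMP(z_A,x_i)\prod_{i\notin A}\NAND(z_A,x_i)$; this gadget has two satisfying values of $z_A$ exactly when $\vecx=\constone_A$ and one otherwise, so weighting $z_A$ by the unary function $(1,\,2F(A)/\mu-1)$ multiplies the partition function by $2F(A)/\mu$ at $\constone_A$ and by $1$ everywhere else. The unweighted version of the same gadget gives the function equal to $1$ on the support and $\nicefrac12$ off it, whose powering limit is $R_F$ --- no relational-clone argument and no parity relation are needed. You would need to discover this $z_A$ gadget (or an equivalent) to complete either your step~(2) or your step~(5); as it stands the proposal identifies the difficulty correctly but does not overcome it.
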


\begin{proof}
Suppose $F\in \codomaineff_n$. Suppose $x_1,\ldots,x_n$ are variables. For each $A\subseteq[n]$, let $\ind$ be the assignment to $x_1,\ldots,x_n$ in which $x_i=1$ if $i\in A$ and $x_i=0$ otherwise. Then we will use the notation $F(A)$ as shorthand for $F(\ind)$. Let $\calA = \{A : F(A)>0\}$, and let $\mu=\min_{A\in\calA}F(A)$. For any $A\subseteq [n]$, let $u_A\in\codomaineff_1$ be the function such that $u_A(0)=1$ and $u_A(1) = 2F(A)/\mu-1\geq 1$. Note that every function $u_A$ is in $\codomaineff_1$ and we have $\IMP,\NAND\in \fclonelimeff{\OR,\codomaineff_1}$, from the proof of Lemma~\ref{lem:binarycases}.

Our goal will be to show that there is a finite subset $S_F$ of $\{\IMP,\NAND\}\cup\codomaineff_1$ and a TM $M_{F,S_F}$
with the following property:
on input $\epsilon>0$, $M_{F,S_F}$ computes
an arity-$n$ \ppformula~$\psi$ over $S_F$
such that $\| F_\psi-F\|_\infty<\varepsilon$.
The running time of $M_{F,S_F}$ should be at most a polynomial in  $\log\epsilon^{-1}$.
To define $S_F$, we will use two unary functions~$U_1$ and~$U_2$
(both of which are actually constant functions).
We define these by  $U_1(0)=U_1(1)=\half$ and $U_2(0)=U_2(1)=\mu/2$.
Then $S_F = \{\IMP,\NAND,U_1,U_2\}\cup \bigcup_{A\in \calA} \{u_A\}$.

Let $V = \{v_1,\ldots,v_n\}$.
For $A\in \calA$, introduce a new variable~$z_A$.
Let $V'' = V \cup \{z_A \mid A\in \calA\}$.
Let
$$\psi_1 = \sum_{V''} \left(\prod_{A\in \calA} u_A(z_A)\right)\left(\prod_{i\in A}\IMP(z_A,x_i)\right)\left(\prod_{i\notin A}\NAND(z_A,x_i)\right).$$

For every $A\in \calA$
the assignment $\vecx=\ind$
can be extended in two ways (both with $z_A=0$ and with $z_A=1$) to satisfy
\begin{equation}
\label{eqstar}
\left(\prod_{i\in A}\IMP(z_A,x_i)\right)\left(\prod_{i\notin A}\NAND(z_A,x_i)\right)=1.
\end{equation}
Any other assignment $\vecx$ can be
extended in only one way ($z_A=0$) to satisfy (\ref{eqstar}).
So if $A\in \calA$
then
$$F_{\psi_1}(A) = (2F(A)/\mu-1) + 1 = 2F(A)/\mu.$$
On the other hand, if $A\notin\calA$
then
$$F_{\psi_1}(A) = 1.$$
We have shown that $F_{\psi_1}\in\fclone{S_F}$.
Let us now define
$$\psi_2 = \sum_{V''} \left(\prod_{A\in \calA}\prod_{i\in A}\IMP(z_A,x_i)\right)\left(\prod_{i\notin A}\NAND(z_A,x_i)\right).$$
As before, for every $A\in \calA$ the assignment $\vecx=\ind$
can be extended in two ways ($z_A=0$ and $z_A=1$) to satisfy
\eqref{eqstar}, and any other assignment $\vecx$ can be
extended in only one way ($z_A=0$) to satisfy it. So
$$F_{\psi_2}(A) = 2\quad(A\in \calA),\qquad F_{\psi_2}(A) = 1\quad(A\notin \calA).$$
Thus $F_{\psi_2}\in\fclone{ S_F}$. Now define $F_3$ by $F_3(A) = U_1(x_1)F_{\psi_2}(A)$, so $F_3\in\fclone{ S_F}$, where
$$F_3(A) = 1\quad(A\in \calA),\qquad F_3(A) = \half\quad(A\notin \calA).$$
Now $\lim_{k\to\infty} F_3^k=F_0$, where
$$F_0(A) = 1\quad(A\in \calA),\qquad F_0(A) = 0\quad(A\notin \calA),$$
and thus $F_0\in\fclonelimeff{S_F}$.

Note that $F_0=R_F$, the underlying relation of $F$. Now define $F_4=F_{\psi_1}F_0$, so that
$$F_4(A) = 2F(A)/\mu\quad(A\in \calA),\qquad F_4(A) = 0\quad(A\notin \calA),$$
Thus, by Lemma~\ref{lem:fclonelimeffTransitive},
$F_4\in\fclonelimeff{ S_F}$.  Now define $F_5$ by $F_5(A) = U_2(x_1)F_4(A)$, so $F_5\in\fclonelimeff{ S_F}$, where
$$F_5(A) = F(A)\quad(A\in \calA),\qquad F_5(A) = 0\quad(A\notin \calA).$$
Since $F_5=F$, the proof is complete.
\end{proof}

\section{The main theorem}\label{sec:main}
In this section we prove our main structural result, which characterises, in the conservative case, the clones generated by a single pseudo-Boolean function. Since it is known that any clone generated by a finite set of functions can be generated by a single function~\cite{BDGJJR12}, we have a characterisation of all
finitely generated
functional clones.

\begin{theorem}\label{thm:main}
Suppose $F\in\codomaineff$.
\begin{itemize}
\item If $F\notin\fclone{\NEQ,\codomaineff_1}$ then $\IMP\in\fclonelimeff{F,\codomaineff_1}$,
and hence $\fclonelimeff{\IMP,\codomaineff_1}\subseteq\fclonelimeff{F,\codomaineff_1}$
\item If, in addition, $F\notin\LSM$ then $\fclonelimeff{F,\codomaineff_1}=\codomaineff$.
\end{itemize}
The non-efficient version --- with $\codomainbool,\codomainbool_1$ replacing
$\codomaineff,\codomaineff_1$, and $\fclonelim\cdot$ replacing $\fclonelimeff\cdot$ --- also holds.
\end{theorem}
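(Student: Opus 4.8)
The plan is to reduce everything to the binary classification of Lemma~\ref{lem:binarycases} together with Lemma~\ref{lem:OR}. Recall that Lemma~\ref{lem:binarycases} sorts each $G\in\codomaineff_2$ into the three ``$\NEQ$-types'' \textup{(i)--(iii)}, all satisfying $G\in\fclone{\NEQ,\codomaineff_1}$, and the ``super-$\NEQ$'' types \textup{(iv)}, \textup{(v)}, for which $\IMP\in\fclonelimeff{G,\codomaineff_1}$ and, in type~\textup{(v)}, even $\fclonelimeff{G,\codomaineff_1}=\codomaineff$. So for the first bullet it suffices to produce a single binary $G\in\fclonelimeff{F,\codomaineff_1}$ of type \textup{(iv)} or \textup{(v)}: then $\IMP\in\fclonelimeff{G,\codomaineff_1}\subseteq\fclonelimeff{F,\codomaineff_1}$ by transitivity (Lemma~\ref{lem:fclonelimeffTransitive}). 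For the second bullet, having obtained $\IMP$ already, it suffices to produce a binary $G\in\fclonelimeff{F,\codomaineff_1}$ that is \emph{not} log-supermodular; such a $G$ is of type \textup{(iii)}, \textup{(v)(a)} or \textup{(v)(b)}, and in the first case $\NEQ\in\fclonelimeff{G,\codomaineff_1}$, so $\OR\in\fclonelimeff{F,\codomaineff_1}$ via the identity $\sum_{z}\NEQ(x,z)\,\IMP(z,y)=\OR(x,y)$ and hence $\fclonelimeff{F,\codomaineff_1}=\codomaineff$ by Lemma~\ref{lem:OR}, while the type~\textup{(v)} cases give this directly.

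To find the binary witness $G$ I would organise the argument around the underlying relation $R_F$, passing between $F$ and $R_F$ by Lemma~\ref{lem:underlyingrelation} and using the conservative fragment of Post's lattice (see~\cite{CKZ}). First, if $R_F$ is not affine then $\rclone{R_F,\delta_0,\delta_1}$ contains one of the three-tuple binary relations $\IMP$, $\OR$, $\NAND$ (these are, up to swapping the two coordinates, the minimal non-affine relations, and a non-affine conservative relational clone must contain one of them); by Lemma~\ref{lem:underlyingrelation} some $G\in\fclone{F,\codomaineff_1}$ has $R_G$ equal to one of these, and by Lemma~\ref{lem:binarycases} a binary function with support $\IMP$ has type~\textup{(iv)} and one with support $\OR$ or $\NAND$ has type~\textup{(v)}, so we obtain the required $G$. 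Moreover, in the $\OR/\NAND$ sub-cases we already get $\fclonelimeff{F,\codomaineff_1}=\codomaineff$, settling both bullets there.

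So assume $R_F$ is affine. If $R_F$ is not $2$-decomposable (equivalently, not definable using only $\EQ$, $\NEQ$ and pinning), it entails an irredundant parity constraint on $k\ge3$ coordinates; pinning the remaining variables to meet the other constraints leaves a function supported on a $k$-ary parity relation, and pinning all but three of those coordinates leaves a positive function on the four points of a ternary parity relation. A short computation shows that one of its three one-variable summations --- preceded, in the degenerate case where the function is a constant multiple of the parity indicator, by multiplication by a non-constant unary weight --- is a permissive binary function that is not of product form, hence of type~\textup{(iv)} or~\textup{(v)}. If instead $R_F$ is $2$-decomposable, it is a conjunction of $\EQ$, $\NEQ$ and pinning constraints; pinning the forced coordinates and identifying $\EQ$-linked ones turns $F$ into a permissive function $H$ on ``free representative'' variables, with $H\in\fclone{F,\codomaineff_1}$ and $F\in\fclone{\NEQ,H,\codomaineff_1}$. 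If some $2$-pinning of $H$ is not of product form it is a permissive type-\textup{(iv)/(v)} binary function; otherwise every $2$-pinning of $H$ is log-modular, so by Topkis's lemma (Lemma~\ref{lem:Topkis}) $H$ is log-modular, hence a product of unary functions, hence $F\in\fclone{\NEQ,\codomaineff_1}$, contrary to hypothesis. This finishes the first bullet.

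For the second bullet I would revisit these cases with the extra information that $F\notin\LSM$, fixing a non-lsm witness $\vecx,\vecy\in R_F$. If $F$ is permissive, Lemma~\ref{lem:Topkis} already supplies a non-lsm $2$-pinning of $F$, which (being a permissive non-lsm binary function) is of type~\textup{(v)(a)}. If $F$ is not permissive, the $\OR/\NAND$ sub-cases above already give $\codomaineff$, and in the remaining cases one pins the coordinates on which $\vecx$ and $\vecy$ agree (and identifies $\EQ$-linked coordinates), obtaining either a permissive non-lsm binary function or a strictly smaller instance with the same two hypotheses, so an induction on arity closes the argument; the affine non-$2$-decomposable case needs nothing new, since the ternary-parity function built above is itself not lsm and produces a type-\textup{(v)} binary function. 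The non-efficient version goes through verbatim, the only limits used being ``powering'' limits $\IMP_\alpha^k\to\IMP$ (and similarly for $\OR$), which are identical in both settings. The step I expect to be the main obstacle is precisely this handling of functions with zeros, where Topkis's lemma is unavailable --- it genuinely fails for non-permissive functions, as the Remark following Lemma~\ref{lem:Topkis} shows --- and especially the sub-case where $R_F$ generates only the $\IMP$-clone (so $R_F$ is a sublattice): there, as the ternary parity relation illustrates, no ``bad'' binary function can be produced by pinning alone, and one must exploit both the sublattice structure of $R_F$ and the non-lsm witness, via summation gadgets and the induction on arity, to manufacture the required type-\textup{(iv)/(v)} binary function.
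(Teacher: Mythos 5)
Your first bullet is essentially the paper's argument: both proofs pass to the relational clone $\rclone{R_F,\delta_0,\delta_1}$, handle the non-affine case via the fact that this clone then contains one of $\IMP$, $\OR$, $\NAND$, handle the genuinely affine case by extracting a positively weighted copy of $\oplus_3$ and summing out one coordinate against a non-constant unary weight, and handle the $\EQ/\NEQ$-definable case by summing out the determined coordinates to obtain a permissive function to which Lemma~\ref{lem:Topkis} applies, concluding either a type-(iv)/(v) binary $2$-pinning or $F\in\fclone{\NEQ,\codomaineff_1}$. Your ``$2$-decomposable or not'' split is a hands-on rendering of the paper's trichotomy $C\in\{\IRt,\IDo,\ILt\}$ read off Post's lattice, and your degenerate-parity fix (multiply by a non-constant unary weight before summing) is the same device the paper uses after normalising its symmetrised function to $\oplus_3$. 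This part is fine, modulo the glossed detail that after pinning you must also sum out the remaining determined coordinates to land on a genuinely ternary function supported on $\oplus_3$.

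The gap is in the second bullet, and it sits exactly where you flag ``the main obstacle''. When $F$ has zeros and $R_F$ is a sublattice, you propose to pin the coordinates on which the non-lsm witness pair $\vecx,\vecy$ agree and then induct on arity; but after that first pinning $\vecx$ and $\vecy$ disagree in every remaining coordinate, so any further pinning destroys the witness, and, as your own appeal to the Remark following Lemma~\ref{lem:Topkis} concedes, a non-permissive non-lsm function can have every $2$-pinning log-modular --- pinning cannot manufacture the required binary function. Your fallback, ``summation gadgets and the induction on arity'', is a placeholder, not an argument. The paper's proof supplies the missing idea: from $\IMP$ (already available by the first bullet) build the permissive soft equality $H(x_1,x_2)=\sum_{y_1,y_2}\IMP(y_1,x_1)\IMP(y_1,x_2)\IMP(x_1,y_2)\IMP(x_2,y_2)$, which satisfies $H(0,0)=H(1,1)=2$ and $H(0,1)=H(1,0)=1$, and set $H_k(\vecx)=\sum_{\vecy}F(\vecy)\prod_{i}H(x_i,y_i)^k$. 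Then $H_k$ is strictly positive, lies in $\fclonelimeff{F,\codomaineff_1}$, and converges (after rescaling by $2^{-kn}$) to $F$, so for large $k$ it still violates log-supermodularity strictly; Lemma~\ref{lem:Topkis} now applies to the permissive $H_k$ and yields a strictly positive binary $F_1$ with $F_1(0,0)F_1(1,1)<F_1(0,1)F_1(1,0)$, i.e.\ a case-(v) function, whence $\OR$ and, by Lemma~\ref{lem:OR}, all of $\codomaineff$. Without this smoothing construction (or an equivalent device for making $F$ permissive while preserving the strict non-lsm inequality) your induction does not close.
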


\begin{proof}
We start with the first part of the theorem. The aim is to show that
either  $\IMP\in\fclonelimeff{F,\codomaineff_1}$
or $F\in\fclone{\NEQ,\codomaineff_1}$.
Let $C$ be the relational clone $\rclone{R_F,  \unaryzero, \unaryone}$.
Since $\{R_F,\delta_0,\delta_1\} \subseteq \{R_{F'} \mid F' \in  \{F\} \cup \codomaineff_1  \}$,
$C \subseteq \rclone {R_{F'} \mid F' \in  \{F\} \cup \codomaineff_1}$,
so by Lemma~\ref{lem:underlyingrelation},
$C \subseteq \{ R_{F'} \mid F' \in \fclone{F,\codomaineff_1} \}$.

First, suppose $\IMP \in C$.
Then $\fclonelimeff{F,\codomaineff_1}$ contains a function $F'$ such that $R_{F'} = \IMP$.
The function~$F'$ falls into parts (iv) or (v)  of Lemma~\ref{lem:binarycases},
so by this lemma,
$\fclonelimeff{F,\codomaineff_1}$ is either $\fclonelimeff{\IMP,\codomaineff_1}$ or $\fclonelimeff{\OR,\codomaineff_1}$.
Either way,
$\fclonelimeff{F,\codomaineff_1}$
contains $\IMP$ (as noted in Remark~\ref{rem:inclusions}).
Similarly, if $\OR\in C$ or $\NAND\in C$
then $\IMP\in\fclonelimeff{F,\codomaineff_1}$.

We now consider the possibilities.
If $R_F$ is not affine, then Creignou, Khanna and Sudan \cite[Lemma 5.30]{CKS}
have shown that one of $\IMP$, $\OR$ and $\NAND$ is in~$C$.
This is also stated and proved as \cite[Lemma 15]{trichotomy}.
\begin{figure}[!t]
\centerline{\includegraphics{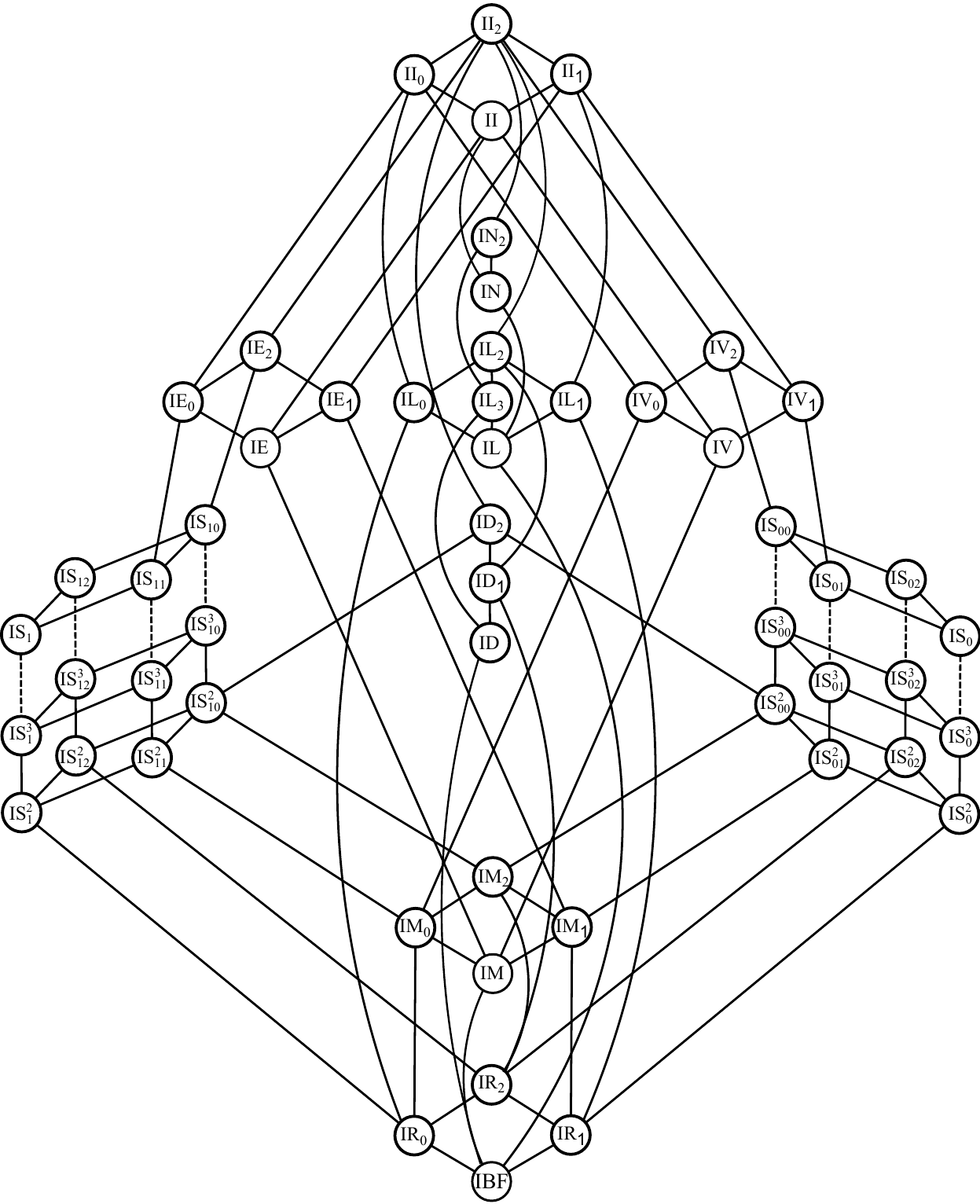}}
\caption{Post's lattice from \cite[Fig. 2]{playing}.}
\label{fig:post}
\end{figure}
In fact, the set of all relational clones (also called ``co-clones'') is   well understood.
They are listed in \cite[Table 2]{CKZ}, which gives a plain basis for each relational clone.
There is a similar table in~\cite{vollmer} (though the bases given there are not plain).
A Hasse diagram illustrating the inclusions between
the relational clones is  depicted in \cite[Figure 2]{playing}.
This diagram is reproduced here as Figure~\ref{fig:post}.
A downwards edge from one clone to another
indicates that the lower clone is a subset of the higher one.
For example, since there is a path (in this case, an edge) from $\IDo$ down to~$\IRt$ in Figure~\ref{fig:post}, we deduce that $\IRt \subset \IDo$.
We will not require bases for all relational clones,
but we have reproduced the part of \cite[Table 2]{CKZ} that we use here
as~Table~\ref{tab:post}.

\begin{table}
\begin{tabular}{||l||l||}
\hline
 & Plain Basis \\
\hline
$\IRt$ & $\{\EQ,\unaryzero,\unaryone\}$ \\
$\IDo$ & $\{\EQ,\NEQ,\unaryzero,\unaryone\}$ \\
$\ILt$ & $\{\{(x_1,\ldots,x_k)\in\{0,1\}^k\mid x_1+\ldots+ x_k=c\pmod 2\}\mid k\in\mathbb N, c\in\B\}$ \\
\hline
\end{tabular}
\medskip
\caption{The relevant portion of Table~2 of \cite{CKZ}:
Some relational clones and their plain bases.}
\label{tab:post}
\end{table}

If $R_F$ is affine then the relations in~$C$ are given by linear equations,
so $C$ is either the relational clone $\ILt$ (whose plain basis the set of all Boolean linear equations)
or $C$ is some subset of $\ILt$, in which case it is below $\ILt$ in Figure~\ref{tab:post}.

Now, $\EQ$, $\unaryzero$ and $\unaryone$ are in~$C$.
The relational clone containing  these relations (and nothing else) is $\IRt$, so
$C$ is a (not necessarily proper) superset of $\IRt$.
Thus, $C$ is (not necessarily strictly) above $\IRt$ in Figure~\ref{tab:post}.
From the figure, it is  clear that the only possibilities are that $C$ is one of
the relational clones $\ILt$, $\IDo$  and $\IRt$.

Now $\IRt \subset\IDo$ and the plain basis of $\IDo$ is
$\{\EQ,\NEQ,\unaryzero,\unaryone\}$.   Therefore
if $C=\IRt$ or $C=\IDo$, then
$R_F$ is definable by a CNF formula over
$\{\EQ, \NEQ, \unaryzero, \unaryone\}$.

Suppose that $F(\vecx)$ has arity~$n$ and, to avoid trivialities,
that $R_F$ is not the empty relation.
Suppose that $\psi(v_1,\ldots,v_n)$ is a
CNF formula over
$\{\EQ, \NEQ, \unaryzero, \unaryone\}$ implementing the
relation $R_\psi = R_F$.

Let $V = \{v_1,\ldots,v_n\}$.
Let $\psi_i$ be the projection of~$\psi$ onto variable~$v_i$.
$\psi_i$ is one of the three unary relations $\{(0)\}$, $\{(1)\}$, and $\{(0),(1)\}$.
Let
$V' = \{v_i \in V \mid \psi_i = \{(0),(1)\} \}$. ($V'$ is the set of variables that are not
pinned
in $R_F$.)
For $v_i \in V'$ and $v_j \in V'$,
let $\psi_{i,j}$ be the projection of~$\psi$ onto variables~$v_i$ and~$v_j$.
$\psi_{i,j}$ is a binary relation.
Of the 16 possible binary relations,
the only ones that can occur are $\EQ$, $\NEQ$ and $\B^2$.
The empty relation is ruled out since $R_F$ is not empty.
The four single-tuple binary relations are ruled out since $v_i$ and $v_j$ are in $V'$.
For the same reason, the other four two-tuple binary relations are ruled out.
The three-tuple binary relations are ruled out since $\psi_{i,j} \in \IDo$.

We define an equivalence relation~$\sim$ on $V'$ in which
$v_i\sim v_j$ iff $\psi_{i,j} \in \{\EQ,\NEQ\}$.
Let $V''$ contain exactly one variable from each equivalence class in~$V'$. Let $k = |V''|$.
For convenience, we will assume $V'' = \{v_1,\ldots,v_k\}$. (This can be achieved by renaming variables.)

Now, for every assignment
$\vecx:\{v_1,\ldots,v_k\}\to\B$
there is exactly one assignment
$\vecy:\{v_{k+1},\ldots,v_n\} \to \B$
such that $R_F(\vecx,\vecy) = 1$.
Let $\sigma(\vecx)$ be this assignment~$\vecy$.
Now, define the arity-$k$ function~$G$
by $G(\vecx) = F(\vecx,\sigma(\vecx))$.
Note that
\begin{equation}
\label{eq:G}
G(\vecx) = \sum_{\vecy \in\B^{n-k}} F(\vecx,\vecy),
\end{equation}
where $\vecy$ is an assignment  $\vecy:\{v_{k+1},\ldots,v_{n} \}\to\B$.
Clearly, from (\ref{eq:G}), $G \in \fclonelimeff{F,  \codomaineff_1}$.
Also, by construction, $G(\vecx)$ is a permissive function so we can apply~Lemma~\ref{lem:Topkis}. We finish with two cases.

{\it Case 1.\enspace}
Every 2-pinning of $G$ is log-modular.
Then $G$ is log-modular, by Lemma~\ref{lem:Topkis}.
This means (see, for example,~\cite[Proposition 24]{BorosHammer})
that $g = \log_2 G$ is an affine function of $x_1,\ldots,x_k$ and $\bar{x}_1,\ldots, \bar{x}_k$ so $G\in \fclone{\NEQ,\codomaineff_1}$.
For example, if $g = a_0+a_1 x_1 + a_2 x_2 + a_3\bar{x}_3$
then $G$ can be written as
\[G(x_1,x_2,x_3) = \sum_{y_3} U_0\,U_1(x_1) U_2(x_2) U_3(y_3) \NEQ(x_3,y_3),\]
where $U_0=2^{a_0}\in\codomaineff_0$ and $U_i(x) = 2^{a_i x}\in\codomaineff_1$. Since $F(\vecx,\vecy) = R_F(\vecx,\vecy) G(\vecx)$, we conclude that $F\in \fclone{\NEQ,\codomaineff_1}$.

{\it Case 2.\enspace} There is a 2-pinning $G'$ of $G$ that is not log-modular.
Since $G$ is strictly positive, so is $G'$.
Since $G \in \fclonelimeff{F,  \codomaineff_1}$,
so is $G'$.
 By Lemma~\ref{lem:binarycases},
(parts (iv) or (v)), $\IMP\in \fclonelimeff{G',\codomaineff_1}$.
By Lemma~\ref{lem:fclonelimeffTransitive}, $\IMP\in\fclonelimeff{F,\codomaineff_1}$.

Finally, we consider the case in which $C=\ILt$.
Let $\oplus_3$ be the relation
$\{(0,0,0),\allowbreak(0,1,1),\allowbreak(1,0,1),\allowbreak(1,1,0)\}$ containing all triples whose Boolean sums are~$0$.
From the plain basis of $\ILt$ (Table~\ref{tab:post}),
we see that the relation~$\oplus_3$ is
in $C$, so  $\fclone {F,\codomaineff_1}$ contains a function $F'$ with $R_{F'} = \oplus_3$.
Let $F''$ be the symmetrisation of $F'$ implemented by
$$F''(x,y,z) = F'(x,y,z)F'(x,z,y) F'(y,x,z)F'(y,z,x)F'(z,x,y)F'(z,y,x).$$
Now let $\mu_0 = F''(0,0,0)$ and $\mu_2 = F''(0,1,1)$.
Let $U$ be the unary function with $U(0)=\mu_0^{-1/3}$ and
$U(1) = \mu_0^{1/6} \mu_2^{-1/2}$.
Note that since $F\in \codomaineff$, the
appropriate roots of $\mu_0$ and $\mu_2$ are efficiently  computable, so
$U\in\codomaineff_1$.
Now
$\oplus_3(x,y,z) = U(x) U(y) U(z) F''(x,y,z)$,
so $\oplus_3 \in \fclone{F,\codomaineff_1} $.
Finally, let $U'$ be the unary function defined by $U'(0)=1$ and $U'(1)=2$
and let $G(x,z) = \sum_y \oplus_3(x,y,z) U'(y)$.
Note that $G(0,0) = G(1,1)=1$ and $G(0,1)=G(1,0)=2$.
By Lemma~\ref{lem:fcloneTransitive}, $G$ is in $\fclone{F,\codomaineff_1} $.
But by Lemma~\ref{lem:binarycases},
$\IMP \in \fclonelimeff{G,\codomaineff_1}$
so by Lemma~\ref{lem:fclonelimeffTransitive},
$\IMP \in \fclonelimeff{F,\codomaineff_1}$.

We now prove Part~2 of the theorem.
Suppose that $F$ is not lsm and
that $F\notin\fclone{\NEQ,\codomaineff_1}$
so, by Part~1 of the theorem,
we have $\IMP\in\fclonelimeff{F,\codomaineff_1}$.
Let $$H(x_1,x_2) = \sum_{y_1,y_2} \IMP(y_1,x_1)\IMP(y_1,x_2)\IMP(x_1,y_2)\IMP(x_2,y_2).$$
Note that $H(0,0)=H(1,1)=2$ and $H(0,1)=H(1,0)=1$.
Now for any integer~$k$, let
$$H_k(x_1,\ldots,x_n) = \sum_{y_1,\ldots,y_n} F(y_1,\ldots,y_n) \prod_{i=1}^n H(x_i,y_i)^k.$$
By construction, $H_k$ is strictly positive.
Also, as $k$ gets large,
$H_k(x_1,\ldots,x_n)$ gets closer and closer to  $2^{kn} F(x_1,\ldots,x_n)$.
Thus, for sufficiently large~$k$, $H_k$ is not lsm.
By Lemma~\ref{lem:fcloneTransitive},
$H\in \fclonelimeff{F,\codomaineff_1}$ so
$H_k\in \fclonelimeff{F,\codomaineff_1}$.
Applying Lemma~\ref{lem:Topkis} to $H_k$,
there is a binary function $F_1\in\fclone{F,\codomaineff_1}$
that is not lsm
so $$F_1(0,0)F_1(1,1) < F_1(0,1)F_1(1,0).$$
By Parts (iii) and (v) of Lemma~\ref{lem:binarycases}, we either have
$\NEQ\in \fclone{F,\codomaineff_1}$
or $\OR\in \fclone{F,\codomaineff_1}$. In the latter case, we are finished by Lemma~\ref{lem:OR}.
In the former case, we are also finished
since (in the notation of the proof of Lemma~\ref{lem:binarycases}) $M(\NEQ) M(\IMP) = M(\OR)$ so  $\OR\in\fclone{\IMP,\NEQ}$.
\end{proof}

\section{Complexity-theoretic consequences}\label{sec:complexity}

In order to explore the consequences of Theorem~\ref{thm:main} for the
computational complexity of approximately evaluating $\nCSP$s, we
recall the following definitions of FPRASes and AP-reductions from
\cite{DGGJ}.

For our purposes, a counting problem is a function $\varPi$ from instances~$w$
(encoded as a word over some alphabet~$\varSigma$) to a number $\varPi(w)\in\Rnonneg$.  For example,
$w$ might encode  an instance~$I$ of a counting CSP problem $\nCSP(\varGamma)$, in which
case $\varPi(w)$ would be the partition function $Z(I)$ associated with~$I$.
A \emph{randomised approximation scheme\/} for~$\varPi$
is a randomised algorithm that takes an instance~$w$
and returns an approximation $Y$ to $\varPi(w)$.
The approximation scheme has a parameter~$\varepsilon>0$
which specifies the error tolerance.  Since the algorithm is randomised,
the output~$Y$ is a random variable depending on the ``coin tosses''
made by the algorithm.  We require that, for every instance~$w$
and every $\varepsilon>0$,
\begin{equation}
\label{eq:FPRASerrorprob}
\Pr \big[e^{-\varepsilon} \varPi(w)\leq Y \leq
e^\varepsilon \varPi(w)\big]\geq \nicefrac34\, .
\end{equation}
The randomised approximation scheme is said to be a
\emph{fully polynomial randomised approximation scheme},
or \emph{FPRAS},
if it runs in time bounded by a polynomial
in~$|w|$ (the length of the word~$w$) and $\epsilon^{-1}$.
See Mitzenmacher and Upfal~\cite[Definition 10.2]{MU05}.
Note that the quantity $\nicefrac34$ in
Equation~(\ref{eq:FPRASerrorprob})
could be changed to any value in the open
interval $(\nicefrac12,1)$ without changing the set of problems
that have randomised
approximation schemes~\cite[Lemma~6.1]{JVV86}.

Suppose that $\varPi_1$ and $\varPi_2$ are functions from
$\varSigma^{\ast }$ to~$\Rnonneg$.  An ``approx\-imation-preserving reduction''
(AP-reduction)~\cite{DGGJ}
from~$\varPi_1$ to~$\varPi_2$ gives a way to turn an FPRAS for~$\varPi_2$
into an FPRAS for~$\varPi_1$.  Specifically, an {\it AP-reduction
from $\varPi_1$ to~$\varPi_2$} is a randomised algorithm~$\mathcal{A}$ for
computing~$\varPi_1$ using an oracle\footnote{The
reader who is not familiar with oracle Turing machines
can just think of this as an imaginary (unwritten)
subroutine for computing~$\varPi_2$.}
for~$\varPi_2$. The algorithm~$\mathcal{A}$ takes
as input a pair $(w,\varepsilon)\in\varSigma^*\times(0,1)$, and
satisfies the following three conditions: (i)~every oracle call made
by~$\mathcal{A}$ is of the form $(v,\delta)$, where
$v\in\varSigma^*$ is an instance of~$\varPi_2$, and $0<\delta<1$ is an
error bound satisfying $\delta^{-1}\leq\poly(|w|,
\varepsilon^{-1})$; (ii)~the algorithm~$\mathcal{A}$ meets the
specification for being a randomised approximation scheme for~$\varPi_1$
(as described above) whenever the oracle meets the specification for
being a randomised approximation scheme for~$\varPi_2$; and (iii)~the
run-time of~$\mathcal{A}$ is polynomial in $|w|$ and
$\varepsilon^{-1}$.  Note that
the class of functions computable by an FPRAS is closed
under AP-reducibility.
Informally, AP-reducibility is the most liberal notion
of reduction meeting this requirement.
If an AP-reduction from $\varPi_1$ to~$\varPi_2$
exists we write $\varPi_1\APred\varPi_2$.
If $\varPi_1\APred\varPi_2$ and $\varPi_2\APred\varPi_1$ then we say that
{\it $\varPi_1$ and $\varPi_2$ are AP-interreducible}, and write $\varPi_1\APeq\varPi_2$.

A word of warning about terminology.  Subsequent to~\cite{DGGJ} the notation~$\APred$ has been used
to denote a different type of approximation-preserving reduction which applies to optimisation
problems. We will not study optimisation problems in this paper, so hopefully this will not
cause confusion.

The complexity of approximating Boolean \#CSPs in the unweighted case,
where the functions in $\varGamma$ have codomain $\B$,
was studied earlier~\cite{trichotomy} by some of the authors  of this paper.
Two counting problems played a special role there, and in
previous work in the complexity of approximate counting~\cite{DGGJ}.
They also play a key role here.

\prob{$\SAT$}{A Boolean formula $\varphi$
in conjunctive normal form.}{The number of satisfying
assignments of~$\varphi$.}

\prob{$\BIS$.}{A bipartite graph $B$.}{The number of independent sets
in~$B$.}

An FPRAS for $\SAT$ would, in particular, have to decide with
high probability between a formula having some satisfying assignments
or having none.  Thus $\SAT$ cannot have an FPRAS unless
$\NP=\RP$.\footnote{The supposed FPRAS would provide
a polynomial-time decision procedure for satisfiability with two-sided
error;  however, there is a standard trick for converting two-sided
error to the one-sided error demanded by the definition of~$\RP$ \cite[Thm 10.5.9]{Weg}.}
The same is true of any problem to which $\SAT$
is AP-reducible.  As far as we are aware,
the complexity of approximating $\BIS$ does not
relate to any of the standard complexity theoretic assumptions,
such as $\NP\not=\RP$.  Nevertheless, there is increasing
empirical evidence that no FPRAS for $\BIS$ exists, and we adopt this
as a working hypothesis.  Of course, this hypothesis implies that
no $\BIS$-hard problem (problem to which $\BIS$ is AP-reducible)
admits an FPRAS\null.

Finally, a precise statement of the computational task we are
interested in.
A (weighted) \#CSP problem is parameterised by a finite subset~$\mathcal{F}$
of~$\codomaineff$ and defined as follows.

\prob{$\nCSP(\calF)$} {A \ppformula{} $\psi$ consisting of a product of $m$~atomic $\calF$-formulas
over $n$~free variables~$\vecx$.  (Thus, $\psi$ has no bound variables.)} {The value
$\sum_{\vecx\in\B^n} F_\psi(\vecx)$ where $F_\psi$ is
the function defined by that formula.}

Officially, the input size $|w|$ is the length of the encoding of the instance.
However, we shall take the size of a $\nCSP(\calF)$ instance to be
$n+m$, where $n$ is the number of (free) variables and $m$ is the number of
constraints (atomic formulas).  This is acceptable, as we are only concerned to
measure the input size within a polynomial factor; moreover, we have restricted~$\calF$
to be finite, thereby avoiding the issue of how to encode the constraint functions~$\calF$.
We typically denote an instance of $\nCSP(\calF)$ by~$I$
and the output by $Z(I)$;  by analogy with systems in statistical physics
we refer to $Z(I)$ as the partition function.

Aside from simplifying the representation of problem instances,
there is another, more important reason for decreeing that $\calF$ is finite, namely, that it allows us to prove the following basic lemma
relating functional clones and computational complexity.
It is, of course, based on a similar result for classical decision CSPs.

\begin{lemma}\label{obsAPclones}
Suppose that $\calF$ is a finite subset of $\codomaineff$.
If  $F\in \fclonelimeff\calF$ then
$$\nCSP(F, \calF) \APred \nCSP(\calF).$$
\end{lemma}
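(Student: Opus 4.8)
The plan is the standard gadget-substitution argument, the only real work being the propagation of errors and the treatment of instances whose partition function is zero. First I would invoke Observation~\ref{obs:26July} to fix a finite $S_F\subseteq\calF$ with $F\in\fclonelimeff{S_F}$; this gives a TM that, on input $\eta>0$, outputs in time $\poly(\log\eta^{-1})$ a \ppformula{} $\psi$ over $S_F\cup\{\EQ\}$ whose function $\Fhat:=F_\psi$ has the arity of $F$ and satisfies $\|\Fhat-F\|_\infty<\eta$. The AP-reduction $\mathcal A$, given an instance $I$ of $\nCSP(F,\calF)$ with $n$ free variables and $m$ atomic formulas and an error bound $\epsilon\in(0,1)$, would: (i)~choose a polynomially small $\eta$ (specified below) and run the TM to obtain $\psi$, $\Fhat$; (ii)~build an instance $I'$ of $\nCSP(\calF)$ by substituting a fresh copy of $\psi$ for each $F$-atom of $I$ (renaming the bound variables of distinct copies apart) and then removing each resulting $\EQ$-atom by identifying its two arguments; (iii)~call the oracle on $(I',\epsilon/2)$ to obtain $\hat Z$; and (iv)~output $0$ if $\hat Z<L/4$ and $\hat Z$ otherwise, where $L$ is a fixed positive rational depending only on $F$, $\calF$, $m$, defined below. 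By the computation in the proof of Lemma~\ref{lem:fcloneTransitive} (interchanging summations), $Z(I')$ equals $Z_{\Fhat}(I):=\sum_{\vecx\in\B^n}\prod_{j=1}^m\Fhat_{\phi_j}(\vecx)$, the partition function of $I$ with $F$ replaced by $\Fhat$.

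For the quantitative part, since $\{F\}\cup\calF$ is a fixed finite set I would fix a rational $M\ge1$ bounding every value of $F$ and of every $G\in\calF$, and a rational $\mu_0\in(0,1]$ strictly below the least positive value taken by any of these functions, and set $L:=\mu_0^m$. A telescoping estimate over the at most $m$ altered factors gives, for $\eta\le1$, $|Z(I')-Z(I)|=|Z_{\Fhat}(I)-Z(I)|\le 2^n m\,\eta\,(M+1)^m=:E$. Choosing $\eta:=\epsilon L\big(8\cdot2^n m(M+1)^m\big)^{-1}$ makes $E\le\min\!\big(L/8,(e^{\epsilon/4}-1)L\big)$ (here $\epsilon<1$ is used), and $\log\eta^{-1}=O(n+m+\log\epsilon^{-1})$, so the TM runs in time $\poly(|I|,\log\epsilon^{-1})$, $\psi$ has that size, and $|I'|=\poly(|I|,\log\epsilon^{-1})$; the single oracle call uses parameter $\epsilon/2$ with $(\epsilon/2)^{-1}\le\poly(|I|,\epsilon^{-1})$, and the whole computation runs in time $\poly(|I|,\epsilon^{-1})$, so the format of an AP-reduction is respected.

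For correctness I would argue as follows. If $Z(I)>0$ then some assignment makes every factor of the product positive, hence at least $\mu_0$, so $Z(I)\ge\mu_0^m=L$; thus $Z(I)\in\{0\}\cup[L,\infty)$. When $Z(I)\ge L$: from $E\le(e^{\epsilon/4}-1)L\le(e^{\epsilon/4}-1)Z(I)$ a short computation (using $\epsilon<1$) yields $e^{-\epsilon/2}Z(I)\le Z(I')\le e^{\epsilon/2}Z(I)$, and also $Z(I')\ge L-E\ge 7L/8$; so on the good event (probability $\ge3/4$) that $e^{-\epsilon/2}Z(I')\le\hat Z\le e^{\epsilon/2}Z(I')$ we get $\hat Z\ge e^{-\epsilon/2}\cdot7L/8>L/4$, whence $\mathcal A$ outputs $\hat Z\in[e^{-\epsilon}Z(I),e^{\epsilon}Z(I)]$, as needed. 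When $Z(I)=0$: $Z(I')=Z_{\Fhat}(I)\le E\le L/8$, so on the good event $\hat Z\le e^{\epsilon/2}\cdot L/8<L/4$ (and if $Z(I')=0$ the oracle returns $0$ with probability $\ge3/4$), whence $\mathcal A$ outputs $0=Z(I)$. Hence, whenever the oracle is a correct randomised approximation scheme for $\nCSP(\calF)$, $\mathcal A$ is one for $\nCSP(F,\calF)$, giving $\nCSP(F,\calF)\APred\nCSP(\calF)$.

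The step I expect to be the main obstacle is the zero case. In contrast to pp-definability (Lemma~\ref{lem:underlyingrelation}), one cannot in general implement the relation $R_F$ underlying $F$ by a \ppformula{} over $\calF$, nor obtain a relative-error gadget for $F$, because $\fclonelimeff{\calF}$ may contain functions --- for example pinnings --- whose underlying relation is not pp-definable from $\{R_G:G\in\calF\}$. Consequently the gadget $\Fhat$ typically has support strictly larger than $R_F$, so $Z(I')$ can be a small positive number even when $Z(I)=0$, which would violate the requirement that a randomised approximation scheme return $0$ on zero-valued instances. Rounding $\hat Z$ down to $0$ below the threshold $L/4$ is what repairs this, using the gap between $0$ and the least possible positive value $L$ of $Z(I)$; everything else is routine bookkeeping.
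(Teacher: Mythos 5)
Your proposal is correct and follows essentially the same route as the paper's proof: substitute the efficiently computed gadget $\psi$ for each $F$-atom, bound $|Z(I')-Z(I)|$ by a quantity made polynomially small in terms of the gadget accuracy, and exploit the lower bound ($\mu_0^m$ in your notation, $C=\mumin^m\numin^{m'}$ in the paper's) on any positive partition function both to convert the absolute error into a relative one and to threshold the oracle's answer so that zero-valued instances are answered exactly. The only difference is cosmetic bookkeeping --- the paper splits assignments into those inside and outside the support of $Z(I)$ and bounds the two contributions separately, whereas you use a single telescoping absolute bound --- and both are sound.
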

\begin{proof}
Let $k$ be the arity of~$F$. Let $\calM$ be a TM which, on input $\epsilon'>0$,
computes a $k$-ary \ppformula~$\psi$ over $\calF \cup \EQ$ such that
  $\| F_\psi-F\|_\infty<\epsilon'$.
We can assume without loss of generality that no
function in  $\{F\} \cup \calF$ is identically zero
(otherwise every \#CSP instance using this
function has partition function~$0$).
Let $\mumax$ be the maximum value in the range of~$F$
and let $\mumin$ be the minimum of~$1$ and
the minimum non-zero value in the range of~$F$.
Similarly, let~$S$ be the set of non-zero values in the range of functions in
$\calF\cup \{\EQ\}$. Let $\numax$ be the maximum value in~$S$ and let
$\numin$ be the minimum of~$1$ and the minimum value in~$S$.

Consider an input $(I,\epsilon)$
where $I$ is an instance of $\nCSP(F, \calF)$ and $\epsilon$ is an accuracy parameter.
Suppose that $I$ has
$n$ variables, $m$ $F$-constraints,
and $m'$ other constraints.
We can assume without loss of generality that $m>0$ (otherwise, $I$ is an instance of
$\nCSP(\calF)$).

The key idea of the proof is to construct an instance $I'$ of $\nCSP(\calF)$ by replacing each
$F$-constraint in~$I$ with the set of constraints and extra (bound) variables in
the formula~$\psi$
that is output by~$\calM$ with input~$\epsilon'$.
We determine how small to make $\epsilon'$ in terms of the following quantities.
Let
\begin{align*}
A &= \frac{4 m}{\mumin} 2^n \mumax^m \numax^{m'}\\
B &= 2^n {(\mumax+1)}^{m-1} \numax^{m'}\\
C &= \mumin^m \numin^{m'}.
\end{align*}
Let $\epsilon' = \tfrac \epsilon4 \tfrac{C}{A+B}$.
The time needed to construct~$\psi$ for a given $\epsilon'>0$ is
at most~$\poly(\log {(\epsilon'}^{-1}))$, which is at most a polynomial in~$n$, $m$, $m'$ and
$\epsilon^{-1}$, as required by the definition of AP-reduction.
We shall see that $(I',\epsilon/2)$ is the sought-for
instance/tolerance pair required by our reduction.

Let $I_\psi$ be the instance formed from~$I$ by replacing every
$F$-constraint with an $F_{\psi}$-constraint.  Note that $Z(I_\psi)=Z(I')$,
since $I'$, an instance of $\nCSP(\calF)$, is an implementation of~$I_\psi$.
We want to show that if an oracle produces a sufficiently accurate
approximation to $Z(I')$ (and hence to $Z(I_\psi)$) then we can deduce a
sufficiently accurate approximation to~$Z(I)$. Observe that the definition of
FPRAS allows no margin of error when $Z(I)=0$, and our reduction must give
the correct result, namely~$0$, in this case.  Therefore we need to treat
separately the cases $Z(I)=0$ and $Z(I)>0$. We will show that
\begin{equation}\label{wish1}
Z(I)=0 \quad\text{implies}\quad Z(I_\psi)<C/3,
\end{equation}
and
\begin{equation}\label{wish2}
Z(I)>0\quad\text{implies}\quad Z(I_\psi)>2C/3;
\end{equation}
moreover, in the latter case,
\begin{equation}
\label{wish}
e^{-\epsilon/2} Z(I) \leq Z(I_\psi) \leq e^{\epsilon/2} Z(I).
\end{equation}

These estimates are enough to ensure correctness of the reduction.
For a call to an oracle for $\nCSP(\calF)$ with instance~$I'$
and accuracy parameter~$\epsilon/2$ would return a result in the
range $[e^{-\epsilon/2}Z(I_\psi),e^{\epsilon/2} Z(I_\psi)]$
with high probability.  Observe that this estimate is sufficient to distinguish
between cases (\ref{wish1}) and (\ref{wish2}).  In the former case,
we are able to return the exact result, namely~$0$.  In the latter
case, we return the result given by the oracle, which by~(\ref{wish})
satisfies the conditions for an FPRAS\null.

To establish~(\ref{wish1}--\ref{wish}), let $Y'$ be the set of
assignments to the variables of instance~$I$ which make a non-zero
contribution to~$Z(I)$ and let $Y''$ be the remaining assignments to the
variables of instance~$I$. Let $Z'(I_\psi)$ be the contribution to
$Z(I_\psi)$ due to assignments in~$Y'$ and $Z''(I_\psi)$ be the
contribution to $Z(I_\psi)$ due to assignments in~$Y''$ (so $Z(I_\psi) =
Z'(I_\psi) + Z''(I_\psi)$). We can similarly write $Z(I) = Z'(I) +
Z''(I)$, though of course $Z''(I)=0$.

First, note that if
$|F_\psi(\vecx) -  F(\vecx)|\leq \epsilon'$ and $F(\vecx)>0$ then
$$|F_\psi(\vecx)/F(\vecx)-1| \leq \epsilon'/F(\vecx) \leq \epsilon'/\mumin,$$
so
$$e^{-2\epsilon'/\mumin} \leq \frac{F_\psi(\vecx)}{F(\vecx)} \leq e^{2 \epsilon'/\mumin}.$$
We conclude that
$$e^{-2\epsilon' m/\mumin} Z'(I) \leq Z'(I_\psi)  \leq e^{2 \epsilon' m/\mumin} Z'(I),$$
so
\begin{equation*}
|Z'(I)-Z'(I_\psi)| \leq \frac{4 \epsilon' m}{\mumin} Z(I) \leq \epsilon' A.
\end{equation*}
Furthermore,
\begin{equation*}
|Z''(I)-Z''(I_\psi)| = Z''(I_\psi) \leq \epsilon' B.
\end{equation*}
Here we use $\|F_\psi-F\|_\infty<\epsilon'<1$;  the ``$\null+1$'' in the definition
of~$B$ absorbs the discrepancy between $F_\psi$ and~$F$.
Combining these two inequalities yields
\begin{equation}
\label{11aug}
|Z(I)-Z(I_\psi)| \leq \epsilon' (A+B)\leq \frac{\epsilon C}4.
\end{equation}

Now, $Z(I)>0$ implies $Z(I)\geq C$, and hence (\ref{wish1}) and (\ref{wish2})
follow directly from~(\ref{11aug}).
If $Z(I)>0$ we further have
$$\left|\frac{Z(I_\psi)}{Z(I)}-1\right| \leq \frac{\epsilon'(A+B)}{Z(I)} \leq \frac{\epsilon'(A+B)}{C} \leq \epsilon/3.$$
This establishes~(\ref{wish}) and completes the verification of the reduction.
\end{proof}

\begin{theorem}
\label{corthm:main}
Suppose $\calF$ is a finite subset of
$\codomaineff$.
\begin{itemize}
\item If $\calF \subseteq \fclone{\NEQ,\codomaineff_1}$
then, for any finite subset~$S$ of $\codomaineff_1$,
there is an FPRAS for $\nCSP(\calF,S)$.
\item Otherwise,
\begin{itemize}
\item[$\circ$]There is a finite subset~$S$ of $\codomaineff_1$ such that
$\BIS \APred \nCSP(\calF, S)$.
\item[$\circ$] If there is a function $F\in \calF$ such that $F\notin \LSM$
then  there is a finite subset~$S$ of $\codomaineff_1$ such that
$\SAT \APeq \nCSP(\calF, S)$.
\end{itemize}
\end{itemize}
\end{theorem}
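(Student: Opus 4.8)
The plan is to derive all three assertions from Theorem~\ref{thm:main}, turning clone membership into AP-reductions via Lemma~\ref{obsAPclones} while being careful at every stage to keep the set of adjoined unary weights \emph{finite}. \emph{Tractable case.} Suppose $\calF\subseteq\fclone{\NEQ,\codomaineff_1}$ and let $S\subseteq\codomaineff_1$ be finite. By the structural description in Remark~\ref{rem:inclusions}, every function in $\fclone{\NEQ,\codomaineff_1}$ equals $F_\phi$ for a product $\phi$ of atomic formulas in unary functions and $\NEQ$ with \emph{no bound variables}, and the unary functions occurring there remain in $\codomaineff_1$. So I would rewrite a given instance $I$ of $\nCSP(\calF,S)$ in polynomial time as an equivalent instance $I'$ over $\{\NEQ\}\cup\codomaineff_1$ on the same variables, with $Z(I)=Z(I')$. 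The $\NEQ$-constraints of $I'$ induce a graph on the variables; unless some component contains an odd $\NEQ$-cycle (in which case $Z(I')=0$), each variable is forced to equal or to complement a representative of its component, so $Z(I')=\prod_\kappa\bigl(W_\kappa(0)+W_\kappa(1)\bigr)$, where $W_\kappa(b)$ is a product of at most polynomially many of the finitely many fixed unary weight values. Since these are polynomial-time computable reals and $\calF,S$ are fixed (so, in particular, which of them vanish is hard-wired), $Z(I')$ can be computed to a multiplicative factor $e^{\pm\varepsilon}$ in time $\poly(|I|,\log\varepsilon^{-1})$; this deterministic FPTAS is in particular an FPRAS.

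\emph{The $\BIS$ case.} Otherwise some $F\in\calF$ has $F\notin\fclone{\NEQ,\codomaineff_1}$, so Theorem~\ref{thm:main}(1) gives $\IMP\in\fclonelimeff{F,\codomaineff_1}$, and by Observation~\ref{obs:26July} there is a finite $S\subseteq\codomaineff_1$ with $\IMP\in\fclonelimeff{F,S}\subseteq\fclonelimeff{\calF\cup S}$. Lemma~\ref{obsAPclones}, applied with constraint set $\calF\cup S$ (which is finite) and function $\IMP$, yields $\nCSP(\IMP,\calF,S)\APred\nCSP(\calF,S)$, hence $\nCSP(\IMP)\APred\nCSP(\calF,S)$ since AP-reductions compose. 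It then suffices to observe $\BIS\APred\nCSP(\IMP)$: given a bipartite graph with parts $U,W$, introduce variables $x_u$ ($u\in U$) and $y_w$ ($w\in W$), impose $\IMP(x_u,y_w)$ for each edge $uw$, and note that independent sets correspond bijectively to satisfying assignments via $u\mapsto(x_u=1)$ and $w\mapsto(y_w=0)$; this is an exact parsimonious reduction.

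\emph{The $\SAT$ case.} Assume in addition that some $F'\in\calF$ is not lsm. Keeping $\IMP\in\fclonelimeff{\calF,S_1}$ (finite $S_1\subseteq\codomaineff_1$) from the previous step, I would replay the gadget from the proof of Theorem~\ref{thm:main}(2): form the strictly positive binary function $H(x_1,x_2)=\sum_{y_1,y_2}\IMP(y_1,x_1)\IMP(y_1,x_2)\IMP(x_1,y_2)\IMP(x_2,y_2)$, with $H(0,0)=H(1,1)=2$ and $H(0,1)=H(1,0)=1$, and then $H_k(\vecx)=\sum_{\vecy}F'(\vecy)\prod_i H(x_i,y_i)^k$. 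Each $H_k$ is strictly positive and lies in $\fclonelimeff{\calF,S_1}$ by Lemma~\ref{lem:fcloneTransitive}, and for $k$ large it is close to $2^{kn}F'$ and hence not lsm, so by Lemma~\ref{lem:Topkis} some $2$-pinning $F_1$ of $H_k$ is a strictly positive, binary, non-lsm function in $\fclonelimeff{\calF,S_1}$. By parts (iii) and (v) of Lemma~\ref{lem:binarycases} this forces $\NEQ$ or $\OR$ into $\fclonelimeff{\calF,S'}$ for a finite $S'\subseteq\codomaineff_1$ (using Observation~\ref{obs:26July} and Lemma~\ref{lem:fclonelimeffTransitive}); and since $M(\NEQ)M(\IMP)=M(\OR)$ and $\IMP$ is already available, in either case $\OR\in\fclonelimeff{\calF,S'}$. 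By Lemma~\ref{lem:OR}, $\fclonelimeff{\OR,\codomaineff_1}=\codomaineff$, so $\codomaineff\subseteq\fclonelimeff{\calF,\codomaineff_1}$; in particular each of the finitely many ``single forbidden tuple'' relations of arity at most $3$ lies, by Observation~\ref{obs:26July} and a finite union, in $\fclonelimeff{\calF,S}$ for one finite $S\subseteq\codomaineff_1$. A parsimonious clause-width reduction of a CNF formula to a $3$-CNF, together with repeated use of Lemma~\ref{obsAPclones}, then gives $\SAT\APred\nCSP(\calF,S)$. For the reverse direction I would invoke the standard fact (cf.~\cite{DGGJ,trichotomy}) that any Boolean $\nCSP$ with finitely many polynomial-time computable nonnegative weights is AP-reducible to $\SAT$ --- approximate each weight by a dyadic rational of polynomially many bits and realise it with a counting gadget --- obtaining $\nCSP(\calF,S)\APred\SAT$ and hence $\SAT\APeq\nCSP(\calF,S)$.

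The structural heavy lifting is entirely in Theorem~\ref{thm:main}; what remains here is the translation into reductions. The point that needs the most care is the repeated bookkeeping of finiteness: Theorem~\ref{thm:main}, Lemma~\ref{lem:binarycases} and Lemma~\ref{lem:OR} are all stated in terms of \emph{all} of $\codomaineff_1$, so after each use one must descend to a finite subset via Observation~\ref{obs:26July} and then take a finite union, tracking that the intermediate clone memberships compose correctly (Lemma~\ref{lem:fclonelimeffTransitive}). The only other mildly delicate point is the $\nCSP\APred\SAT$ upper bound, which has to be stated for \emph{real} (not merely rational) weights; this is exactly where the restriction to polynomial-time computable reals in $\codomaineff$ is used.
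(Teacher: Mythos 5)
Your proposal is correct, and its skeleton is exactly the paper's: Theorem~\ref{thm:main} supplies the clone memberships, Observation~\ref{obs:26July} descends to a finite set of unary weights, and Lemma~\ref{obsAPclones} converts membership into AP-reductions. Where you differ is in how the three ``endpoints'' are discharged, and in each case you substitute a self-contained argument for a citation. For the tractable case the paper approximates each constraint function by a rational-valued function with the same support and then invokes the exact polynomial-time algorithm of \cite[Theorem 4]{wbool}; you instead unfold Remark~\ref{rem:inclusions} to rewrite the instance over $\{\NEQ\}\cup\codomaineff_1$ and factorise the partition function over the components of the $\NEQ$/$\EQ$-graph, which is more elementary but amounts to reproving the relevant special case of that theorem (and your handling of zero weights and of relative precision for the fixed polynomial-time computable reals is the right way to make it rigorous). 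For $\BIS\APred\nCSP(\IMP)$ you give the standard parsimonious gadget directly where the paper cites \cite[Theorem 3]{trichotomy}; both are fine. For $\SAT$-hardness the paper stops at the binary relation $\OR$ and cites \cite[Lemma 7]{trichotomy} for $\SAT\APred\nCSP(\OR)$, whereas you pull ternary single-forbidden-tuple relations out of $\codomaineff$ and reduce from $3$-CNF; this works, but note that the naive clause-splitting reduction from CNF to $3$-CNF is \emph{not} parsimonious (auxiliary variables can be set two ways when several literals are true), so you should either use the Tseitin-style encoding with biconditionals, which is parsimonious, or simply observe that AP-reductions do not require parsimony. Also, your replay of the $H_k$ gadget is redundant, since Theorem~\ref{thm:main} already states $\fclonelimeff{F,\codomaineff_1}=\codomaineff$ as a conclusion. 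Finally, in the reverse direction $\nCSP(\calF,S)\APred\SAT$ you should say explicitly, as the paper does, that the rational approximations must preserve the support of each function, so that zero partition functions are reported exactly.
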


\begin{proof}
First, suppose that $\calF \subseteq  \fclone{\NEQ,\codomaineff_1}$.
Let $S$ be a   finite subset  of~$\codomaineff_1$.
Given an $m$-constraint input $I$ of $\nCSP(\calF, S)$
and an accuracy parameter~$\epsilon$,
we first approximate  each arity-$k$ function $F\in \calF$
used in~$I$
with
a function $\hatF:\B^k \rightarrow \mathbb{Q}^{\geq 0}$
such that  $R_{\hatF} = R_F$, and for every $\vecx$ for which $F(\vecx)>0$,
\begin{equation}
\label{eq:approx}
e^{-\epsilon/m} \leq
\frac{\hatF(\vecx)}{F(\vecx)} \leq e^{\epsilon/m}.
\end{equation}
Let $\widehat\calF = \{ \hatF \mid F \in \calF\}$
and let $\widehat I$ be the instance of $\nCSP(\widehat\calF, S)$ formed from~$I$
by replacing each $F$-constraint with $\hatF$.
\cite[Theorem 4]{wbool} gives a polynomial-time algorithm for computing
the partition function $Z(\widehat I)$, which satisfies
\begin{equation}
\label{eq:another}
e^{-\epsilon} Z(I) \leq Z(\widehat I) \leq e^{\epsilon} Z(I).
\end{equation}

Second, suppose that $\calF \not\subseteq  \fclone{\NEQ,\codomaineff_1}$.
By Theorem~\ref{thm:main}, $\IMP\in\fclonelimeff{\calF,\codomaineff_1}$.
By Observation~\ref{obs:26July}, there is a finite subset~$S$ of~$  \codomaineff_1$ such
that $\IMP\in\fclonelimeff{\calF, S}$.
Thus, $\nCSP(\IMP) \APred \nCSP(\calF,S)$, by Lemma~\ref{obsAPclones}.
However, $\BIS \APeq \nCSP(\IMP)$ by \cite[Theorem 3]{trichotomy}.

Finally, suppose that  there is a function $F\in \calF$ such that $F\notin \LSM$. By Theorem~\ref{thm:main},
$\fclonelimeff{F,\codomaineff_1}=\codomaineff$ so
$\OR \in \fclonelimeff{F,\codomaineff_1}$.
By Observation~\ref{obs:26July},
there is a finite subset~$S$ of~$  \codomaineff_1$ such
that $\OR\in\fclonelimeff{F, S}$,
so by Lemma~\ref{obsAPclones},
$\nCSP(\OR) \APred \nCSP(F,S)$. However, by \cite[Lemma 7]{trichotomy}
$\SAT \APred \nCSP(\OR)$.
To see that $\nCSP( \calF, S) \APred \SAT$,
let $I$ be an $m$-constraint instance of $\nCSP( \calF, S)$.
For each function $G\in \calF\cup S$,
define $\widehat{G}$ as in~(\ref{eq:approx}).
Let $\widehat I$ be
the instance of
$\nCSP( \{\widehat{G} \mid G \in \calF \cup S\} )$
formed from~$I$ by replacing each $G$-constraint with a $\widehat{G}$-constraint.
Equation~(\ref{eq:another}) holds, as above.
Furthermore, from \cite[Section 1.3]{wbool} the problem
of evaluating  $\nCSP(
\{\widehat{G} \mid G \in \calF \cup S\})$
(even with $\widehat{G}$ as part of the input)
is in $\numP_\Q$, the complexity class comprising functions which are a
function in $\numP$ divided by a function in FP,
so can be AP-reduced to $\SAT$.
\end{proof}

\begin{example}
Let $F\in\codomaineff_2$ be the function defined by $F(0,0)=F(1,1)=\lambda$
and $F(0,1)=F(1,0)=1$, where $\lambda>1$.  Then, from Theorem~\ref{corthm:main},
$\nCSP(F,S)$ is $\BIS$-hard, for some set $S$ of unary weights.
(In fact, this counting CSP is also $\BIS$-easy.) Note that $\nCSP(F, S)$
is nothing other than the ferromagnetic Ising model with an applied field.
So we recover, with no effort, the main result of Goldberg and Jerrum's
investigation of this model~\cite{GJ07}.
\end{example}

\begin{example}
If $F$ is as before, but $\lambda\in(0,1)$, then $F\notin\LSM$ and
Theorem~\ref{corthm:main} tells us that $\nCSP(F, S)$ is $\SAT$-hard, for
some set $S$ of unary weights. This is a restatement of the known fact that
the partition function of the antiferromagnetic Ising model is hard to
approximate~\cite{JS93}. Actually, this is true without weights, but that is not directly implied by Theorem~\ref{corthm:main}.
\end{example}

\section{The classes $\fclone{\LSM_k}$}\label{sec:cloneLSMk}
In this section, we will be concerned with expressibility, and less so with efficient computability. Thus, we will use function classes  without attempting to distinguish the efficiently computable functions in the class from the remainder.

Define $\LSM_k=\LSM\cap\codomainbool_k$. It follows from the proof of Lemma~\ref{lem:binarycases}
that $\fclone{\LSM_2}=\fclone{\IMP,\codomainbool_1}$. Since $\LSM$ is central to Theorem~\ref{thm:main}, we need to consider whether
$\LSM=\fclonelim{\LSM_2}$ and, if not, what internal structure it may possess. To this end, we  consider here the functional clones $\fclonelim{\LSM_k}$ for $k>2$.
As mentioned in the abstract, we will make use of two transforms: the M\"obius transform to show
$\fclonelim{\LSM_3}=\fclonelim{\LSM_2}$ (in fact we
prove the stronger statement $\fclone{\LSM_3}=\fclone{\LSM_2}$),
and the Fourier transform to show that $\fclonelim{\LSM_4}\neq \fclonelim{\LSM_2}$. 
See \cite{Billionnet,ZCJ09,ZJ10} for corresponding results in the context of optimisation.

For all $\vecx,\vecy\in\B^n$, we will write $\vecx\leq \vecy$
to mean that for all $1\leq i\leq n$ we have $x_i\leq y_i$.
For all $f\in\codomainreal_n$ define the \emph{M\"obius transform}
$\widetilde{f}\in\codomainreal_n$ by
\begin{align}
\widetilde{f}(\vecy)\,&=\,\sum_{\vecw\leq\vecy} (-1)^{|\vecy-\vecw|}f(\vecw)&& (\vecy\in\B^n),\label{eq:coefficients3}
\intertext{and note that the M\"obius transform is invertible:}
f(\vecx)\,&=\,\sum_{\vecy\leq\vecx}\widetilde{f}(\vecy) && (\vecx\in\B^n).\label{eq:coefficients2}
\end{align}

See also~\cite{GrMaRo00} for further
information. We will not require anything other than
\eqref{eq:coefficients3} and \eqref{eq:coefficients2}.
We next show that certain simple functions are in $\fclone{\LSM_2}$.
\begin{lemma}\label{lem:chi}
Let $\vecy\in\{0,1\}^n$. Let $t\in\mathbb{R}$, 
with $t\geq 0$ if $|\vecy|>1$.
Let $F\in\codomainbool_n$ be the unique function
satisfying $\widetilde{\log F}(\vecy)=t$ and $\widetilde{\log
  F}(\vecx)=0$ for $\vecx\neq\vecy$.  
(Explicitly, 
$F(\vecx)=e^t$ for
$\vecy\leq\vecx$ and $F(\vecx)=1$ otherwise.)  Then
$F\in\fclone{\LSM_2}$.
\end{lemma}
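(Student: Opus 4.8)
The plan is to make $F$ completely explicit using the inversion formula~\eqref{eq:coefficients2}, then to exhibit a pps-formula for $F$ over binary lsm functions, splitting into cases according to $|\vecy|$. First I would record the explicit form: by~\eqref{eq:coefficients2}, $\log F(\vecx)=\sum_{\vecz\leq\vecx}\widetilde{\log F}(\vecz)$, which equals $t$ when $\vecy\leq\vecx$ and $0$ otherwise, so $F(\vecx)=e^{t}$ for $\vecy\leq\vecx$ and $F(\vecx)=1$ otherwise, exactly as in the parenthetical remark of the statement. I would then invoke the fact recorded just before the lemma, namely $\fclone{\LSM_2}=\fclone{\IMP,\codomainbool_1}$, so that in constructing an implementation of $F$ I may freely use $\IMP$, $\EQ$, and arbitrary nonnegative unary weights.

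The cases $|\vecy|\leq 1$ are trivial. If $|\vecy|=0$ then $F$ is the constant $e^{t}$, which is implemented by $\prod_{j=1}^{n}U(x_{j})$ with $U\equiv e^{t/n}$ (or as a nullary constant when $n=0$). If $|\vecy|=1$, with the single $1$ of $\vecy$ in position $i$, then $F$ depends only on $x_{i}$ and equals $U(x_{i})$ where $U(0)=1$ and $U(1)=e^{t}$; to obtain the correct arity I would pad with the factors $\EQ(x_{j},x_{j})=1$ for $j\neq i$. In both subcases $t$ may be negative, but the unary functions involved remain nonnegative, so $F\in\fclone{\codomainbool_{1}}\subseteq\fclone{\LSM_2}$.

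The substantive case is $k:=|\vecy|\geq 2$, where the hypothesis guarantees $t\geq 0$. After reordering coordinates so that the support of $\vecy$ is $\{1,\dots,k\}$, I would introduce one fresh variable $z$ and a unary weight $W$ with $W(0)=1$ and $W(1)=e^{t}-1\geq 0$, and take
\[
\psi=\sum_{z}\,W(z)\Big(\prod_{i=1}^{k}\IMP(z,x_{i})\Big)\Big(\prod_{j=k+1}^{n}\EQ(x_{j},x_{j})\Big).
\]
Since $\prod_{i=1}^{k}\IMP(z,x_{i})$ equals $1$ at $z=0$, equals $1$ at $z=1$ precisely when $x_{1}=\dots=x_{k}=1$, and equals $0$ otherwise, a one-line computation gives $F_{\psi}(\vecx)=W(0)+W(1)=e^{t}$ when $x_{1}=\dots=x_{k}=1$ and $F_{\psi}(\vecx)=W(0)=1$ otherwise; hence $F_{\psi}=F$. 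As $\IMP,\EQ\in\LSM_2$ and $W\in\codomainbool_{1}$, this $\psi$ is a pps-formula over $\LSM_2\cup\{\EQ\}$, so $F\in\fclone{\LSM_2}$.

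I do not expect a genuine obstacle here; the lemma is easy once the $\IMP$-gadget is in hand. The only point that requires care — and the reason the lemma restricts to $t\geq 0$ when $|\vecy|>1$ — is the sign of $W(1)=e^{t}-1$: the $z$-summation adds exactly the multiplicative bump $e^{t}$ on the upset of $\vecy$ only because the $z=1$ branch carries a \emph{nonnegative} weight, and negative unary weights are not available in $\fclone{\LSM_2}$. The remaining work (the explicit Möbius computation and the padding to arity $n$) is routine bookkeeping.
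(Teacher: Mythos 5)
Your proof is correct and uses essentially the same construction as the paper: the gadget $\sum_{z}W(z)\prod_{i:y_i=1}\IMP(z,x_i)$ with $W(0)=1$, $W(1)=e^{t}-1\geq 0$ for the case $|\vecy|\geq 2$, and a direct unary implementation for $|\vecy|\leq 1$ (where $t$ may be negative but the function values stay positive). The only cosmetic difference is that you organise the cases by $|\vecy|$ and route the reduction through $\fclone{\LSM_2}=\fclone{\IMP,\codomainbool_1}$, whereas the paper splits on the sign of $t$ and notes $\fclone{\LSM_1}\subset\fclone{\LSM_2}$ directly.
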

\begin{proof}
If $t\geq 0$ define $U\in\LSM_1$ by $U(0)=1$ and $U(1)=e^t-1$. We will
argue that for all $\vecx$ we have
\[ F(\vecx)=\sum_z U(z) \prod_{i:y_i=1} \IMP(z,x_i). \]
Indeed if $\vecy\leq\vecx$ we get $F(\vecx)=U(0)+U(1)=e^t$, and otherwise
$F(\vecx)=U(0)=1$.

Now we consider the case $|\vecy|\leq 1$.  If $\vecy=\constzero$ let
$i=1$, and otherwise let $i$ be the unique index with $y_i=1$.  Then
$F(\vecx)=U(x_i)$ where $U(0)=F(\constzero)$ and $U(1)=F(\constone)$.
Hence $F\in\fclone{\LSM_1}\subset\fclone{\LSM_2}$.
\end{proof}

If $\vecx\in\B^n$, let $\bar{\vecx}=\constone-\vecx$ and, if $F\in\codomainbool_n$, let $\bar{F}(\vecx)=F(\bar{\vecx})$. We will use the following simple fact.
\begin{lemma}\label{lem:lsmk10}
$F\in\LSM_k$ iff $\bar{F}\in\LSM_k$.
\end{lemma}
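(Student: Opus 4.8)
The plan is to use the fact that the complementation map $\vecx\mapsto\bar\vecx=\constone-\vecx$ is an involution of the Boolean cube $\B^n$ which acts as an anti-automorphism of the lattice $(\B^n,\vee,\wedge)$: by De~Morgan's laws,
\[
\overline{\vecx\vee\vecy}=\bar\vecx\wedge\bar\vecy
\qquad\text{and}\qquad
\overline{\vecx\wedge\vecy}=\bar\vecx\vee\bar\vecy
\]
for all $\vecx,\vecy\in\B^n$. Since $\bar F\in\codomainbool_k$ whenever $F\in\codomainbool_k$, nonnegativity and arity are automatically preserved, so the only thing to verify is the defining inequality of $\LSM$.

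First I would fix arbitrary $\vecx,\vecy\in\B^n$ and rewrite both sides of the log-supermodularity inequality for $\bar F$ in terms of $F$. Using $\bar F(\vecz)=F(\bar\vecz)$ together with the De~Morgan identities above,
\[
\bar F(\vecxuy)\,\bar F(\vecxny)
=F(\overline{\vecxuy})\,F(\overline{\vecxny})
=F(\bar\vecx\wedge\bar\vecy)\,F(\bar\vecx\vee\bar\vecy),
\]
while $\bar F(\vecx)\,\bar F(\vecy)=F(\bar\vecx)\,F(\bar\vecy)$. Hence the inequality $\bar F(\vecxuy)\bar F(\vecxny)\ge\bar F(\vecx)\bar F(\vecy)$ is literally the log-supermodularity inequality for $F$ evaluated at the pair $(\bar\vecx,\bar\vecy)$.

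Finally, I would observe that as $(\vecx,\vecy)$ ranges over $\B^n\times\B^n$, so does the pair $(\bar\vecx,\bar\vecy)$, because complementation is a bijection. Therefore ``$\bar F$ is lsm'' holds iff the lsm inequality for $F$ holds at every pair, i.e.\ iff $F\in\LSM$. This gives one implication, and the converse follows at once by symmetry since $\bar{\bar F}=F$ (equivalently, run the same argument with the roles of $F$ and $\bar F$ interchanged). There is no real obstacle here: the entire content of the lemma is the correct bookkeeping with De~Morgan's laws, and the argument is only a couple of lines long.
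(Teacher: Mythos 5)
Your proposal is correct and follows essentially the same route as the paper: both arguments apply De~Morgan's laws to show that the lsm inequality for $\bar F$ at a pair $(\vecx,\vecy)$ is exactly the lsm inequality for $F$ at $(\bar\vecx,\bar\vecy)$, and both dispose of the converse by symmetry (the paper's one-line ``by symmetry'' is precisely your observation that $\bar{\bar F}=F$). No further comment is needed.
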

\begin{proof}
   By symmetry, it suffices to show that, for any $\vecx,\vecy\in\B^k$,
   \begin{align*}
    \bar{F}(\vecx)\bar{F}(\vecy)\,&=\,F(\bar{\vecx})F(\bar{\vecy})\,\leq\, F(\bar{\vecx}\wedge\bar{\vecy})F(\bar{\vecx}\vee\bar{\vecy})\\
    &=\,F(\,\Bar{\!\vecxuy\!}\,)F(\,\Bar{\!\vecxny\!}\,)
    \,=\,\bar{F}(\vecxuy)\bar{F}(\vecxny).\qedhere
   \end{align*}
\end{proof}

Now we show that it is only necessary to consider permissive
functions.  The construction used in the lemma is adapted from one
given, in a more general setting, by Topkis~\cite{Topkis}.

\begin{lemma}\label{lem:lifting}
For every $F\in\LSM_k$ there exists $G\in\LSMp_k$ such that $F=R_FG$.
Furthermore $R_F\in\fclone{\LSM_2}$, so $\fclone{\LSM_k}=\fclone{\LSMp_k,\LSM_2}$.
\end{lemma}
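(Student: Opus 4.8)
The plan is to prove the two displayed claims separately — that $F=R_FG$ for some $G\in\LSMp_k$, and that $R_F\in\fclone{\LSM_2}$ — and then assemble the clone identity (we take $k\ge2$). The preliminary observation I would record first is that $R_F$ is closed under $\wedge$ and $\vee$: if $F(\vecx),F(\vecy)>0$ then log-supermodularity gives $F(\vecx\vee\vecy)F(\vecx\wedge\vecy)\ge F(\vecx)F(\vecy)>0$, so $\vecx\vee\vecy,\vecx\wedge\vecy\in R_F$. In particular, assuming $R_F\neq\emptyset$ (the case $F\equiv\constzero$ being trivial, with $G\equiv1$), $R_F$ has a top element $\boldsymbol{t}=\bigvee R_F$.

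For $R_F\in\fclone{\LSM_2}$: a Boolean relation closed under $\wedge$ and $\vee$ is precisely the set of tuples satisfying some fixed conjunction of ``pinnings'' $x_i=c$ and ``implications'' $x_i\le x_j$ — indeed $R_F$ equals the conjunction of all such clauses it satisfies identically, the non-obvious inclusion being a short argument using closure under $\wedge$ and $\vee$ (alternatively, this is read off the Boolean co-clone tables of \cite{CKZ}; cf.\ also \cite{CKS}). As a $\{0,1\}$-valued function, $R_F$ is therefore a product of atomic formulas $\unaryzero(x_i)$, $\unaryone(x_i)$ and $\IMP(x_i,x_j)$, i.e.\ a \ppformula{} with no summation; since $\IMP\in\LSM_2$, $\unaryzero,\unaryone\in\codomainbool_1$, and $\fclone{\LSM_2}=\fclone{\IMP,\codomainbool_1}$, we conclude $R_F\in\fclone{\LSM_2}$.

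The substance of the lemma is the construction of $G$, adapting an idea of Topkis~\cite{Topkis}. For $\vecx\in\B^k$ let $\vecx^{\uparrow}$ denote the least element of $R_F$ lying above $\vecx\wedge\boldsymbol{t}$ (well defined, since $\boldsymbol{t}\in R_F$ and $R_F$ is $\wedge$-closed), and let $d(\vecx)=|\vecx\triangle\vecx^{\uparrow}|$ be the Hamming distance between $\vecx$ and $\vecx^{\uparrow}$, so $d\ge0$ with $d(\vecx)=0$ exactly when $\vecx\in R_F$. I would then set
\[
G(\vecx)=F(\vecx^{\uparrow})\,e^{-\lambda d(\vecx)},\qquad \lambda=\max_{\vecz\in R_F}\log F(\vecz)-\min_{\vecz\in R_F}\log F(\vecz).
\]
Immediately $G>0$ everywhere (as $\vecx^{\uparrow}\in R_F$ gives $F(\vecx^{\uparrow})>0$) and $G$ agrees with $F$ on $R_F$, so $F=R_FG$. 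The one non-routine point — and the main obstacle — is that $G\in\LSM$, i.e.\ that $\log G$ is supermodular. The inputs are that $\vecx\mapsto\vecx^{\uparrow}$ is a closure operator preserving joins, for which one still only has $(\vecx\wedge\vecy)^{\uparrow}\le\vecx^{\uparrow}\wedge\vecy^{\uparrow}$, together with modularity of $|\cdot|$: these yield that the supermodularity defect of $\vecx\mapsto d(\vecx)$ equals the nonnegative integer $\Delta:=|\vecx^{\uparrow}\wedge\vecy^{\uparrow}|-|(\vecx\wedge\vecy)^{\uparrow}|$, while the defect contributed by $\vecx\mapsto\log F(\vecx^{\uparrow})$ — which may be positive — vanishes when $\Delta=0$ and is bounded by $\lambda$ when $\Delta\ge1$. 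In $\log G$ the latter defect is offset by $-\lambda\Delta$, giving a net defect $\le0$; the choice of $\lambda$ as the oscillation of $\log F$ over $R_F$ is exactly what makes this balance work, and getting it right is the delicate step.

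To assemble: $\LSMp_k\subseteq\LSM_k$, and every binary log-supermodular $B$ lies in $\fclone{\LSM_k}$ — e.g.\ $B(v_1,v_2)=\widehat B(v_1,v_2,v_2,\dots,v_2)$ where $\widehat B(x_1,\dots,x_k):=B(x_1,x_2)$ is in $\LSM_k$ — so $\fclone{\LSMp_k,\LSM_2}\subseteq\fclone{\LSM_k}$. Conversely, for $F\in\LSM_k$ we have $F=R_FG$ with $G\in\LSMp_k$ and $R_F\in\fclone{\LSM_2}\subseteq\fclone{\LSMp_k,\LSM_2}$; reading the product $F=R_F\cdot G$ as a \ppformula{} over $\LSMp_k\cup\{R_F\}$ and applying Lemma~\ref{lem:fcloneTransitive} gives $F\in\fclone{\LSMp_k,\LSM_2}$, whence $\LSM_k\subseteq\fclone{\LSMp_k,\LSM_2}$ and, by Lemma~\ref{lem:fcloneTransitive} once more, $\fclone{\LSM_k}=\fclone{\LSMp_k,\LSM_2}$.
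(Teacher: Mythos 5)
Your proof is correct, but your construction of the permissive extension $G$ is genuinely different from the paper's. The paper (also following Topkis) first patches $F$ at $\constzero$ if necessary so that $F(\constzero)>0$, sets $\mu=\min\{F(\vecx):F(\vecx)\neq0\}/\max_{\vecx}F(\vecx)$, and defines $G(\vecx)=\max\{F(\vecy)\mu^{|\vecx|-|\vecy|}:\vecy\le\vecx\}$; log-supermodularity then follows from a four-line chain of inequalities using only $\vecy\wedge\vecy'\le\vecx\wedge\vecx'$, $\vecy\vee\vecy'\le\vecx\vee\vecx'$ and modularity of $|\cdot|$, with no case analysis. Your $G(\vecx)=F(\vecx^{\uparrow})e^{-\lambda d(\vecx)}$ handles every $F\not\equiv\constzero$ uniformly, at the price of a more delicate verification; the sketch does close: since $\vecx\wedge\vecx^{\uparrow}=\vecx\wedge\boldsymbol{t}$ one gets $d(\vecx)=|\vecx|+|\vecx^{\uparrow}|-2|\vecx\wedge\boldsymbol{t}|$, so by modularity and join-preservation of $\uparrow$ the submodularity defect of $d$ is exactly $\Delta=|\vecx^{\uparrow}\wedge\vecy^{\uparrow}|-|(\vecx\wedge\vecy)^{\uparrow}|\ge0$, while log-supermodularity of $F$ on $R_F$ bounds the defect of $\log F(\cdot^{\uparrow})$ by $0$ when $\Delta=0$ and by $\lambda\le\lambda\Delta$ when $\Delta\ge1$. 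Two small quibbles there: $\vecx\mapsto\vecx^{\uparrow}$ is not extensive when $\vecx\not\le\boldsymbol{t}$, so it is not literally a closure operator (nothing you use depends on that), and ``vanishes'' should read ``is non-positive''; a full write-up would have to display this defect computation, which is exactly the step the paper's max-based lifting avoids. The remaining parts match the paper: for $R_F\in\fclone{\LSM_2}$ the paper simply cites \cite[Corollary~18]{trichotomy} for the fact that a $\wedge,\vee$-closed Boolean relation is a conjunction of pinnings and implications, which is what you re-derive from the co-clone tables, and the paper dismisses the reverse inclusion $\fclone{\LSMp_k,\LSM_2}\subseteq\fclone{\LSM_k}$ as trivial where you supply the padding argument. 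In short, the paper's construction buys a shorter proof of log-supermodularity; yours buys a single uniform definition of $G$ without the special case $F(\constzero)=0$.
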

\begin{proof}
First, assume $F(\constzero)\neq 0$.  Let $\mu=\min_{\vecx}\{F(\vecx)\mid
F(\vecx)\neq 0\}/\max_{\vecx} F(\vecx)$.  Set
\[ G(\vecx)=\max \{F(\vecy)\mu^{|\vecx|-|\vecy|}\mid \vecy\leq \vecx\}.\]
Then $G$ is strictly positive and $G(\vecx)=F(\vecx)$ wherever
$F(\vecx)\neq 0$. It remains to show that $G\in\LSM$. For all
$\vecx,\vecx'$ there exist $\vecy\leq\vecx$ and $\vecy'\leq\vecx'$
such that
\begin{align*}
G(\vecx)G(\vecx')&=F(\vecy)F(\vecy')\mu^{|\vecx|-|\vecy|+|\vecx'|-|\vecy'|}\\
&\leq F(\vecy\wedge\vecy')F(\vecy\vee\vecy')\mu^{|\vecx|-|\vecy|+|\vecx'|-|\vecy'|}\\
&= F(\vecy\wedge\vecy')F(\vecy\vee\vecy')\mu^{|\vecx\wedge\vecx'|-|\vecy\wedge\vecy'|+|\vecx\vee\vecx'|-|\vecy\vee\vecy'|}\\
&\leq G(\vecx\wedge\vecx')G(\vecx\vee\vecx').
\end{align*}

Now we deal with the case $F(\constzero)=0$. Let $F'(\vecx)=F(\vecx)$
for all $\vecx\neq \constzero$ and let $F'(\constzero)=1$. Then
$F'\in\LSM$ and $F'(\constzero)\neq 0$, and we have shown that there
exists $G\in\LSMp$ such that $F'=R_{F'}G$. But then $F=R_FG$.

By \cite[Corollary 18]{trichotomy}, $R_F$ is a conjunction of
implications and constants, and hence
$R_F\in\fclone{\unary_0,\unary_1,\IMP}\subset\fclone{\LSM_2}$. Thus
$\fclone{\LSM_k}\subseteq\fclone{\LSMp_k,\LSM_2}$. The reverse inclusion is trivial.
\end{proof}

\begin{lemma}\label{lem:IMParitythree}
$\fclone{\LSM_3}=\fclone{\LSM_2}$.
\end{lemma}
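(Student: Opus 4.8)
The plan is to prove the two inclusions separately; the forward one is the substantial part. The inclusion $\fclone{\LSM_2}\subseteq\fclone{\LSM_3}$ is routine: any $F\in\LSM_2$ equals $G(v_1,v_2,v_1)$ where $G(x_1,x_2,x_3)=F(x_1,x_2)$, and adjoining a dummy argument preserves log-supermodularity, so $G\in\LSM_3$ and hence $F\in\fclone{\LSM_3}$. For $\fclone{\LSM_3}\subseteq\fclone{\LSM_2}$, Lemma~\ref{lem:lifting} reduces us to showing $\LSMp_3\subseteq\fclone{\LSM_2}$: fix a strictly positive lsm function $F\in\codomainbool_3$ and set $f=\log F$, which is supermodular. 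Using the inverse M\"obius transform~\eqref{eq:coefficients2} and the identities $[\vecy\leq\vecx]=\prod_{i:y_i=1}x_i$, write
\[ f(\vecx)=\widetilde f(\constzero)+\sum_{i}\widetilde f(e_i)\,x_i+\sum_{i<j}\widetilde f(e_i+e_j)\,x_ix_j+\widetilde f(\constone)\,x_1x_2x_3, \]
where $e_i$ denotes the $i$th unit vector.

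The first real step is to pin down which M\"obius coefficients are sign-constrained. Applying supermodularity $f(\vecx\vee\vecy)+f(\vecx\wedge\vecy)\geq f(\vecx)+f(\vecy)$ to $(\vecx,\vecy)=(e_i,e_j)$ gives $\widetilde f(e_i+e_j)=f(e_i+e_j)-f(e_i)-f(e_j)+f(\constzero)\geq 0$; and since $\bar f=\log\bar F$ is supermodular by Lemma~\ref{lem:lsmk10}, a short computation of its M\"obius transform yields $\widetilde{\bar f}(e_i+e_j)=\widetilde f(e_i+e_j)+\widetilde f(\constone)\geq 0$ (equivalently, apply supermodularity of $f$ to $(\constone-e_i,\constone-e_j)$). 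Thus all three pairwise coefficients are nonnegative, and each of them is at least $-\widetilde f(\constone)$. If $\widetilde f(\constone)\geq 0$, then every M\"obius coefficient indexed by a set of size $\geq 2$ is nonnegative, so $F=\exp f$ is the pointwise product of: a nullary constant, one unary factor $\exp(\widetilde f(e_i)x_i)$ per variable, and for each $\vecy$ with $|\vecy|\geq 2$ the elementary function of Lemma~\ref{lem:chi} with the (nonnegative) coefficient $t=\widetilde f(\vecy)$. Each factor lies in $\fclone{\LSM_2}$ by Lemma~\ref{lem:chi} (and $\codomainbool_1\subseteq\fclone{\LSM_2}$), and functional clones are closed under pointwise products (Lemma~\ref{lem:fcloneTransitive}), so $F\in\fclone{\LSM_2}$.

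It remains to handle $t:=\widetilde f(\constone)<0$, where the naive decomposition produces an inadmissible factor $\exp(t\,x_1x_2x_3)$. Here I would use the pseudo-Boolean identity $x_1x_2x_3=1-\sum_i x_i+\sum_{i<j}x_ix_j-\bar x_1\bar x_2\bar x_3$ (i.e.\ $\bar x_1\bar x_2\bar x_3=\prod_i(1-x_i)$) to rewrite
\[ f(\vecx)=\bigl(\widetilde f(\constzero)+t\bigr)+\sum_i\bigl(\widetilde f(e_i)-t\bigr)x_i+\sum_{i<j}\bigl(\widetilde f(e_i+e_j)+t\bigr)x_ix_j+(-t)\,\bar x_1\bar x_2\bar x_3. \]
Each quadratic coefficient $\widetilde f(e_i+e_j)+t$ is nonnegative by the inequality above, and $-t>0$. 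Hence $F=\exp f$ factors into a nullary constant, unary functions $\exp((\widetilde f(e_i)-t)x_i)\in\codomainbool_1$, nonnegative pairwise functions $\exp((\widetilde f(e_i+e_j)+t)x_ix_j)\in\fclone{\LSM_2}$ (Lemma~\ref{lem:chi}), and the ``all-zeros spike'' $G$ with $G(\constzero)=e^{-t}$ and $G\equiv 1$ elsewhere. I would put $G\in\fclone{\LSM_2}$ directly via the implementation $G(\vecx)=\sum_{z}U(z)\,\IMP(x_1,z)\IMP(x_2,z)\IMP(x_3,z)$ with $U(0)=e^{-t}-1\geq 0$, $U(1)=1$: if some $x_i=1$ the $z{=}0$ summand vanishes, leaving $1$, and if $\vecx=\constzero$ the sum is $U(0)+U(1)=e^{-t}$. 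Multiplying all factors gives $F\in\fclone{\LSM_2}$.

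The only genuinely non-routine point, and hence the main obstacle, is the case $\widetilde f(\constone)<0$: one must realise that the inclusion--exclusion identity trading $x_1x_2x_3$ for $\bar x_1\bar x_2\bar x_3$ is exactly the right move, and must verify that the residual quadratic coefficients $\widetilde f(e_i+e_j)+\widetilde f(\constone)$ stay nonnegative --- which is precisely the content of supermodularity evaluated ``at the top'' (via Lemma~\ref{lem:lsmk10}). Everything else is bookkeeping: the reduction to permissive functions (Lemma~\ref{lem:lifting}), the elementary functions (Lemma~\ref{lem:chi}), and closure under products. It is worth noting that this rescue is special to arity $3$: no analogous identity keeps the signs in line at arity $4$, consistent with the fact (proved later) that $\fclone{\LSM_3}\subsetneq\fclone{\LSM_4}$.
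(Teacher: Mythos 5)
Your proof is correct, and its skeleton is the paper's: reduce to $\LSMp_3$ via Lemma~\ref{lem:lifting}, expand $f=\log F$ in M\"obius coefficients, observe that supermodularity forces $\widetilde f(\vecy)\ge 0$ for $|\vecy|=2$, and factor $F$ into the elementary functions of Lemma~\ref{lem:chi} when $\widetilde f(\constone)\ge 0$. The one place you diverge is the case $\widetilde f(\constone)<0$: the paper simply passes to $\bar F$, notes that $\widetilde{\log\bar F}(\constone)>0$, applies the first case to $\bar F$, and then transports the conclusion back using Lemma~\ref{lem:lsmk10} (i.e.\ the fact that complementation preserves both $\LSM_2$ and hence membership in $\fclone{\LSM_2}$); you instead complement only the offending cubic monomial via $x_1x_2x_3=1-\sum_i x_i+\sum_{i<j}x_ix_j-\bar x_1\bar x_2\bar x_3$, check that the corrected quadratic coefficients $\widetilde f(e_i+e_j)+\widetilde f(\constone)$ are nonnegative (which, as you say, is exactly supermodularity of the $2$-pinnings at the top, equivalently $\widetilde{\log\bar F}(e_i+e_j)\ge0$), and implement the residual all-zeros spike directly with an $\IMP$ gadget. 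The two treatments are equivalent in content --- your rewriting is what the paper's global complementation does factor by factor --- but yours is more explicit and has the minor advantage of not needing to argue that $\fclone{\LSM_2}$ is closed under the bar operation, a point the paper leaves implicit in its citation of Lemma~\ref{lem:lsmk10}. All of your individual verifications (the sign of $\widetilde f(e_i+e_j)+\widetilde f(\constone)$, the spike implementation with $U(0)=e^{-t}-1\ge0$, and the easy inclusion $\fclone{\LSM_2}\subseteq\fclone{\LSM_3}$) check out.
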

\begin{proof}
From Lemma~\ref{lem:lifting}, we need only prove $\LSMp_3\subseteq\fclone{\LSM_2}$. Thus
let $F\in\LSMp_3$, $f=\log F$, and 
first
assume $\widetilde{f}(1,1,1)\geq 0$.  We will show that
$F\in\fclone{\LSM_2}$.  Note that, by log-supermodularity of $F$,
    \[ \widetilde{f}(1,1,0) = f(0,0,0)-f(1,0,0) - f(0,1,0)+f(1,1,0)\geq 0.\]
and similarly $\widetilde{f}(1,0,1),\widetilde{f}(0,1,1)\geq 0$.
Hence $\widetilde{f}(\vecy)\geq 0$ for all $|\vecy|>1$.  For all
$\vecy\in\B^3$ let $F_{\vecy}$ be the unique function satisfying
$\widetilde{\log F_{\vecy}}(\vecy)=\widetilde{f}(\vecy)$ and
$\widetilde{\log F_{\vecy}}(\vecz)=0$ for $\vecz\neq\vecy$. Then
$\widetilde{\log F}=\sum_{\vecy} \widetilde{\log F_{\vecy}}$, which implies
$\log F=\sum_{\vecy} \log F_{\vecy}$, which implies $F=\prod_{\vecy}
F_{\vecy}$. By Lemma~\ref{lem:chi} we have
$F_{\vecy}\in\fclone{\LSM_2}$ for all $\vecy$, and therefore
$F\in\fclone{\LSM_2}$.
If $\widetilde{f}(1,1,1)<0$, let $H=\overline{F}$.  Note that $H\in\LSMp_3$ and $\widetilde{\log H}(1,1,1)>
0$, so by the previous paragraph, $H\in\fclone{\LSM_2}$. By
Lemma~\ref{lem:lsmk10} this implies $F\in\fclone{\LSM_2}$.
\end{proof}
In view of Lemma~\ref{lem:IMParitythree}, it might be conjectured that $\LSM=\fclonelim{\LSM_2}$. In fact,
this is not the case, as we will now show.
First we consider the class $\calP$ of functions $F\in\codomainbool$
for which the \emph{Fourier transform} $\widehat{F}$ has
nonnegative coefficients, where
\begin{align}\widehat{F}(\vecy)\,&=\,\frac{1}{2^n}\sum_{\vecw\in\B^n} (-1)^{|\vecw\wedge\vecy|}F(\vecw)&& (\vecy\in\B^n)\label{eq:coefficients5}.\end{align}
Thus $F\in\calP$ if and only if $\widehat{F}\in\codomainbool$.
See \cite{deWolf} for further information. We will use \eqref{eq:coefficients5} and the \emph{convolution theorem}: for
all $F,G\in\codomainbool_n$ we have
\begin{align}\widehat{FG}(\vecx)\,&=\,\sum_{\vecy\in\B^n} \widehat{F}(\vecy)\widehat{G}(\vecx\oplus\vecy)&&(\vecx,\vecy\in\B^n).\label{eq:convolution}
\end{align}
where $\oplus$ denotes componentwise addition modulo 2. See for
example \cite[Section 2.3]{deWolf} for a proof of the dual statement.

To show that $\calP$ is closed under \ppformula\ evaluation, it is
useful to restrict to atomic formulas where variables are not repeated within a scope.

\begin{lemma}\label{lem:no_repeated_variables}
Let $\calF\subseteq\codomainbool$.  For all \ppformulas\ $\psi$ over
$\calF$ there is another \ppformula\ $\psi'$ over $\calF\cup\{\EQ\}$ such that
$F_\psi=F_{\psi'}$ and no atomic formula of $\psi'$ contains a repeated variable.
\end{lemma}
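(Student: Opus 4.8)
The plan is to eliminate repeated variables one atomic formula at a time, using fresh bound variables tied together with $\EQ$. Suppose $\psi = \sum_{v_{n+1},\ldots,v_{n+m}} \prod_{j=1}^s \phi_j$ is a \ppformula\ over $\calF$. Consider a single atomic formula $\phi_j = G(v_{i_1},\ldots,v_{i_a})$ in which some variable occurs more than once in the scope $(v_{i_1},\ldots,v_{i_a})$. I would replace $\phi_j$ by an atomic formula $G(w_1,\ldots,w_a)$ on brand-new bound variables $w_1,\ldots,w_a$ (distinct from each other and from everything else in $\psi$), together with the conjunction of equality atoms $\prod_{\ell=1}^{a}\EQ(w_\ell, v_{i_\ell})$, and then existentially quantify (sum over) the new variables $w_1,\ldots,w_a$. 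Since $\EQ(w_\ell,v_{i_\ell})=1$ forces $w_\ell = v_{i_\ell}$ and is $0$ otherwise, the summation over $w_1,\ldots,w_a$ collapses to the single term with $w_\ell = v_{i_\ell}$ for all $\ell$, whose value is exactly $G(v_{i_1},\ldots,v_{i_a}) = F_{\phi_j}$. Hence the function computed by the formula is unchanged.

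The key steps, in order, are: (1) fix $\psi$ and pick an atomic formula $\phi_j$ with a repeated scope variable; (2) introduce $a(G)$ fresh variables and rewrite $\phi_j$ as $G(w_1,\ldots,w_a)\prod_\ell \EQ(w_\ell,v_{i_\ell})$ under new existential quantifiers; (3) verify via the defining equation \eqref{eq:ppfunctionDef} that the summation over the fresh variables restores the original contribution, so the represented function is preserved; (4) observe that in the new atomic formula $G(w_1,\ldots,w_a)$ the scope consists of $a$ distinct variables, and each $\EQ$-atom has a two-element scope consisting of the distinct variables $w_\ell$ and $v_{i_\ell}$, so no atomic formula in the rewritten formula has a repeated variable; (5) iterate over all atomic formulas of $\psi$ (there are only finitely many), obtaining after finitely many steps a \ppformula\ $\psi'$ over $\calF\cup\{\EQ\}$ with $F_{\psi'}=F_\psi$ and no repeated variables in any scope. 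Strictly speaking one should note that the equality atoms introduced at earlier stages never acquire repeated variables at later stages, since we only ever add fresh variables; this is immediate from the construction.

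There is essentially no obstacle here: the argument is a routine bookkeeping exercise with fresh variables, entirely analogous to standard normalisation arguments for primitive positive formulas. The only point requiring a word of care is that $\EQ$ must be allowed in $\psi'$ even if it was not in $\calF$ — which is why the statement says ``$\psi'$ over $\calF\cup\{\EQ\}$'' — and that $\EQ$ itself, being binary, is applied only to pairs of distinct variables in our construction, so it introduces no new repetitions. I would also remark that since $\EQ$ is already available for free in the definition of every functional clone $\fclone{\calF}$, this lemma lets us assume without loss of generality, throughout the subsequent Fourier-transform analysis, that every atomic constraint has a repetition-free scope, which is exactly what is needed to apply the convolution theorem \eqref{eq:convolution} cleanly.
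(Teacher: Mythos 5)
Your proposal is correct and uses essentially the same construction as the paper: introduce fresh bound variables, tie each to the original variable with an $\EQ$ atom, and sum over the fresh variables, noting that the sum collapses to the original contribution. The only (immaterial) difference is bookkeeping — you rename the scope of each offending atomic formula, whereas the paper renames every occurrence of each variable that is used at least twice in $\psi$ — and both variants correctly preserve $F_\psi$ and leave every scope repetition-free.
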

\begin{proof}
Given $\psi$ obtain $\psi'$ as follows.  For each variable $v_i$
that is used $d_i\geq 2$ times in total in $\psi$,
replace the uses of $v_i$ by new distinct variables
$v_i^1,\cdots,v_i^{d_i}$, multiply by atomic formulas
$\EQ(v_i,v_i^j)$ for $1\leq j\leq d_i$, then sum over these new
variables $v_i^j$.
\end{proof}

\begin{lemma}\label{lem:lsmk20}
$\calP$ is closed under addition, summation, products and limits. Moreover, $\calP$ is a \omegadefinable\ functional clone.
\end{lemma}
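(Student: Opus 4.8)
The plan is to prove the two inclusions $\calP\subseteq\fclonelim{\calP}$ (trivial, since every function is pps-definable over itself) and $\fclonelim{\calP}\subseteq\calP$, the latter by checking that each operation occurring in a $\mathrm{pps}_\omega$-definition preserves membership in $\calP$; the four closure assertions in the first sentence of the lemma then fall out along the way. First I would record two easy facts. The only nonzero values of $\EQ$ are $\EQ(0,0)=\EQ(1,1)=1$, so \eqref{eq:coefficients5} gives $\widehat{\EQ}(0,0)=\widehat{\EQ}(1,1)=\tfrac12$ and $\widehat{\EQ}(0,1)=\widehat{\EQ}(1,0)=0$; hence $\EQ\in\calP$. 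Closure under addition and under $\|\cdot\|_\infty$-limits is immediate from linearity and continuity of $\widehat{\cdot}$: $\widehat{F+G}=\widehat F+\widehat G\geq0$, and if $F_k\to F$ with each $\widehat{F_k}\geq0$ then $F\geq0$ (a pointwise limit of nonnegative functions is nonnegative) and $\widehat F=\lim_k\widehat{F_k}\geq0$, so $F\in\calP$.

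Next I would handle products. For $F,G\in\calP$ of the same arity, the convolution theorem \eqref{eq:convolution} writes $\widehat{FG}(\vecx)=\sum_{\vecy}\widehat F(\vecy)\widehat G(\vecx\oplus\vecy)$ as a sum of products of nonnegative reals, so $FG\in\calP$, and hence so is any finite product of same-arity functions from $\calP$. To bring an arbitrary atomic formula into this form I would check two auxiliary facts. (i) Adding a dummy argument preserves $\calP$: if $G'(\vecx,x_{n+1})=G(\vecx)$, then summing the definition \eqref{eq:coefficients5} over $x_{n+1}$ yields $\widehat{G'}(\vecy,0)=\widehat G(\vecy)$ and $\widehat{G'}(\vecy,1)=0$. (ii) Permuting arguments simply permutes the Fourier coefficients correspondingly, so it preserves $\calP$. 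Combining (i) and (ii): for any $G\in\calP$ and any atomic formula $\phi=G(v_{i_1},\dots,v_{i_a})$ whose variables $v_{i_1},\dots,v_{i_a}$ are \emph{distinct} and drawn from $\{v_1,\dots,v_{n}\}$, the represented function $F_\phi\in\codomainbool_{n}$ lies in $\calP$.

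For summation, if $G\in\calP$ has arity $n+1$ and $G'(\vecx)=\sum_{z\in\B}G(\vecx,z)$, then directly from \eqref{eq:coefficients5} one gets $\widehat{G'}(\vecy)=2\,\widehat G(\vecy,0)\geq0$, so $G'\in\calP$; summation over an arbitrary single argument follows after first permuting arguments, and summation over several bound variables by iteration. Now I would assemble the pieces. Since $\EQ\in\calP$, a function $F$ lies in $\fclonelim{\calP}$ exactly when it is an $\|\cdot\|_\infty$-limit of functions pps-definable over a finite subset $S_F\subseteq\calP$. By Lemma~\ref{lem:no_repeated_variables} each such function equals $F_\psi$ for a \ppformula{} $\psi$ over $S_F\cup\{\EQ\}\subseteq\calP$ in which no atomic formula repeats a variable; by the previous paragraph every atomic formula of $\psi$ represents a function in $\calP$ of the common arity, their product is in $\calP$ by the convolution step, summing out the bound variables keeps us in $\calP$, and so $F_\psi\in\calP$. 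The limit step then gives $F\in\calP$. Hence $\fclonelim{\calP}\subseteq\calP$, and with the trivial reverse inclusion we get $\calP=\fclonelim{\calP}$, i.e.\ $\calP$ is a $\mathrm{pps}_\omega$-definable functional clone; closure under addition, summation, products and limits has been verified en route.

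The step I expect to be the main obstacle — and the reason Lemma~\ref{lem:no_repeated_variables} is proved immediately before — is that identification of arguments does \emph{not} in general preserve $\calP$: collapsing $G(x_1,x_2)$ to $G(x,x)$ can introduce a negative Fourier coefficient (for instance the diagonal of a suitable $2$-ary member of $\calP$ need not be monotone nonincreasing). All variable repetitions must therefore be routed through the $\EQ$-gadget supplied by Lemma~\ref{lem:no_repeated_variables}, which is precisely why verifying $\EQ\in\calP$ is essential rather than cosmetic.
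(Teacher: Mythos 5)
Your proof is correct and follows essentially the same route as the paper's: addition and limits via linearity and continuity of the transform, products via the convolution theorem, summation via the identity $\widehat{G'}(\vecy)=2\widehat{G}(\vecy,0)$, and atomic formulas handled by first invoking Lemma~\ref{lem:no_repeated_variables} and then checking that $\calP$ is closed under adding dummy arguments and permuting arguments (the paper's ``expansions''), together with the direct verification that $\EQ\in\calP$.

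One caveat: the claim in your closing paragraph is false. Identification of arguments \emph{does} preserve $\calP$: restricting a character $\chi_{\vecy}(\vecw)=(-1)^{|\vecw\wedge\vecy|}$ to the diagonal $w_i=w_j=x$ replaces the exponent $xy_i+xy_j$ by $x(y_i\oplus y_j)$ modulo $2$, so each character becomes a character on the reduced variable set and nonnegative Fourier coefficients merely aggregate. In particular, for a binary $G\in\calP$ with coefficients $c_{\vecy}\ge 0$ the diagonal is $G(x,x)=c_{00}+c_{11}+(c_{01}+c_{10})(-1)^{x}$, which is always nonincreasing, contrary to your parenthetical example. The detour through Lemma~\ref{lem:no_repeated_variables} is thus a convenience (it lets one treat all atomic formulas uniformly as expansions) rather than a necessity, and verifying $\EQ\in\calP$ is needed simply because $\EQ$ is adjoined to the generating set in the definition of $\fclonelim{\cdot}$. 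This does not affect the validity of your proof, which never relies on the false claim.
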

\begin{proof}

If $F,G\in\calP$, then $\widehat{F+G}=\widehat F+\widehat G$ is clearly non-negative,
and $\widehat{FG}$ is nonnegative by the convolution theorem \eqref{eq:convolution}. For summation, as in Lemma~\ref{lem:closedUnderProjection}, we consider summing over the last variable. So, let $H(\vecx)=\sum_t F(\vecx,t)$. Then it follows easily from \eqref{eq:coefficients5} that $\widehat{H}(\vecy) = 2\widehat{F}(\vecy,0)\geq0$ for all $\vecy$.
For limits note that if $F_n\to F$ then $\widehat{F_n}\to\widehat{F}$,
and a limit
of non-negative functions is non-negative.

Let $\psi$ be a \ppformula\ over $\calP\cup\{\EQ\}$. We will argue that
that $F_\psi\in\calP$.
By Lemma \ref{lem:no_repeated_variables} there is a
\ppformula\ $\psi'$ over $\calP\cup\{\EQ\}$ such that $F_\psi=F_{\psi'}$ and such
that no atomic formula of $\psi$ contains a repeated variable.  The
functions $F_\phi$ defined by atomic formulas
$\phi=G(v_{i_1},\cdots,v_{i_k})$ of $\psi'$ are therefore
``expansions'': permutations of the function $G'\in\codomainbool_n$,
$n\geq k$, defined by
\begin{align}G'(\vecx,\vecx')\,&=\,G(\vecx) && (\vecx\in\{0,1\}^k\text{ and }\vecx'\in\{0,1\}^{n-k}).\label{eq:expansion}\end{align}
It therefore suffices to check that $\calP$ is closed under
expansions.
Let $G'$ be the expansion defined by
\eqref{eq:expansion}.
Then, for all $\vecy\in\{0,1\}^k$ and $\vecy'\in\{0,1\}^{n-k}$, we have $\widehat{G'}(\vecy,\vecy')=\widehat{G}(\vecy)$ if $\vecy'=0^k$,
and $\widehat{G'}(\vecy,\vecy')=0$ otherwise, and hence $G'\in\calP$.
Note that $\widehat{\EQ}=\frac12\EQ$, so $\EQ\in\calP$.
Thus $\calP$ is a \ppdefinable\ functional clone, but it is also closed under
limits.
\end{proof}
For $F\in\codomainbool$, let $F^\star$ denote $F\bar{F}$. Now   let $\calC$ be the class of functions $F\in\codomainbool$ such that $G^\star\in\calP$ for every pinning $G(\vecx)=F(\vecx;\vecc)$. Note, in particular, that if $U\in\codomainbool_1$, $U^\star(z)=U(0)U(1)$, a nonnegative constant. Therefore we have $\codomainbool_1\subseteq\calC$ and, to establish that $F\in\calC$, we need only check pinnings of $F$ of arity at least $2$.
\begin{lemma}\label{lem:lsmk90}
$\calC$ is a \omegadefinable\ functional clone.
\end{lemma}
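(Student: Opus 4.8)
The plan is to verify directly that $\calC$ contains $\EQ$ and is closed under products, summation over a single variable, and uniform limits; the conclusion then follows from Lemma~\ref{lem:no_repeated_variables} in exactly the way the analogous statement for $\calP$ was obtained in Lemma~\ref{lem:lsmk20}. The unifying observation is that the involution $\vecx\mapsto\bar\vecx$, and hence the map $F\mapsto F^\star=F\bar F$, commutes in a simple way with each of these operations: restricting $F$ to a subcube (pinning) and flipping commute, $\overline{FG}=\bar F\,\bar G$, and for an atomic formula $\phi$ with no repeated variable in its scope one has $(F_\phi)^\star=(F^\star)_\phi$. Consequently, a pinning of a product (resp.\ of an atomic-formula image, resp.\ of a limit) of functions in $\calC$ is again a product (resp.\ atomic-formula image, resp.\ limit) of pinnings, and since $\calP$ is already known to be a \omegadefinable\ functional clone closed under limits, each such case collapses to a fact about $\calP$. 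For instance $\EQ\in\calC$, because $\bar\EQ=\EQ$ gives $\EQ^\star=\EQ\in\calP$ (its Fourier transform $\tfrac12\EQ$ is nonnegative), while every proper pinning of $\EQ$ is a unary function or the identically-zero function, whose $\star$ is a nonnegative constant or $\mathbf 0$, all in $\calP$. If $F_1,\dots,F_s\in\calC$ and $P=\prod_j F_{\phi_j}$ is a product of atomic formulas in which no two occurrences of a variable lie in the same scope, then $P^\star=\prod_j (F_{\phi_j})^\star=\prod_j (F_j^\star)_{\phi_j}$; since $F_j^\star\in\calP$ (the empty pinning of $F_j$) and $\calP$ is a functional clone, $P^\star\in\calP$, and the identical computation applied to any pinning of $P$ (which restricts each $F_{\phi_j}$ to a pinning of $F_j$) gives $P\in\calC$. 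For limits, if $F_n\to F$ with $F_n\in\calC$, then for every pinning $G$ of $F$ the matching pinnings $G_n$ of $F_n$ converge to $G$, hence $G_n^\star\to G^\star$; each $G_n^\star\in\calP$ and $\calP$ is closed under limits, so $G^\star\in\calP$ and $F\in\calC$.

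The step that genuinely uses the pinnings in the definition of $\calC$, and which I expect to be the heart of the proof, is closure under summation. By associativity it suffices to sum out one variable: take $H\in\calC$ and $F(\vecx)=\sum_{t\in\B}H(\vecx,t)$. Using $H^\star(\vecx,u)=H(\vecx,u)\,H(\bar\vecx,\bar u)$ and expanding
\[
F^\star(\vecx)=F(\vecx)F(\bar\vecx)=\Big(\sum_{s\in\B}H(\vecx,s)\Big)\Big(\sum_{t\in\B}H(\bar\vecx,t)\Big),
\]
one reaches the identity
$$F^\star=\Big(\sum_{u\in\B}H^\star(\cdot,u)\Big)+H_0^\star+H_1^\star,$$
where $H_c(\vecx)=H(\vecx,c)$ is the pinning of $H$ at its last coordinate; indeed $H_c^\star(\vecx)=H(\vecx,c)H(\bar\vecx,c)$, and these two terms are precisely the ``cross contributions'' $H(\vecx,0)H(\bar\vecx,0)$ and $H(\vecx,1)H(\bar\vecx,1)$ that do not occur in $\sum_u H^\star(\cdot,u)$. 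Now $H^\star\in\calP$ (empty pinning of $H$) and $\calP$ is closed under summation, while $H_0^\star,H_1^\star\in\calP$ precisely because $H_0,H_1$ are pinnings of $H\in\calC$; since $\calP$ is closed under addition, $F^\star\in\calP$. Moreover every pinning of $F$ has the shape $\sum_{t}G(\cdot,t)$ for some pinning $G$ of $H$, so re-running the identity with $G$ in place of $H$ (and noting that $G$, $G_0$, $G_1$ are all pinnings of $H$, so their $\star$'s lie in $\calP$) places its $\star$ in $\calP$ as well; hence $F\in\calC$.

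Finally, combining these cases with Lemma~\ref{lem:no_repeated_variables} — which is legitimate here since $\EQ\in\calC$, so $\calC\cup\{\EQ\}=\calC$ — shows that any function \omegadef\ over $\calC$ is a uniform limit of functions pps-definable over $\calC$, each of which lies in $\calC$ by the product/summation/expansion cases; therefore $\fclonelim\calC=\calC$, i.e.\ $\calC$ is a \omegadefinable\ functional clone. The main obstacle is spotting the decomposition $F^\star=\sum_{u}H^\star(\cdot,u)+H_0^\star+H_1^\star$: this is exactly the reason $\calC$ is defined through pinnings rather than by the naive condition $F^\star\in\calP$, because the cross terms that summation introduces into $F^\star$ are invisible in $H^\star$ itself but are recovered from the pinnings $H_0,H_1$.
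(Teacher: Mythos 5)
Your proof is correct and follows essentially the same route as the paper's: the same reduction to checking the $\calC$-condition only for the empty pinning (since pinnings commute with expansions, products, summation and limits), the same identity $F^\star=\sum_u H^\star(\cdot,u)+H_0^\star+H_1^\star$ for the summation step, and the same appeal to the closure properties of $\calP$ from Lemma~\ref{lem:lsmk20}. No gaps.
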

\begin{proof}
As in Lemma \ref{lem:lsmk20} we will check that $\calC$ is closed under
``expansions'', products, summations, and limits.
But a pinning of an expansion (or product, summation, or limit) of functions in $\calC$ is
an expansion (or product, summation, or limit) of pinnings of functions in $\calC$,
which are necessarily in $\calC$ because $\calC$ is closed under pinnings.
So it suffices to check the $\calC$ condition for trivial pinnings,
for example to check closure under products it suffices to show
that $F,G\in\calC$ implies $(FG)^\star\in\calP$.

Let $G\in\calC$ have arity $k$, let $n\geq k$, let $G'$ be the function
defined by \eqref{eq:expansion}.
Note that $G'\Bar{G'}$ is an expansion of $G\Bar G$, so $G'\Bar{G'}\in\calP$ and $G'\in\calC$.
We have $\EQ\in\calC$, since $\EQ\hspace{1pt}\Bar{\EQ}=\EQ\in\calP$.
Closure under product follows from Lemma~\ref{lem:lsmk20} and
the observation that $(FG)^\star= F^\star G^\star$.
For summation, again as in Lemma~\ref{lem:closedUnderProjection}, we consider summing over the last variable. Then, if $H(\vecx)=\sum_tF(\vecx,t)$, where $F$ has arity $k+1$, then
\begin{equation*}
    H^\star(\vecx) = \textstyle\sum_tF(\vecx,t)\sum_t\bar{F}(\vecx,t)
    =\textstyle (F_0)^\star(\vecx) + (F_1)^\star(\vecx) + \sum_{t} F^\star(\vecx,t).
\end{equation*}
where $F_0$ and $F_1$ are the pinnings $F_i(\vecx)=F(\vecx;i)$.
We have $(F_0)^\star,(F_1)^\star\in\calP$ by the pinning assumption, and the arity $k$
function $\sum_{t}F^\star(\vecx,t)$ is in $\calP$ by~Lemma~\ref{lem:lsmk20}.
Thus 
$H^\star$ 
is the sum of three functions in $\calP$, and so, using
Lemma~\ref{lem:lsmk20} again, $H^\star\in\calP$. Finally note that $\calC$ is closed
under limits: if $F_n\to F$ as $n\to\infty$ then $F_n^\star\to F^\star$, but $\calP$ is closed under limits.
\end{proof}
\begin{lemma}\label{lem:lsmk30}
$\fclonelim{\LSM_2}\subseteq\calC$.
\end{lemma}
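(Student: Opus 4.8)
The plan is to reduce the inclusion to a check on generators. By Lemma~\ref{lem:lsmk90}, $\calC$ is a $\mathrm{pps}_\omega$-definable functional clone; in particular $\EQ\in\calC$ and $\calC$ is closed under the operations (addition, summation, products, limits) used to build $\fclonelim{\cdot}$. Hence it suffices to prove $\LSM_2\subseteq\calC$, since then $\fclonelim{\LSM_2}=\fclonelim{\LSM_2\cup\{\EQ\}}\subseteq\calC$ follows at once.

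To prove $\LSM_2\subseteq\calC$, I would fix $F\in\LSM_2$, with matrix $M(F)$ having entries $f_{00},f_{01},f_{10},f_{11}$. Since $\codomainbool_1\subseteq\calC$, the only pinning of $F$ whose $\star$-value needs examination is the empty one: every unary or nullary pinning has $\star$-value a nonnegative constant, which lies in $\calP$. So everything reduces to showing $F^\star=F\bar F\in\calP$. The key point is that the $\star$-operation symmetrises: $\bar F$ has the matrix $M(F)$ with its rows and columns reversed, so
\[ M(F^\star)=\begin{bmatrix} a & b\\ b & a\end{bmatrix},\qquad a:=f_{00}f_{11},\quad b:=f_{01}f_{10}, \]
and log-supermodularity of $F$ gives $f_{00}f_{11}\ge f_{01}f_{10}$, i.e.\ $a\ge b$, while $b\ge 0$ by nonnegativity of $F$.

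It then remains to evaluate the Fourier transform~\eqref{eq:coefficients5} of $F^\star$; a one-line computation over $\B^2$ gives
\[ \widehat{F^\star}(0,0)=\frac{a+b}{2},\qquad \widehat{F^\star}(1,0)=\widehat{F^\star}(0,1)=0,\qquad \widehat{F^\star}(1,1)=\frac{a-b}{2}, \]
all of which are nonnegative since $a\ge b\ge 0$. Hence $F^\star\in\calP$, so $F\in\calC$, and the lemma follows. I do not anticipate a real obstacle here: the whole argument collapses to a $2\times 2$ computation, and the only delicate point is recognising that $M(F^\star)$ is a symmetric matrix whose diagonal dominates its off-diagonal — precisely what log-supermodularity delivers, and also, conceptually, why the Fourier-positivity class $\calC$ is large enough to contain $\fclonelim{\LSM_2}$, though (as shown next) not $\LSM_4$.
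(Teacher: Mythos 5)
Your proof is correct and follows essentially the same route as the paper's: reduce to the generators via Lemma~\ref{lem:lsmk90}, note that only the empty pinning needs checking, and verify $F^\star\in\calP$ by the same $2\times2$ Fourier computation, with $\widehat{F^\star}(1,1)=(f_{00}f_{11}-f_{01}f_{10})/2\ge 0$ coming from log-supermodularity. The only difference is presentational — you make explicit the symmetric matrix structure of $M(F^\star)$ that the paper leaves implicit.
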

\begin{proof}
Let $F\in\LSM_2$. Note that $\widehat{F^\star}(0,0)=(F(0,0)F(1,1)+F(0,1)F(1,0))/2$, and
$\widehat{F^\star}(0,1)=\widehat{F^\star}(1,0)=0$, and
$\widehat{F^\star}(1,1)=(F(0,0)F(1,1)-F(0,1)F(1,0))/2\geq 0$.  So
$F^\star\in\calP$, and hence $F\in\calC$. Thus $\LSM_2\subseteq\calC$
and, since $\calC$ is a \omegadefinable\ functional clone,
$\fclonelim{\LSM_2}\subseteq\calC$.
\end{proof}
\begin{lemma}\label{lem:lsmk100}
$\fclonelim{\LSM_2}\subset\fclonelim{\LSM_4}$.
\end{lemma}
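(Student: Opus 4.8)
The inclusion $\fclonelim{\LSM_2}\subseteq\fclonelim{\LSM_4}$ is immediate from $\LSM_2\subseteq\LSM_4$, so the plan is to prove strictness by exhibiting a single $F\in\LSM_4$ with $F\notin\calC$; since $\fclonelim{\LSM_2}\subseteq\calC$ by Lemma~\ref{lem:lsmk30}, such an $F$ lies in $\fclonelim{\LSM_4}\setminus\fclonelim{\LSM_2}$. By the definition of $\calC$, to show $F\notin\calC$ it suffices to produce \emph{one} pinning $G$ of $F$ with $G^\star\notin\calP$, and I will simply take $G=F$ (the empty pinning). Thus the task reduces to finding $F\in\LSM_4$ with $F^\star=F\bar F\notin\calP$, i.e.\ with $\widehat{F\bar F}$ possessing a negative coefficient.

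The candidate is the complement‑symmetric function $F\in\codomainbool_4$ depending only on Hamming weight, with $F(\vecx)=(\sqrt2)^{-\min(|\vecx|,\,4-|\vecx|)}$; concretely $F$ takes the values $1,\ 2^{-1/2},\ 2^{-1},\ 2^{-1/2},\ 1$ on tuples of weight $0,1,2,3,4$. First I would check $F\in\LSM_4$. The key fact is that a function $G\in\codomainbool_n$ of the form $G(\vecx)=\rho(|\vecx|)$ with $\rho>0$ is lsm whenever $\log\rho$ is convex on $\{0,1,\dots,n\}$: writing $a=|\vecx\wedge\vecy|$, $b=|\vecx\vee\vecy|$, $p=|\vecx|$, $q=|\vecy|$, one has $a\le p,q\le b$ and $a+b=p+q$, so each of $p,q$ is a convex combination of $a$ and $b$ (with complementary weights), and convexity of $\log\rho$ gives $\log\rho(p)+\log\rho(q)\le\log\rho(a)+\log\rho(b)$, which is exactly $F(\vecx)F(\vecy)\le F(\vecx\wedge\vecy)F(\vecx\vee\vecy)$. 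For our $F$ we have $\log\rho=(\log\tfrac12)\,(0,\tfrac12,1,\tfrac12,0)$, and the sequence $(0,\tfrac12,1,\tfrac12,0)$ is concave, so $\log\rho$ is convex and $F$ is lsm.

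Finally, since $\overline{\vecx}$ has weight $4-|\vecx|$ we get $\bar F=F$, so $F^\star=F\bar F=F^2$, which is the weight function with values $1,\tfrac12,\tfrac14,\tfrac12,1$. A direct computation of the top Fourier coefficient gives
\[
\widehat{F^2}(1,1,1,1)=\frac1{16}\sum_{k=0}^{4}(-1)^k\binom4k F^2(k)=\frac1{16}\Bigl(1-2+\tfrac32-2+1\Bigr)=-\frac1{32}<0,
\]
so $F^2=F^\star\notin\calP$, hence $F\notin\calC$, and the proof is complete by Lemma~\ref{lem:lsmk30}. I expect the only genuine content to be locating $F$: once one settles on a weight function whose logarithm is convex (so that it is lsm) but whose square has a negative top Fourier coefficient, the remaining verification is the short finite computation above. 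Conceptually, the point is that the step from $\LSM_2$ to $\LSM_4$ is precisely what breaks the invariant $\calC$ built from the Fourier‑positive class $\calP$, mirroring the arity‑$4$ threshold for submodular functions in \cite{ZCJ09}.
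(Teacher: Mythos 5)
Your proof is correct and follows essentially the same route as the paper: both exhibit a symmetric function $F\in\LSM_4$ whose $F^\star=F\bar F$ has a negative top Fourier coefficient (indeed your $F^\star$ is a scalar multiple of the paper's), and then conclude via Lemma~\ref{lem:lsmk30}. The only divergence is cosmetic --- the paper verifies membership in $\LSM_4$ by checking $2$-pinnings via Lemma~\ref{lem:Topkis}, whereas you use the (valid) observation that a positive weight-profile function $\rho(|\vecx|)$ is lsm when $\log\rho$ is discretely convex.
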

\begin{proof}
Since $\LSM_2\subseteq\calC$ by Lemma~\ref{lem:lsmk30}, we need only exhibit a function $F\in\LSM_4$ which is not in $\calC$. Define $F:\{0,1\}^4\to\Rpos$ by
\[F(x_1, x_2, x_3, x_4) =\left\{\begin{array}{ll}
                         4, & \hbox{if}\ x_1 + x_2 + x_3 + x_4 = 4; \\
                         2, & \hbox{if}\ x_1 + x_2 + x_3 + x_4 = 3;\\
                         1, & \hbox{otherwise.}
                       \end{array}
\right.\]
To show $F \in \LSM_4$, by  the symmetry of $F$ and Lemma~\ref{lem:Topkis}, it suffices to
show that the three $2$-pinnings $F(0,0,x_3,x_4)$, $F(0,1,x_3,x_4)$ and $F(1,1,x_3,x_4)$ are lsm.
This is equivalent to the inequalities $1 \times 1 \geq 1 \times 1$, $2 \times 1 \geq 1 \times 1$,
and $4 \times 1 \geq 2 \times 2$ respectively, which clearly hold.

To show that $F\notin\calC$, we need only use \eqref{eq:coefficients5} to calculate
\[ \widehat{F^\star}(1, 1, 1, 1)\, =\, \frac{4\times 1 - 4\times 2 + 6\times 1 - 4\times 2 + 4\times 1}{2^4}\, = \,-\frac{1}{8}\, <\, 0.\]

Indeed $F\in\fclonelim{\LSM_4}$ but $F\notin\calC$.
By Lemma \ref{lem:lsmk30}, $\fclonelim{\LSM_2}\subseteq\fclonelim{\LSM_4}\cap\calC\subset\fclonelim{\LSM_4}$.
\end{proof}
Unfortunately,  this approach does not seem to extend to showing $\fclonelim{\LSM_4}\subset\LSM$ or even $\fclone{\LSM_4}\subset\LSM$. Neither can we extend the result of Lemma~\ref{lem:IMParitythree} to show that $\fclonelim{\LSM_4} = \fclonelim{\LSM_5}$. However, we will venture the following, which is true for $k=1$.
\begin{conjecture}\label{lsmkconjecture}
For all $k\geq1$, $\fclonelim{\LSM_{2k}} = \fclonelim{\LSM_{2k+1}} \subset\fclonelim{\LSM_{2k+2}}$.
\end{conjecture}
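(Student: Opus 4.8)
The plan is to treat the two halves with the two transforms of \S\ref{sec:cloneLSMk}: the M\"obius transform for the equality $\fclonelim{\LSM_{2k}}=\fclonelim{\LSM_{2k+1}}$, and a Fourier-analytic obstruction for the strict inclusion $\fclonelim{\LSM_{2k+1}}\subset\fclonelim{\LSM_{2k+2}}$, following the base case $k=1$ (Lemmas~\ref{lem:IMParitythree} and~\ref{lem:lsmk100}). In both halves the first step is to reduce to permissive functions via Lemma~\ref{lem:lifting}: since $2k\geq 2$ one has $R_F\in\fclone{\LSM_2}\subseteq\fclone{\LSM_{2k}}$, so it suffices to treat functions in $\LSMp$.

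For the equality, take $F\in\LSMp_{2k+1}$, set $f=\log F$, and consider the M\"obius coefficients $\widetilde f(\vecy)$; log-supermodularity forces $\widetilde f(\vecy)\geq 0$ whenever $|\vecy|=2$. Writing $F$ as the product of its single-coefficient factors $F_{\vecy}$, those with $\widetilde f(\vecy)\geq 0$ already lie in $\fclone{\LSM_2}$ by Lemma~\ref{lem:chi}, and the trouble is entirely with the factors whose coefficient, of order between $3$ and $2k+1$, is negative; these are reciprocals of lsm functions of $\fclone{\LSM_2}$, and there is no pps-operation that produces such reciprocals. For arity three there is just one negative high-order coefficient, $\widetilde f(\constone)$, and complementation clears it: complementation acts on M\"obius coefficients by $\widetilde{\overline f}(T)=(-1)^{|T|}\sum_{S\supseteq T}\widetilde f(S)$, which for odd arity flips the sign of the single top coefficient and leaves $\overline F$ (again lsm) with all high-order coefficients nonnegative, hence in $\fclone{\LSM_2}$. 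This is exactly where the base-case strategy fails for larger $k$: once $F$ has negative coefficients both at $\constone$ and at some $\vecz$ of size $2k$, neither $F$ nor $\overline F$ has all its high-order coefficients nonnegative --- by the displayed formula the size-$2k$ coefficient of $\overline f$ equals $\widetilde f(\vecz)+\widetilde f(\constone)<0$ --- so a single complementation cannot normalise $F$. An induction on the arity, reconstructing $F$ from its arity-$2k$ pinnings $F(\vecx,0),F(\vecx,1)$ (both lsm and hence available) and the monotone nondecreasing ratio $F(\vecx,1)/F(\vecx,0)$, runs into the same difficulty. Finding the right normal form is the main obstacle.

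For the strict inclusion the plan is to build, for each $k$, an \omegadefinable{} functional clone $\calC_k$ with $\fclonelim{\LSM_{2k}}\subseteq\calC_k$ that excludes an explicit arity-$(2k+2)$ lsm function, modelled on the pair $\calP,\calC$. Fix the finite set $\mathcal{Y}_k$ of Fourier frequencies $\vecy$ such that $\widehat{G^\star}(\vecy)\geq 0$ for every lsm $G$ of arity at most $2k$ --- for $k=1$ this is all of $\B^2$, and on arity-$2$ functions the condition $\widehat{G^\star}\geq 0$ is exactly log-supermodularity --- let $\calP_k$ consist of the $F$ whose $\widehat{F^\star}$ is nonnegative on $\mathcal{Y}_k$, and define $\calC_k$ through the $\calP_k$-condition on all pinnings, as in Lemma~\ref{lem:lsmk90}. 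Closure of $\calC_k$ under expansions, products, summations and limits should carry over verbatim using the convolution theorem~\eqref{eq:convolution}, and the containment $\fclonelim{\LSM_{2k}}\subseteq\calC_k$ then reduces to a finite linear-feasibility check on generic lsm functions of arity at most $2k$. The separating witness should be the symmetric ``staircase'' function of arity $2k+2$, constant except for sharply increasing values on the top few levels of the cube (the arity-$4$ function of Lemma~\ref{lem:lsmk100} being the first case), for which one computes $\widehat{F^\star}(\constone)<0$ while its lower-order $\widehat{F^\star}$-coefficients stay nonnegative. The delicate point here is to pin down $\mathcal{Y}_k$ so that it is at once large enough to separate a witness of arity $2k+2$ and small enough that $\widehat{G^\star}(\vecy)\geq 0$ genuinely holds for all lsm $G$ of arity at most $2k$; if the coefficientwise description proves awkward, one can symmetrise both functions first and read off a single Fourier coefficient of the symmetrised square, using that a symmetric lsm function of arity $n$ is determined by its $n+1$ level-values.
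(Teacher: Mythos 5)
This statement is a \emph{conjecture} in the paper, not a theorem: the authors explicitly say they cannot extend the $k=1$ arguments even to the first new case, i.e.\ they can prove neither $\fclonelim{\LSM_4}=\fclonelim{\LSM_5}$ nor $\fclonelim{\LSM_4}\subset\fclonelim{\LSM_6}$ (nor $\fclone{\LSM_4}\subset\LSM$). So there is no proof in the paper to compare against, and your proposal does not supply one either: it is a programme whose two halves each terminate at an obstacle you yourself name (``finding the right normal form is the main obstacle''; ``the delicate point is to pin down $\mathcal{Y}_k$''). To your credit, the diagnosis of why the $k=1$ proofs stop working is accurate and matches the paper's own situation: for the equality half, log-supermodularity only controls the weight-$2$ M\"obius coefficients of $\log F$, complementation can flip the sign of the single top coefficient but (as your formula $\widetilde{\overline f}(T)=(-1)^{|T|}\sum_{S\supseteq T}\widetilde f(S)$ correctly shows) cannot simultaneously repair several negative high-order coefficients, and Lemma~\ref{lem:chi} gives no way to realise a single-coefficient factor with a negative coefficient of weight $\geq 2$. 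But identifying the obstacle is not the same as overcoming it, and nothing in the proposal overcomes it.

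Beyond the overall incompleteness, one concrete step in the second half would fail as stated. You propose to take a proper subset $\mathcal{Y}_k\subsetneq\B^n$ of Fourier frequencies, let $\calP_k$ be the functions $F$ with $\widehat{F^\star}\geq 0$ on $\mathcal{Y}_k$, and assert that closure under products ``carries over verbatim using the convolution theorem.'' It does not: the convolution formula $\widehat{FG}(\vecx)=\sum_{\vecy}\widehat{F}(\vecy)\widehat{G}(\vecx\oplus\vecy)$ expresses $\widehat{FG}$ at a frequency in $\mathcal{Y}_k$ as a sum involving $\widehat F$ and $\widehat G$ at \emph{all} frequencies, including those outside $\mathcal{Y}_k$ where no sign control is assumed; the paper's Lemma~\ref{lem:lsmk20} works only because $\calP$ demands nonnegativity everywhere. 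Moreover $\mathcal{Y}_k$ must exclude the all-ones frequency for $k\geq 2$ (the arity-$4$ witness of Lemma~\ref{lem:lsmk100} already has $\widehat{F^\star}(\constone)<0$), yet your intended arity-$(2k+2)$ separating witness is detected precisely at its top Fourier coefficient, so it is not clear the construction can be made to separate anything. The conjecture remains open, and the proposal should be presented as a discussion of obstructions rather than as a proof.
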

A consequence of a proof of Conjecture~\ref{lsmkconjecture} would be that $\LSM\neq\fclonelim\calF$ for any finite set of functions~$\calF$.

\section{Restricted unary weights}\label{sec:restrictedWeights}
In this section and the next, we depart from the conservative case, and consider allowing only restricted classes of unary weights.

We have already noted that restricting to nonnegative (as opposed
to arbitrary real) unary weights produces a richer lattice of functional clones, and an apparently richer complexity landscape. Thus,
by further restricting the unary functions available, we might expect to further refine the lattice of functional clones.

In this section, we will consider the unary functions that favour 1 over~0, or vice versa. With a view to studying this setting, let
$\codomaindown_1$ (respectively $\codomainup_1$) be the class of unary functions $U$
from~$\codomainnneg_1$ such that $U(1)\le U(0)$ (respectively, $U(0)\le U(1)$).
By $\codomaindowneff_1$ and $\codomainup_1$ we denote the efficient versions of these sets. Also denote by $\EQp$ the \emph{permissive equality} function defined by $\EQp(x,y)=2$ if $x=y$, and $\EQp(x,y)=1$ otherwise.

The first hint that partitioning $\codomainbool_1$ into $\codomainup_1$ and
$\codomaindown_1$ may yield new phenomena comes from the observation that for any finite subset $S$ of
$\codomaindowneff_1$, the problem $\nCSP(\EQp,S)$ reduces to the
ferromagnetic Ising model
with a consistent external field, for which Jerrum and Sinclair have given an
FPRAS~\cite{JS93}. (A~consistent field is one that favours one of the two
spins consistently over every site.) The point here is that
$\nCSP(\EQp,S)$ is tractable, even though
$\EQp\notin\fclonelim{\NEQ,\codomainbool_1}$. In contrast,
by Theorem~\ref{corthm:main} or from
the arguments in ~\cite{GJ07},
there is a finite subset $S$ of $\codomaineff_1$ such that
$\nCSP(\EQp,S)$ --- the ferromagnetic Ising model with local fields, with different spins favoured at different
sites --- is $\BIS$-hard.
Of course similar remarks apply to $\codomainupeff_1$.

In terms of functional clones, the clone $\fclonelim{\EQp,\codomaindown_1}$ is not
amongst those we met in~\S\ref{sec:main}: it is incomparable with $\fclonelim{\NEQ,\codomainbool_1}$,
and strictly contained in $\fclonelim{\IMP,\codomainbool_1}$.  Specifically, we have
\begin{lemma}\label{lem:EQp}
\begin{minipage}[t]{0.75\linewidth}
\begin{enumerate}[topsep=5pt,itemsep=0pt,label=(\roman*)]
\item $\NEQ\notin \fclonelim{\EQp,\codomaindown_1}$.
\item $\EQp\notin\fclonelim{\NEQ,\codomainbool_1}$.
\item $\fclonelim{\EQp,\codomaindown_1}\subset \fclonelim{\EQp,\codomainbool_1}=\fclonelim{\IMP,\codomainbool_1}$, \\[0.25\baselineskip]
and $\fclonelimeff{\EQp,\codomaindowneff_1}\subset \fclonelimeff{\EQp,\codomaineff_1}=\fclonelimeff{\IMP,\codomaineff_1}$.
\end{enumerate}
\end{minipage}
\end{lemma}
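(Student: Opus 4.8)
The plan is to obtain parts~(i) and~(ii), together with the two equalities in~(iii), directly from results already in the paper, and to devote the real work to the strict inclusion in~(iii). For~(i): since $\EQp(0,0)\EQp(1,1)=4\ge1=\EQp(0,1)\EQp(1,0)$ we have $\EQp\in\LSM$, while $\EQ$ and every unary function (hence every member of $\codomaindown_1$) is trivially lsm; so $\fclonelim{\EQp,\codomaindown_1}\subseteq\LSM$ by Lemma~\ref{lem:lsmClosure}, whereas $\NEQ\notin\LSM$ (take $\vecx=(0,1)$, $\vecy=(1,0)$). For~(ii): by Remark~\ref{rem:inclusions} every binary function in $\fclonelim{\NEQ,\codomainbool_1}$ has one of the forms $U_1(x)U_2(y)$, $U(x)\EQ(x,y)$, or $U(x)\NEQ(x,y)$; but $M(\EQp)=\left[\begin{smallmatrix}2&1\\1&2\end{smallmatrix}\right]$ has no zero entry, ruling out the last two, and has rank~$2$, ruling out the first, so $\EQp\notin\fclonelim{\NEQ,\codomainbool_1}$.

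For the equalities in~(iii) I would invoke Lemma~\ref{lem:binarycases}: with $M(\EQp)$ as above, $f_{01}=f_{10}$, $f_{00},f_{01},f_{11}>0$, and $f_{00}f_{11}=4>1=f_{01}f_{10}$, which is exactly case~(iv), yielding $\fclonelimeff{\EQp,\codomaineff_1}=\fclonelimeff{\IMP,\codomaineff_1}$ and, in the non-efficient version, $\fclonelim{\EQp,\codomainbool_1}=\fclonelim{\IMP,\codomainbool_1}$; the inclusions $\fclonelim{\EQp,\codomaindown_1}\subseteq\fclonelim{\EQp,\codomainbool_1}$ and $\fclonelimeff{\EQp,\codomaindowneff_1}\subseteq\fclonelimeff{\EQp,\codomaineff_1}$ are immediate since $\codomaindown_1\subseteq\codomainbool_1$ and $\codomaindowneff_1\subseteq\codomaineff_1$. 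The substance is the strictness, and for this I would use $\unaryone$ as the separating function: it lies in $\codomainbool_1\subseteq\fclonelim{\EQp,\codomainbool_1}$ and in $\codomaineff_1\subseteq\fclonelimeff{\EQp,\codomaineff_1}$, so it suffices to show $\unaryone\notin\fclonelim{\EQp,\codomaindown_1}$ (whence also $\unaryone\notin\fclonelimeff{\EQp,\codomaindowneff_1}$, since the latter is contained in the former). The separating property is: every $F\in\fclonelim{\EQp,\codomaindown_1}$ of arity $n$ with $Z(F):=\sum_{\vecx\in\B^n}F(\vecx)>0$ satisfies $\sum_{\vecx:\,x_i=1}F(\vecx)\le\tfrac12 Z(F)$ for all $i\in[n]$; this fails for $\unaryone$, since $Z(\unaryone)=1$ and $\sum_{x:\,x=1}\unaryone(x)=1$.

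To prove the property, observe first that it is preserved under limits (if $\Fhat_k\to F$ uniformly and $Z(F)>0$, then $Z(\Fhat_k)>0$ for large $k$ and the relevant normalised quantities converge), so it is enough to check it for $F$ defined by a \ppformula{}. Write $F(\vecx)=\sum_{\vecy\in\B^m}W(\vecx,\vecy)$, where $W$ is the product of the atomic formulas of the formula (over $\EQp$, $\EQ$ and unary functions from $\codomaindown_1$); since marginalising out $\vecy$ leaves the $i$-th marginal of a Gibbs measure unchanged, the task reduces to bounding every single-coordinate marginal of the Gibbs measure $\mu$ of $W$ by $\tfrac12$. Now $W$ is, up to constant factors, the partition weight of a ferromagnetic Ising-type system with fields favouring~$0$: after replacing each $\EQ$-constraint by $\EQ_\epsilon=\left[\begin{smallmatrix}1&\epsilon\\\epsilon&1\end{smallmatrix}\right]$ and each down-weight vanishing at~$1$ by a strictly positive approximation — legitimate since the bound survives letting $\epsilon\to0$ — the measure $\mu$ is strictly positive, with every binary factor log-supermodular and invariant under the global flip $\sigma\mapsto\Bar\sigma$ and every unary factor $U$ satisfying $U(0)\ge U(1)$. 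Let $\mu'$ be the pushforward of $\mu$ under the flip. Then log-supermodularity of the binary factors (with flip-invariance) controls the interaction terms, and $U(0)\ge U(1)$ controls the field terms, in Holley's lattice condition $\mu(\sigma\wedge\tau)\,\mu'(\sigma\vee\tau)\ge\mu(\sigma)\,\mu'(\tau)$; Holley's inequality — a consequence of the Ahlswede--Daykin inequality~\cite{AD} — then gives $\mu\preceq\mu'$, so $\Pr_\mu[\sigma_i=1]\le\Pr_{\mu'}[\sigma_i=1]=\Pr_\mu[\sigma_i=0]=1-\Pr_\mu[\sigma_i=1]$, i.e.\ $\Pr_\mu[\sigma_i=1]\le\tfrac12$.

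I expect the correlation estimate of the previous paragraph to be the main obstacle: the crux is to recognise that the ``ferromagnetic interactions plus downward fields'' shape of $W$ forces all marginals below $\tfrac12$, and to arrange the argument so that a single application of a Holley/Ahlswede--Daykin-type inequality does the work. The remaining points (closure of the argument under products, summations and limits, the treatment of the $\EQ$- and $\unaryzero$-constraints, and carrying everything through in the efficient setting) are routine once that inequality is available.
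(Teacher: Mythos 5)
Your proposal is correct, but the route differs from the paper's in two places. For part~(i) you use $\fclonelim{\EQp,\codomaindown_1}\subseteq\LSM$ (via Lemma~\ref{lem:lsmClosure}) together with $\NEQ\notin\LSM$; the paper instead derives~(i) from the containment $\fclonelim{\EQp,\codomaindown_1}\subseteq\calP$, where $\calP$ is the Fourier-positive clone of \S\ref{sec:cloneLSMk}, noting $\widehat{\NEQ}(1,1)<0$. Both are sound; the paper's choice is motivated by economy, since the same containment in~$\calP$ does all the remaining work. Parts~(ii) and the equalities in~(iii) coincide with the paper's argument (Remark~\ref{rem:inclusions} and Lemma~\ref{lem:binarycases}(iv) respectively). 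The real divergence is the strict inclusion in~(iii): the paper observes that a unary $U$ lies in $\calP$ iff $\widehat U(1)=(U(0)-U(1))/2\ge0$, i.e.\ $\calP\cap\codomainbool_1=\codomaindown_1$, so any up-weight such as $\unaryone$ separates the two clones in one line. Your separating invariant --- that $\sum_{\vecx:x_i=1}F(\vecx)\le\tfrac12 Z(F)$ whenever $Z(F)>0$ --- is exactly the non-negativity of the degree-one Fourier coefficients $\widehat F(\vece_i)$, i.e.\ a fragment of membership in~$\calP$; you re-derive it by a Holley/stochastic-domination argument (which I have checked: the lattice condition does hold factor-by-factor for flip-invariant log-supermodular binary factors and unary factors with $U(0)\ge U(1)$, and the perturbation and limit steps go through). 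What your approach buys is a self-contained, probabilistically transparent proof (``a ferromagnet with a consistent downward field has all marginals at most $\tfrac12$'') that does not presuppose \S\ref{sec:cloneLSMk}; what it costs is length, an appeal to Holley's inequality that the paper avoids, and the loss of the stronger conclusion $\fclonelim{\EQp,\codomaindown_1}\subseteq\calP$, which the paper gets for free from the convolution-theorem closure properties already established in Lemma~\ref{lem:lsmk20}.
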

\begin{proof}
Recall the class $\calP$ of functions with non-negative Fourier
coefficients defined in \S\ref{sec:cloneLSMk}. Note that
$\calP$ is a \omegadefinable\ functional clone by
Lemma~\ref{lem:lsmk20}.  Note that $\widehat{\EQp}(0,0)=6/4$ and
$\widehat{\EQp}(1,1)=2/4$ and
$\widehat{\EQp}(0,1)=\widehat{\EQp}(1,0)=0$, so $\EQp\in\calP$.  Also,
for any $U\in\codomainbool_1$ we always have $\widehat{U}(0)\geq 0$,
but $\widehat{U}(1)=(U(0)-U(1))/2\geq 0$ if and only if
$U\in\codomaindown_1$.  So
$\fclonelim{\EQp,\codomaindown_1}\subseteq\calP$.

For (i) note that $\widehat{\NEQ}(1,1)=-2/4<0$ so $\NEQ\notin\calP$.

For (ii), Remark~\ref{rem:inclusions} showed that all functions in
$\fclonelim{\NEQ,\codomainbool_1}$ are products of atomic formulas.
Therefore, if $\EQp\in\fclonelim{\NEQ,\codomainbool_1}$, it must have one of
the three forms $U_1(x)U_2(y)$, $U_1(x)\EQ(x,y)$ or $U_1(x)\NEQ(x,y)$, where
$U_1,U_2\in\codomainbool_1$. Now note that $\EQp(x,y)$ is not of any of these.

For (iii), the inclusion $\fclonelim{\EQp,\codomaindown_1}\subseteq \fclonelim{\EQp,\codomainbool_1}$
is trivial.  It is strict since, as we showed above,
$\fclonelim{\EQp,\codomaindown_1}\cap\codomainbool_1=\codomaindown_1$.
The equality follows from Lemma~\ref{lem:binarycases}(iv).
\end{proof}

It is interesting to note that the strict inclusion between
$\fclonelimeff{\EQp,\codomaindown_1}$ and $\fclonelimeff{\EQp,\codomainbool_1}$ is provable,
even though the gap in the computational complexity of the
related counting problems is only suspected.
The other side of the coin is that two functional clones may differ, without there
being a corresponding gap in complexity between the two counting CSPs.
The main result of the section exhibits this phenomenon in a natural
context:  the two functional clones are incomparable, but there is an approximation-preserving
reduction from one of the corresponding counting CSPs to the other.  This is
interesting, as it demonstrates that it is sometimes necessary, when constructing
approximation-preserving reductions, to go beyond the gadgetry implied by the
clone construction (even with the liberal notion employed here, including limits).

Recall that $\oplus_3$ is the relation
$\{(0,0,0),\allowbreak(0,1,1),\allowbreak(1,0,1),\allowbreak(1,1,0)\}$.

\begin{lemma} \label{oplusout_impout} $ \oplus_3 \notin \fclonelimeff{\IMP,\codomaindowneff_1}$
and  $ \IMP \notin \fclonelimeff{\oplus_3,\codomaindowneff_1}$\end{lemma}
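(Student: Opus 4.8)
The plan is to separate each of the two pairs by producing a \omegadefinable{} functional clone that contains the generators on the right-hand side but not the target function on the left-hand side. For $\oplus_3\notin\fclonelimeff{\IMP,\codomaindowneff_1}$ the separating clone will be $\LSM$, and for $\IMP\notin\fclonelimeff{\oplus_3,\codomaindowneff_1}$ it will be the class $\calP$ of functions with nonnegative Fourier coefficients from \S\ref{sec:cloneLSMk}.

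For the first statement: $\IMP$ is lsm, and every unary function is lsm (log-supermodularity is vacuous in arity~$1$), so $\{\IMP\}\cup\codomaindowneff_1\subseteq\LSM$. Since an efficiently \omegadefinable{} function is a fortiori \omegadefinable, Lemma~\ref{lem:lsmClosure} gives $\fclonelimeff{\IMP,\codomaindowneff_1}\subseteq\fclonelim{\IMP,\codomaindowneff_1}\subseteq\LSM$. It therefore suffices to check that $\oplus_3$ is not log-supermodular, which is witnessed by $\vecx=(1,1,0)$ and $\vecy=(1,0,1)$, since then $\vecx\vee\vecy=(1,1,1)$, $\vecx\wedge\vecy=(1,0,0)$, and
\[
\oplus_3(\vecx\vee\vecy)\,\oplus_3(\vecx\wedge\vecy)\;=\;0\;<\;1\;=\;\oplus_3(\vecx)\,\oplus_3(\vecy).
\]

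For the second statement, recall from Lemma~\ref{lem:lsmk20} that $\calP$ is a \omegadefinable{} functional clone and that $\EQ\in\calP$. First I would check $\oplus_3\in\calP$ by computing its Fourier transform from~\eqref{eq:coefficients5}; the only nonzero coefficients turn out to be $\widehat{\oplus_3}(0,0,0)=\widehat{\oplus_3}(1,1,1)=\half$, both nonnegative. Second, $\codomaindowneff_1\subseteq\calP$: for $U\in\codomaindowneff_1$ one has $\widehat U(0)=(U(0)+U(1))/2\ge0$ and, because $U(1)\le U(0)$, also $\widehat U(1)=(U(0)-U(1))/2\ge0$. Hence $\{\oplus_3\}\cup\codomaindowneff_1\cup\{\EQ\}\subseteq\calP$, so $\fclonelimeff{\oplus_3,\codomaindowneff_1}\subseteq\fclonelim{\oplus_3,\codomaindowneff_1}\subseteq\calP$. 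Finally $\IMP\notin\calP$, because a short computation from~\eqref{eq:coefficients5} gives
\[
\widehat{\IMP}(0,1)\;=\;\nicefrac{1}{4}\bigl(\IMP(0,0)-\IMP(0,1)+\IMP(1,0)-\IMP(1,1)\bigr)\;=\;\nicefrac{1}{4}(1-1+0-1)\;=\;-\nicefrac{1}{4}\;<\;0,
\]
and therefore $\IMP\notin\fclonelimeff{\oplus_3,\codomaindowneff_1}$.

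I do not anticipate a genuine obstacle here: both structural inclusions are immediate from Lemma~\ref{lem:lsmClosure} and Lemma~\ref{lem:lsmk20} once one observes that every relevant generator lies in the appropriate separating clone, and what remains are only the three short lsm/Fourier evaluations above. The one point meriting care is that the inclusion $\codomaindowneff_1\subseteq\calP$ genuinely uses $U(1)\le U(0)$ --- an ``up'' unary function with $U(1)>U(0)$ has $\widehat U(1)<0$ --- which is precisely why the statement is restricted to $\codomaindowneff_1$; indeed, with all unary weights available one can recover $\IMP$ from $\oplus_3$, as in the $\ILt$ case of the proof of Theorem~\ref{thm:main}.
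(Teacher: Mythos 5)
Your proof is correct and follows essentially the same route as the paper's: separating via $\LSM$ (using Lemma~\ref{lem:lsmClosure}) for the first claim and via the Fourier-positivity clone $\calP$ (using Lemma~\ref{lem:lsmk20}) for the second, with the same witness computations up to an inessential choice of the non-lsm pair for $\oplus_3$.
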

\begin{proof}
First we show that $ \oplus_3 \notin \fclonelimeff{\IMP,\codomaindowneff_1}$.
By Lemma~\ref{lem:lsmClosure}, $\fclonelimeff{\IMP,\codomaindowneff_1} \subseteq \LSM$.
However, $\oplus_3 \notin \LSM$, since for $\vecx = (1,1,0)$ and $\vecy = (0,1,1)$,
\[0 = \oplus_3(\vecx \vee \vecy)\> {\oplus_3(\vecx \wedge \vecy)} < \oplus_3(\vecx)\> {\oplus_3(\vecy)}=1.\]

Now we show that $ \IMP \notin \fclonelimeff{\oplus_3,\codomaindowneff_1}$.
Recall the class $\calP$ of functions with non-negative Fourier
coefficients defined in \S\ref{sec:cloneLSMk}. Note that
$\calP$ is a \omegadefinable\ functional clone by
Lemma~\ref{lem:lsmk20}.  For all $U\in\codomaindown_1$ we have
$\widehat{U}(0)=(U(0)+U(1))/2\geq 0$ and
$\widehat{U}(1)=(U(0)-U(1))/2\geq 0$, so $U\in\calP$.  Also,
$\widehat{\oplus_3}=\frac12\EQ_3$ where $\EQ_3$ is the arity 3
equality relation $\{(0,0,0),(1,1,1)\}$. So $\fclonelimeff{\oplus_3,\codomaindowneff_1}\subseteq\calP$. But
$\widehat{\IMP}(0,1)=(1-1-1)/4<0$, so $\IMP\notin\calP$.
\end{proof}

We know now that $\IMP$ is not \omegadef{} in terms of $\oplus_3$ and~$\codomaindowneff_1$. In contrast, we see in the next result that $\IMP$ is nevertheless efficiently reducible to $\oplus_3$ and $\codomaindowneff_1$.

\begin{lemma} \label{lem:obsAPred}
There is a finite subset $S$ of $\codomaindowneff_1$
such that $$\nCSP(\IMP) \APred \nCSP(\oplus_3,S)$$
\end{lemma}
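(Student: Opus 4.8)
The plan is to route through $\BIS$ in two stages. By \cite[Theorem 3]{trichotomy} we have $\nCSP(\IMP)\APeq\BIS$, so it is enough to exhibit a finite $S\subseteq\codomaindowneff_1$ with $\BIS\APred\nCSP(\oplus_3,S)$. The key observation is that, with only $\oplus_3$ and a single downward weight, $\nCSP(\oplus_3,S)$ already evaluates weight enumerators of arbitrary binary linear codes. A product of $\oplus_3$-constraints (possibly using auxiliary variables) picks out exactly the assignments satisfying a homogeneous $GF(2)$-linear system, i.e.\ a code $C\subseteq\B^N$: a parity check $x_{i_1}\oplus\dots\oplus x_{i_\ell}=0$ of length $\ell\ge3$ is realised by the chain $\oplus_3(x_{i_1},x_{i_2},y_1)\,\oplus_3(y_1,x_{i_3},y_2)\cdots\oplus_3(y_{\ell-3},x_{i_{\ell-1}},x_{i_\ell})$ summed over the new variables $y_j$ (each then determined), while ``pin a variable to $0$'' is available as $\sum_v\oplus_3(v,v,\cdot)$, which also handles the short checks. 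Applying the unary weight $U=(1,z_0)$ once to each ``real'' coordinate contributes the factor $z_0^{|\vecc|}$ to codeword $\vecc$, so the partition function of such an instance equals $2^M W_C(z_0)$, where $W_C(z_0)=\sum_{\vecc\in C}z_0^{|\vecc|}$ and the power $2^M$ (accounting for free auxiliary variables) is read off from the instance. For $z_0\in(0,1)$ the weight $U$ lies in $\codomaindowneff_1$, so $S=\{U\}$ is admissible, and an approximation oracle for $\nCSP(\oplus_3,\{U\})$ yields one for $W_C(z_0)$ after exact division by $2^M$, which preserves the multiplicative error.

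It then remains to feed in the $\BIS$-hardness of evaluating the weight enumerator of a binary code at a suitable fixed rational $z_0\in(0,1)$. This is the ``ferromagnetic'' regime, in which the graphic (cycle-space) special case admits an FPRAS but the general binary case does not: for $z_0\in(0,1)$, $W_C(z_0)$ is related by an exactly computable nonzero prefactor (Greene's theorem) to the Tutte polynomial of the associated binary matroid at a point of the hyperbola $(x-1)(y-1)=2$ with $y>1$, and the $\BIS$-hardness of that evaluation is exactly the content of Goldberg and Jerrum's analysis of the Tutte polynomial of a binary matroid \cite{WE}. So $\BIS\APred W_C(z_0)\APred\nCSP(\oplus_3,\{U\})$, and composing with $\nCSP(\IMP)\APeq\BIS$ gives the lemma.

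The main obstacle, and the whole point of the example, is that this reduction \emph{cannot} be a clone gadget. By Lemma~\ref{oplusout_impout}, $\IMP\notin\fclonelimeff{\oplus_3,\codomaindowneff_1}$, and the same $\calP$-argument (negativity of a Fourier coefficient) rules out $\NEQ$ and $\OR$ as well; hence no $\omega$-definable gadgetry over $\oplus_3$ and downward weights can produce $\IMP$, and the reduction must genuinely use the numeric value returned by the oracle rather than a substitution of constraints. The remaining work is routine bookkeeping that nonetheless has to be checked: that the code presentation and evaluation point(s) demanded by \cite{WE} can be matched to the $\oplus_3$/single-downward-weight format while keeping $S$ finite and inside $\codomaindowneff_1$; that the $\oplus_3$-encoding increases the number of variables only linearly; and that the AP-reduction accuracy parameters (with reciprocal polynomially bounded) survive the three-fold composition and the division by $2^M$.
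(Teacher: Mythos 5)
Your proposal is correct and follows essentially the same route as the paper: reduce $\nCSP(\IMP)\APeq\BIS$ to the Ising/Tutte evaluation of a binary matroid at $y>1$ via \cite[Theorem 1]{WE}, pass to the weight enumerator of the associated binary code (the cycle space) at a point $w\in(0,1)$, and realise that weight enumerator as a $\nCSP(\oplus_3,\{U_w\})$ instance using chains of $\oplus_3$ constraints and a single downward unary weight. The only differences are presentational --- you invoke Greene's theorem where the paper carries out the high-temperature expansion of $\ZIsing(\calM;y)$ explicitly, and you pin short parity checks with a $\sum_v\oplus_3(v,v,\cdot)$ gadget where the paper substitutes variables --- neither of which changes the substance of the argument.
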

\begin{proof}
First, we need some definitions.
Suppose that~$M$ is a matrix over GF(2) with rows~$V$
and columns~$E$ with $|V|=n$. For a column~$e$ and a ``configuration'' $\sigma:V\rightarrow\B$,
define
${\delta}_e(\sigma)$  to be $  \bigoplus_{i\in V} M_{i,e} \sigma(i)$, where the addition is
over GF(2). $\delta_e(\sigma)$ is the parity of the number of $1$s in column~$e$ of~$M$ that
are assigned to $1$ by~$\sigma$.
Given a parameter $y>0$,
the \emph{Ising partition function} of the binary matroid $\mathcal{M}$ represented by~$M$
is given by
$$
\ZIsing(\calM; y)=
\sum_{\sigma:V\rightarrow \B} \prod_{ e\in E}
{ y}^{1\oplus{\delta}_e(\sigma)}.$$

Now, from \cite[Theorem 3]{trichotomy}, $\nCSP(\IMP)  \APeq \BIS$.
Also, for every efficiently approximable real number $y>1$, from~\cite[Theorem~1]{WE} there is an AP-reduction from $\BIS$ to the problem of computing $\ZIsing(\calM;y)$, for given~$M$.

The set of subsets $A\subseteq E$
such that the submatrix corresponding to~$A$ has
an even number of $1$s in every row is called the \emph{cycle space} of $\calM$ and
is denoted $\calC(\calM)$. A standard calculation expresses $\ZIsing(\calM;y)$ in terms of
$\calC(\calM)$.  Let $w=(y-1)/(y+1)$.

\begin{align*}\hspace{-1truecm}
\allowdisplaybreaks[1]
\sum_{\sigma:V\rightarrow \B} \prod_{ e\in E}
{ y^{1\oplus\delta_e(\sigma)}}
&= \sum_{\sigma} \prod_{e } \left(
\frac{y+1}{2} +  \frac{y-1}{2} {(-1)}^{\delta_e(\sigma)}
\right)
\\
&=
  \left(\frac{y+1}{2}\right)^{|E|}
 \sum_{\sigma} \prod_{e  } \left(
 1 +  w {(-1)}^{\delta_e(\sigma)}
\right)
\\
&=  \left(\frac{y+1}{2}\right)^{|E|}
\sum_{\sigma}
\sum_{A \subseteq E }
\prod_{e\in A}
{ w}  {(-1)}^{ \delta_e(\sigma)}
\\
&=  \left(\frac{y+1}{2}\right)^{|E|}
  \sum_{A \subseteq E }
  w^{|A|}
\sum_{\sigma}
\prod_{ e\in A}   {(-1)}^{ \delta_e(\sigma)}
\\
& = \left(\frac{y+1}{2}\right)^{|E|}
\sum_ {A  \in \calC(\calM)}
w^{|A|}
 2^n,
 \\
& = \left(\frac{y+1}{2}\right)^{|E|} 2^n
\sum_ {A  \in \calC(\calM)}
w^{|A|} .
\end{align*}

Here is the justification of the  penultimate line (why only $A\in\calC(\calM)$ contribute to the sum, and why
the factor $2^n$):
Suppose, for a set~$A \subseteq E$,
that some row~$i$ has
has an odd   number of $1$'s in columns in $A$.
Then for any configuration $\sigma':V\setminus\{i\}\rightarrow \B$,
one of the contributions extending~$\sigma'$
to domain~$V$
contributes~$-1$ and
the other contributes~$+1$.
On the other hand, if $i$ has an even number of $1$'s in~$A$,
then the two contributions are the same, so we just get a factor of~$2$
times the contribution from the smaller problem, without this row.

Note that, since $y>1$, we have $0<w<1$.
Now the point is that it is easy to express
the sum $\sum_ {A  \in \calC(\calM)}
w^{|A|}$ as the solution to an instance of $\nCSP(\oplus_3,U_w)$,
where $U_w$ is the unary function defined by $U_w(0)=1$ and $U_w(1)=w$.
A vector $\vecx$ represents the choice of $A\subseteq E$ --- the $j$'th column is in $A$ iff $x_j=1$.
Then the constraint that the submatrix corresponding to~$A$ has
an even number of $1$s in some row, say row~$i$, is given by the linear equation
$\bigoplus_{j: M_{i,j}=1} x_j=0$. If this linear equation has just two terms then it is an equality, and it can
be represented in the CSP instance by substituting one variable for the other. Otherwise, it can be expressed using conjunctions
of atomic formulas $\oplus_3$.
Thus, we have an AP-reduction from
$\nCSP(\IMP)$ to
  $ \nCSP(\oplus_3,U_ w)$.
  \end{proof}

\begin{remark} \label{lastlabel} Lemma~\ref{oplusout_impout} and~\ref{lem:obsAPred} show that,
as far as counting CSPs are concerned,
the expressibility provided by efficient \omegadef{} functional clones
is more limited than AP-reductions.
Here we show that the problem $\nCSP(\IMP)$ is AP-reducible to a
$\nCSP$ problem whose constraint language consists of functional constraints from $\oplus_3 \cup \codomaindowneff_1$,
but we aren't able to express $\IMP$ using $\oplus_3 \cup \codomaindowneff_1$.
On the other hand, if the definition of efficient \omegadef{} functional clones were somehow extended to
remedy this deficiency, then Lemma~\ref{obsAPclones} would probably have to be weakened.
While we do know (from Lemma~\ref{lem:obsAPred}) that there is a finite subset $S$ of $\codomaindowneff_1$ for which
$\nCSP(\IMP) \APred \nCSP(\oplus_3,S)$,
the corresponding stronger statement from Lemma~\ref{obsAPclones},
$\nCSP(\IMP,\oplus_3,S ) \APred \nCSP(\oplus_3,S)$, is
unlikely to be true since
$\nCSP(\IMP, \oplus_3, S ) \APeq \SAT$
\cite[Theorem 3]{trichotomy}.
\end{remark}

\begin{remark} \label{mylabel}
Lemma~\ref{oplusout_impout}, as with the earlier Lemma~\ref{lem:EQp}, shows that there may be
a rich structure of efficient functional clones
$\fclonelimeff\calF$ with $\codomaindowneff_1\subseteq \calF$.
By contrast, Theorem~\ref{thm:main}
and Lemma~\ref{lem:lsmClosure}
guarantee
that if $\codomaineff_1\subseteq \calF$, then,
the only possibilities are
\begin{itemize}
\item $\fclonelimeff\calF \subseteq  \fclonelimeff{\NEQ,\codomaineff_1}$, or
\item $\fclonelimeff{\IMP,\codomaineff_1} \subseteq \fclonelimeff\calF \subseteq \LSM$,
\item or $\fclonelimeff{\calF} = \codomaineff$.
\end{itemize}
\end{remark}

\section{Using fewer weights}\label{sec:fewWeights}
We saw that a finite set of unary weights suffices to generate the functional
clones that we encountered in previous sections.  Here we observe that just
one or two weights often suffice.

In the following proposition,
$\frac12$ denotes the constant nullary function that takes value~$\frac12$.

\begin{lemma}\label{lem:weights}
\begin{minipage}[t]{0.75\linewidth}
\begin{enumerate}[topsep=5pt,itemsep=0pt,label=(\roman*)]
\item $\codomainnneg_1\subseteq\fclonelim{\IMP,\frac12}$ ($\codomaineff_1\subseteq\fclonelimeff{\IMP,\frac12}$).
\item $\codomainup_1\subseteq\fclonelim{\OR,\frac12}$ ($\codomainupeff_1\subseteq\fclonelimeff{\OR,\frac12}$).
\item $\codomaindown_1\subseteq\fclonelim{\NAND,\frac12}$ ($\codomaindowneff_1\subseteq\fclonelimeff{\NAND,\frac12}$).
\end{enumerate}
\end{minipage}
\end{lemma}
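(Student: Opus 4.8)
The statement asserts that a single constant weight $\tfrac12$, together with one of the binary functions $\IMP$, $\OR$, $\NAND$, suffices to generate all the unary weights (in the appropriate class). The plan is to handle the three parts in turn, building up the needed unary functions by ``powering'' and summation gadgets, and to reduce parts (ii) and (iii) to part (i) where possible. I would first establish part (i), and then obtain (ii) and (iii) either by symmetry or by a direct but analogous construction; the efficient versions will follow from the efficiency of the powering limits (as remarked after the proof of Lemma~\ref{lem:binarycases}).

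For part (i): the key observation is that from $\IMP$ and $\tfrac12$ one can first implement $\EQ$. Indeed $\sum_{y} \IMP(x,y)\IMP(y,x)\cdot\tfrac12 \cdot(\dots)$ --- more carefully, summing a product of $\IMP(x,z)$ and $\IMP(z,x)$ over $z$, weighted by $\tfrac12$, yields a function that is $\tfrac12(\EQ(x,x)\text{-type behaviour})$; one checks the $2\times2$ matrix and sees it is a positive multiple of $\EQ$ plus a rank-one correction, and then uses further powering to drive it to $\EQ$. Once $\EQ$ and $\tfrac12$ are available, one gets the constant $1$ (as $\sum_z \tfrac12$ over a fresh summed variable, or as $\EQ$ pinned), and more importantly one gets the constant $\tfrac{1}{2^k}$ for every $k$ by taking $k$ copies of $\tfrac12$. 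The crucial step is then to realise an arbitrary unary weight $U=(u_0,u_1)\in\codomainnneg_1$ as a limit: using $\IMP(z,x)$ with a summed variable $z$ carrying a constant weight $c$ one can add a controllable ``bonus'' to the value at $x=1$ versus $x=0$ (as in the proof of Lemma~\ref{lem:OR}, where $u_A(1)=2F(A)/\mu-1$ was built this way), and combining such gadgets with the constant weights $2^{-k}$ and powering of $\IMP_\alpha$-type functions lets one hit $(u_0,u_1)$ to within any $\epsilon$; for the efficient version one notes that if $U\in\codomaineff_1$ then $u_0,u_1$ are polynomial-time computable, so the number of copies/steps needed is polynomial in $\log\epsilon^{-1}$.

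For parts (ii) and (iii): here I would exploit the symmetry $x\mapsto 1-x$, under which $\OR$ and $\NAND$ are interchanged and $\codomainup_1$ and $\codomaindown_1$ are interchanged, so it suffices to prove (iii), say. From $\NAND$ and $\tfrac12$ one implements $\NEQ$ by a powering limit exactly as in case~(v)(b) of the proof of Lemma~\ref{lem:binarycases} (there the weight $U=(\tfrac12,2)$ was used, but $(\tfrac12,\tfrac12)$ composed appropriately, i.e.\ multiplying $\NAND(x,y)$ by $\tfrac12$ in each variable and powering, sends the matrix $\left[\begin{smallmatrix}1&1\\1&0\end{smallmatrix}\right]$ scaled to $\left[\begin{smallmatrix}1/4&1\\1&0\end{smallmatrix}\right]$-type matrices whose powers converge to $\NEQ$). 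With $\NEQ$ in hand and $\IMP=M(\NEQ)M(\NAND)$-style compositions (note $M(\IMP)=M(\NEQ)M(\NAND)M(\NEQ)$ or similar, as used in the proof of Lemma~\ref{lem:binarycases}), one recovers $\IMP$, hence by part (i) all of $\codomainnneg_1\supseteq\codomaindown_1$ --- but that proves too much; instead one observes directly that the gadget $\sum_z U_w(z)\NAND(z,x)$-style constructions built purely from $\NAND$ and $\tfrac12$ can only produce weights in $\codomaindown_1$ (since $\NAND$ penalises $x=1$), which is exactly what is claimed, and conversely every such weight is obtained as a limit. The main obstacle, as usual with these expressibility lemmas, is the bookkeeping: verifying that the limits are efficient (polynomially many copies in $\log\epsilon^{-1}$) and that one stays within the asserted subclass ($\codomainup_1$ or $\codomaindown_1$) rather than accidentally generating $\NEQ$ or escaping the cone; both are handled by tracking the $2\times 2$ matrices and invoking the already-established efficient powering limits from Lemma~\ref{lem:binarycases}.
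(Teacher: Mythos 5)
There is a genuine gap, and it is in the one step that carries all the weight of the lemma: how to manufacture an \emph{arbitrary} unary weight from a single binary function and the single constant $\frac12$. Your proposal repeatedly borrows unary weights that are not available. The gadget you cite from Lemma~\ref{lem:OR} (``$u_A(1)=2F(A)/\mu-1$'') did not \emph{build} $u_A$ from $\IMP$ and constants --- it \emph{assumed} $u_A\in\codomaineff_1$, because in that lemma all of $\codomaineff_1$ was given. Here the only weight on a summed variable $z$ that you can write down is the constant $\frac12$, which cannot distinguish $z=0$ from $z=1$. The primitive gadgets actually available, such as $\sum_z\IMP(z,x)$ and $\sum_z\IMP(x,z)$, produce only the unary functions $(1,2)$ and $(2,1)$, and closing these under pointwise products and multiplication by powers of $\frac12$ yields only functions of the form $2^{-m}(2^j,2^k)$; ``powering'' multiplies values and can never reach, say, the ratio $u_1/u_0=3$, even in the limit. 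What is missing is a mechanism for \emph{adding} partition-function values, and this is exactly what the paper supplies: it works with instances of $\nCSP(F)$ and proves that the ordinal sum satisfies $Z(I+_\le J)=Z(I)+Z(J)-1$ for $F\in\{\IMP,\OR\}$ (alongside $Z(I\uplus J)=Z(I)Z(J)$). Chaining instances $a_i\cdot\mathbf{2}^i$ then realises any positive integer $a_0+a_12+\cdots+a_k2^k$ as a downset count, a final ``tent'' variable $c$ decouples the two pinned counts so that $F_{\psi_n}(0)=2^{-m}Z(I)$ and $F_{\psi_n}(1)=2^{-m}Z(J)$, and every dyadic-rational pair is implemented \emph{exactly}; the limit over the binary expansions of $G(0),G(1)$ then gives $G$. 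Nothing in your sketch plays the role of this additive construction.

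Your treatment of parts (ii)--(iii) has a second, independent failure. The symmetry reducing (iii) to (ii) is fine (and is what the paper does), but your route via $\NEQ$ does not work: case (v)(b) of Lemma~\ref{lem:binarycases} obtains $\NEQ$ from $\NAND$ by applying the \emph{asymmetric} weight $U=(\frac12,2)$ to drive the matrix to $\left[\begin{smallmatrix}1/4&1\\1&0\end{smallmatrix}\right]$ before powering; with only the constant $\frac12$ every rescaling of $M(\NAND)$ is a scalar multiple of $\left[\begin{smallmatrix}1&1\\1&0\end{smallmatrix}\right]$, whose pointwise powers are again just scalar multiples of $\NAND$ and whose matrix powers converge (after normalisation) to a rank-one, product-form matrix --- in neither sense do you get $\NEQ$. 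You notice yourself that obtaining $\IMP$ and hence all of $\codomainnneg_1$ from $\NAND$ and $\frac12$ would ``prove too much,'' but the fallback you then offer (``such constructions can only produce weights in $\codomaindown_1$ \dots and conversely every such weight is obtained as a limit'') is precisely the statement to be proved, asserted without a construction. The paper instead reuses the ordinal-sum machinery for $\OR$ directly, encoding $G^{(n)}(0)$ and the \emph{difference} $G^{(n)}(1)-G^{(n)}(0)\ge 0$ (this is where the hypothesis $G\in\codomainup_1$ enters) as the two partition functions. I would encourage you to work out the claim $Z(I+_\le J)=Z(I)+Z(J)-1$ for $\IMP$ and $\OR$ yourself; once you have it, the rest of the proof assembles naturally.
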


\begin{proof}
Note that (iii) is the same as (ii) with the roles of 0 and 1 reversed, so we will just prove (i) and~(ii).
We start with a general construction that works for both parts (i) and~(ii) of the proposition.
Let $F$ be a binary function and $I,J$ instances of $\nCSP(F)$. We assume the sets of variables of $I$ and $J$ are disjoint. The disjoint sum $I\uplus J$ of $I$ and $J$ is the instance whose set of variables is the union of those of $I$ and $J$, and $F(x,y)$ is in $I\uplus J$ if and only if $F(x,y)$ occurs in $I$ or $J$.
The ordinal sum $I+_\le J$ of $I$ and $J$ is their disjoint sum along with every atomic formula $F(x,y)$ such that $x$ is a variable of $I$ and $y$ is a variable of $J$.

\begin{claim}
\begin{minipage}[t]{0.75\linewidth}
\begin{enumerate}[topsep=5pt,itemsep=0pt,label=(\roman*)]
\item For any $F\in\codomainbool_2$, $Z(I\uplus J)=Z(I)\cdot Z(J)$.
\item If $F\in\{\IMP,\OR\}$ then $Z(I+_\le J)=Z(I)+Z(J)-1$.
\end{enumerate}
\end{minipage}
\end{claim}

The first part is trivial. To show the second part consider an assignment $(\vecx,\vecy)$ such that $\vecx,\vecy$ map the variables of $I$, $J$, respectively, to $\B$, and $F_{I+_\le} J(\vecx,\vecy)\ne0$. If $F=\IMP$ and any of the components of $\vecx$ equals 1,
then $\vecy=\constone$. However, if $\vecx=\constzero$ (or $\vecy=\constone$) then $\vecy$ (resp.\ $\vecx$) can be any legitimate assignment of $J$ (resp.\ $I$). If $F=\OR$ then one of the $\vecx,\vecy$ must be $\constone$, while the remaining one can be any assignment with $F_I(\vecx)\ne0$ or $F_J(\vecy)\ne0$. This completes the proof of the claim.

\smallskip
\def\onevar{\mathbf{2}}
Denote the instance consisting of a single variable without constraints by~$\onevar$, the disjoint sum of $k$~instances $\onevar$ by $\onevar^k$, and the ordinal sum $I+_\le \ldots+_\le I$ of $k$ copies of instance $I$ by $k\cdot I$.
(Note that the operator $+_\le$ is associative, so this makes sense.)
Let also $\mathbf{\frac12}$ denote the instance consisting of a single nullary $\frac12$ function, and let $\mathbf{\frac12}^k$ denote the sum of $k$ copies of $\mathbf{\frac12}$.  Note that $Z(\onevar)=2$ and $Z(\mathbf{\frac12})=\frac12$,
justifying the notation.  Note that for every natural number~$a$ and every positive integer~$\ell$,
$Z(a \cdot {\bf 2}^{\ell}) = a 2^\ell -a + 1$. (This can be proved by induction on~$a$ with base case $a=0$, using Part~(ii)
of the claim for the inductive step.)
Furthermore, for every positive integer~$k$, if $a_1,\ldots,a_k$ are natural numbers
then
$Z(a_1 \cdot {\bf 2}^1 +_{\leq} \cdots +_{\leq} a_k \cdot {\bf 2}^k) = (a_1 2^1 + \cdots + a_k 2^k) - (a_1 + \cdots +a_k) + 1$.
(This can be proved by induction on~$k$ using base case~$k=1$ using the previous observation.)

Suppose $G\in\codomainbool_1$ and let $G^{(n)}$ be a rational valued approximation to~$G$
such that $G^{(n)}(a)\ne0$ and
$|G^{(n)}(a)-G(a)|\leq 2^{-n}$.
Assume that this rational approximation is given as a finite binary expansion, so that
$G^{(n)}(0)=\frac1{2^m}(a_0+a_12^1+\cdots+a_k2^k)\ne0$ and $G^{(n)}(1)=\frac1{2^m}(b_0+b_12^1+\cdots+b_\ell 2^\ell)\ne0$.
Let $I,J$ be instances of $\nCSP(F)$ given by
\begin{align*}
I &= a_1\cdot\onevar^1+_\le\cdots+_\le a_k\cdot\onevar^k +_\le(a_0+a_1+\cdots+a_k-1)\cdot\onevar,\\
J &= b_1\cdot\onevar^1+_\le\cdots+_\le b_\ell\cdot\onevar^\ell+_\le(b_0+b_1+\cdots+b_\ell-1)\cdot\onevar.
\end{align*}
From the observations above,
$Z( (a_0 + \cdots + a_k -1) \cdot {\bf 2}) = a_0 + \cdots + a_k$ so
$Z(I) = 2^m G^{(n)}(0)$. Similarly, $Z(J) = 2^m G^{(n)}(1)$.
Let $V_0,V_1$ be the variables of $I,J$, respectively, and let $C_0$ (respectively, $C_1$) be the set of pairs $(x,y)$ such that $F(x,y)$ is an atomic formula of $I$ (respectively, $J$).

First suppose $F=\IMP$.  Consider the formula
$$
\psi_n=\sum(\mathbf{\tfrac12})^m\left(\prod_{(a,b)\in C_0\cup C_1}\IMP(a,b)\right)\left( \prod_{a\in V_0}\IMP(c,a)\right)\left( \prod_{b\in V_1}\IMP(b,c)\right)
$$
where $c$ is a new variable and the sum is over all variables in
$V_0\cup V_1$.
An assignment $\vecx:V_0\cup V_1\cup\{c\}\to\B$ can only contribute to $F_{\psi_n}$
if either: $\vecx(c)=0$ and $\vecx(b)=0$ for all $b\in V_1$,
or $\vecx(c)=1$ and $\vecx(a)=1$ for all $a\in V_0$.
Hence $F_{\psi_n}(0)=2^{-m}Z(I)=G^{(n)}(0)$ and $F_{\psi_n}(1)=2^{-m}Z(J)=G^{(n)}(1)$.

Now suppose $F=\OR$, and let $G\in\codomainup_1$ and $G^{(n)}$
be as before, but with the restriction $G(1)>G(0)$.
This time,
let $G^{(n)}(0)=\frac1{2^m}(a_0+a_12^1+\cdots+a_k2^k)\ne0$
and $G^{(n)}(1)-G^{(n)}(0)=\frac1{2^m}(b_0+b_12^1+\ldots+b_\ell 2^\ell)\ne0$.
Here we are using the fact that $G^{(n)}(1) - G^{(n)}(0)>0$,
which follows from $G(1)>G(0)$ for sufficiently large $n$.
Let instances $I,J$ be defined for the values $a_0+a_12^1+\cdots+a_k2^k$,
$1+b_0+b_12^1+\cdots+b_\ell 2^\ell$
in the similar way to before (but note the extra 1);
and let $V_0,V_1,C_0,C_1$ again denote the set of variables and constraints of $I,J$. As before $Z(I)=2^mG^{(n)}(0)$
and $Z(J)=2^m(G^{(n)}(1)-G^{(n)}(0))+1$.
Let $K=I+_\le J$ and let $C$ be the set of scopes of the constraints in
$K$.
Consider the formula
$$
\psi_n=\sum(\mathbf{\tfrac12})^m\left(\prod_{(a,b)\in C}\OR(a,b)\right)\left( \prod_{b\in V_1}\OR(b,c)\right),
$$
where $c$ is a new free variable and the sum is over all variables
in $V_0\cup V_1$.  An assignment $\vecx:V_0\cup V_1\cup\{c\}\to\B$ can
only contribute to $F_{\psi_n}$ if either: $\vecx(c)=0$ and
$\vecx(b)=1$ for all $b\in V_0$, or $\vecx(c)=1$.  Therefore
$F_{\psi_n}(0)=2^{-m}Z(I)=G^{(n)}(0)$ and
$F_{\psi_n}(1)=2^{-m}Z(I+_\le J)=2^{-m}(Z(I)+Z(J)-1)=G^{(n)}(1)$.

To obtain the efficient version of the proposition let $M_0$ and $M_1$ be TMs that, given $n$, compute the first $n$ bits of $G(0)$ and $G(1)$ respectively, in time polynomial in $n$. Then a TM $M'$ that constructs $G_\epsilon\in\fclone{F,\frac12}$ such that $\|G_\epsilon-G\|_\infty<\epsilon$ works as follows.
First, it finds the smallest $n$ such that $n>\log\epsilon^{-1}$. Then it runs $M_0$ and $M_1$ on input $n$ to find $G^{(n)}(0)$ and $G^{(n)}(1)$. Finally, $M'$ outputs the formula $\psi_n$. The running time of $M'$ is the sum of: the time to run $M_0$, the time to run $M_1$, and time $O(\log^2\epsilon^{-1})$ to construct $\psi_n$.
\end{proof}

\begin{corollary}\label{non-monotone}
Let $G\in\codomainnneg_1$. (For the results on efficient \ppsdefinability\ assume $G\in\codomaineff_1$.)
\begin{enumerate}[topsep=5pt,itemsep=0pt,label=(\roman*)]
\item If $G(0)>G(1)$ and $G(1)\ne0$ then $\codomainnneg_1\subseteq\fclonelim{\OR,G,\frac12}$,\\[0.25\baselineskip]
and $\codomaineff_1\subseteq\fclonelimeff{\OR,G,\frac12}$.
\item If $G(0)<G(1)$ and $G(0)\ne0$ then $\codomainnneg_1\subseteq\fclonelim{\NAND,G,\frac12}$,\\[0.25\baselineskip]
and $\codomaineff_1\subseteq\fclonelimeff{\NAND,G,\frac12}$.
\end{enumerate}
\end{corollary}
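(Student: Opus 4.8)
The plan is to deduce both parts of the corollary from Lemma~\ref{lem:weights} by first showing that the disequality function $\NEQ$ is \omegadef{} over the given ingredients, and then using $\NEQ$ to turn ``up'' unary weights into ``down'' weights and vice versa. Throughout I use the matrix calculus of \S\ref{sec:binary}: applying a unary weight to a coordinate, multiplying by a constant, and taking the pointwise power $F_1^k$ are all pps-operations.

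\emph{Part (i).} Here $G(0)>G(1)>0$ and we work inside $\fclonelim{\OR,G,\frac12}$. First I would invoke Lemma~\ref{lem:weights}(ii) to get $\codomainup_1\subseteq\fclonelim{\OR,\frac12}$; in particular every nonnegative constant unary function is available, since a constant $(c,c)$ lies in $\codomainup_1$. Next, $H(x_1,x_2)=G(x_1)G(x_2)\OR(x_1,x_2)$ is directly pps-definable, with $M(H)=\bigl[\begin{smallmatrix}0&G(0)G(1)\\ G(0)G(1)&G(1)^2\end{smallmatrix}\bigr]$. Multiplying $H$ by the constant $G(0)^{-1}G(1)^{-1}$ (available since $G(0),G(1)>0$, and, for the efficient version, polynomial-time computable because the polynomial-time computable reals form a field) yields a function $F_1\in\fclonelim{\OR,G,\frac12}$ with $M(F_1)=\bigl[\begin{smallmatrix}0&1\\ 1&\alpha\end{smallmatrix}\bigr]$, where $\alpha=G(1)/G(0)\in(0,1)$. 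The off-diagonal entries of $F_1^k$ remain $1$ while $\alpha^k\to0$, so $\lim_{k\to\infty}F_1^k=\NEQ$; this is a powering limit, hence efficient, giving $\NEQ\in\fclonelim{\OR,G,\frac12}$ (and, efficiently, $\NEQ\in\fclonelimeff{\OR,G,\frac12}$). Finally, take any $U\in\codomainnneg_1$: if $U(0)\le U(1)$ then $U\in\codomainup_1\subseteq\fclonelim{\OR,\frac12}$; otherwise the reflection $\bar U(y):=U(1-y)$ has $\bar U(0)\le\bar U(1)$, so $\bar U\in\fclonelim{\OR,\frac12}$, and $U(x)=\sum_{y}\bar U(y)\,\NEQ(x,y)$ is a pps-formula over functions already in $\fclonelim{\OR,G,\frac12}$, whence $U\in\fclonelim{\OR,G,\frac12}$ by Lemma~\ref{lem:fclonelimTransitive}. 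This gives $\codomainnneg_1\subseteq\fclonelim{\OR,G,\frac12}$.

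\emph{Part (ii).} This is symmetric, exchanging the roles of $0$ and $1$: with $0<G(0)<G(1)$, use Lemma~\ref{lem:weights}(iii) to get $\codomaindown_1\subseteq\fclonelim{\NAND,\frac12}$ (hence all constants), form $G(x_1)G(x_2)\NAND(x_1,x_2)$ and normalise to a function with matrix $\bigl[\begin{smallmatrix}\beta&1\\ 1&0\end{smallmatrix}\bigr]$, $\beta=G(0)/G(1)\in(0,1)$, take the powering limit to obtain $\NEQ$, and write any $U\in\codomainnneg_1$ with $U(0)<U(1)$ as $\sum_y\bar U(y)\NEQ(x,y)$ with $\bar U\in\codomaindown_1$. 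The efficient statements follow verbatim, replacing $\fclonelim{\cdot}$ by $\fclonelimeff{\cdot}$, $\codomainnneg_1$ by $\codomaineff_1$, $\codomainup_1,\codomaindown_1$ by $\codomainupeff_1,\codomaindowneff_1$, Lemma~\ref{lem:fclonelimTransitive} by Lemma~\ref{lem:fclonelimeffTransitive}, and using the efficient halves of Lemma~\ref{lem:weights}.

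I do not expect a real obstacle: the argument is a short composition of already-established facts. The only points needing care are (a) that the normalising constant $G(0)^{-1}G(1)^{-1}$ is genuinely available as an ``up'' (resp.\ ``down'') weight --- this is exactly where the hypotheses $G(1)\ne0$ (resp.\ $G(0)\ne0$) and the membership of constants in $\codomainup_1$ (resp.\ $\codomaindown_1$) are used --- and (b) the efficiency bookkeeping across the several \omegadefinable{} steps, which is precisely what Lemma~\ref{lem:fclonelimeffTransitive} is designed to handle.
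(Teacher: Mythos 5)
Your proof is correct, but it takes a genuinely different route from the paper's. The paper never constructs $\NEQ$: given $H\in\codomainnneg_1$ with $H(0)>H(1)$, it chooses $k$ so large that $G(0)^k/G(1)^k>H(0)/H(1)$, observes that $H'=H/G^k$ then lies in $\codomainup_1$ and is therefore already supplied by Lemma~\ref{lem:weights}, and writes $H=H'\cdot G^k$ as a plain product --- so the only limits used are those hidden inside Lemma~\ref{lem:weights} itself. You instead spend $G$ once and for all to manufacture $\NEQ$ (via the weighted-$\OR$ powering limit, which is exactly the gadget from case~(v) of Lemma~\ref{lem:binarycases} with the weight $(2,\frac12)$ replaced by a normalisation of $G$) and then reflect every down-weight through $\NEQ$ as $U(x)=\sum_y \bar U(y)\NEQ(x,y)$. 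Your version costs one extra limit but buys uniformity: in particular it handles the boundary case $H(1)=0$ (a positive multiple of $\unaryzero$) with no special argument, whereas the paper's inequality $G(0)^k/G(1)^k>H(0)/H(1)$ tacitly assumes $H(1)>0$, and that case would need a separate small limit such as $G^k/G(0)^k\to\unaryzero$. Conversely, the paper's argument is shorter and exhibits each such $H$ as pps-definable with no new limit over $\{\OR,G\}\cup\codomainup_1$. Your efficiency bookkeeping (the normalising constant lying in $\codomainupeff_1$ because the polynomial-time computable reals form a field, and the composition of steps handled by Lemma~\ref{lem:fclonelimeffTransitive}) is in order, so both routes establish the corollary.
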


\begin{proof}
We prove (i), as (ii) is quite similar. Let $H$ be a function in~$\codomainbool_1$.
If $H(0)\le H(1)$ then, by Proposition~\ref{lem:weights},
$H\in\fclonelim{\OR,\frac12}$ (or $H\in\fclonelimeff{\OR,\frac12}$). Assume $H(0)>H(1)$. There is $k$ such that $\frac{G(0)^k}{G(1)^k}>\frac{H(0)}{H(1)}$.
Let $H' = H/G^k$.  Then $H' \in \codomainup_1$ so, by Proposition~\ref{lem:weights},
$H' \in \fclonelim{\OR,\frac12}$. Hence $H\in\fclonelim{\OR,G,\frac12}$.
Also, if $G,H\in \codomaineff_1$ then  $H' \in \codomainupeff_1$ so $H' \in \fclonelimeff{\OR,\frac12}$.
Hence $H\in\fclonelimeff{\OR,G,\frac12}$.
\end{proof}

\begin{corollary}\label{cor:or-imp-nand}
$\codomainnneg\subseteq\fclonelim{\OR,\NAND,\frac12}$ and $\codomaineff\subseteq\fclonelimeff{\OR,\NAND,\frac12}$.
\end{corollary}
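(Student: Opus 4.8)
The plan is to feed Lemma~\ref{lem:OR} through Proposition~\ref{lem:weights}, using transitivity of the clone operator to absorb the unary weights. Lemma~\ref{lem:OR} already gives $\fclonelimeff{\OR,\codomaineff_1}=\codomaineff$, so it suffices to show that $\codomaineff_1$ can be absorbed into $\fclonelimeff{\OR,\NAND,\tfrac12}$. The key observation is that every unary weight $U\in\codomaineff_1$ satisfies either $U(0)\le U(1)$ or $U(1)\le U(0)$, i.e.\ $\codomaineff_1=\codomainupeff_1\cup\codomaindowneff_1$. By Proposition~\ref{lem:weights}(ii), any $U\in\codomainupeff_1$ lies in $\fclonelimeff{\OR,\tfrac12}\subseteq\fclonelimeff{\OR,\NAND,\tfrac12}$, and by Proposition~\ref{lem:weights}(iii), any $U\in\codomaindowneff_1$ lies in $\fclonelimeff{\NAND,\tfrac12}\subseteq\fclonelimeff{\OR,\NAND,\tfrac12}$, where the inclusions are just monotonicity of $\fclonelimeff\cdot$ in its generating set. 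Hence $\codomaineff_1\subseteq\fclonelimeff{\OR,\NAND,\tfrac12}$.

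Next I would show $\fclonelimeff{\OR,\NAND,\tfrac12,\codomaineff_1}=\fclonelimeff{\OR,\NAND,\tfrac12}$. The inclusion $\supseteq$ is trivial. For $\subseteq$, let $H$ belong to the left-hand side; by Observation~\ref{obs:26July} there is a finite set $\{U_1,\ldots,U_r\}\subseteq\codomaineff_1$ with $H\in\fclonelimeff{\OR,\NAND,\tfrac12,U_1,\ldots,U_r}$. Since each $U_i\in\fclonelimeff{\OR,\NAND,\tfrac12}$ by the previous paragraph, $r$ successive applications of Lemma~\ref{lem:fclonelimeffTransitive} collapse the generating set, giving $\fclonelimeff{\OR,\NAND,\tfrac12,U_1,\ldots,U_r}=\fclonelimeff{\OR,\NAND,\tfrac12}$, so $H\in\fclonelimeff{\OR,\NAND,\tfrac12}$.

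Combining these two facts with monotonicity yields $\codomaineff=\fclonelimeff{\OR,\codomaineff_1}\subseteq\fclonelimeff{\OR,\NAND,\tfrac12,\codomaineff_1}=\fclonelimeff{\OR,\NAND,\tfrac12}$, and the reverse inclusion $\fclonelimeff{\OR,\NAND,\tfrac12}\subseteq\codomaineff$ holds by definition since $\OR,\NAND,\tfrac12\in\codomaineff$; this proves the efficient assertion. The non-efficient statement $\codomainnneg\subseteq\fclonelim{\OR,\NAND,\tfrac12}$ follows by the identical chain, using the non-efficient versions of Lemma~\ref{lem:OR} (whose proof equally gives $\fclonelim{\OR,\codomainnneg_1}=\codomainnneg$), Proposition~\ref{lem:weights}, and Lemma~\ref{lem:fclonelimTransitive}. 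I do not anticipate any genuine obstacle; the only point requiring care is that $\codomaineff_1$ is infinite, so the weights cannot be adjoined to the generating set all at once via the transitivity lemma --- this is exactly what Observation~\ref{obs:26July} handles, reducing the argument to adjoining finitely many unary functions one at a time.
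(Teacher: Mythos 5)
Your proof is correct, and it reaches the conclusion by a slightly different middle step than the paper. Both arguments share the same skeleton: establish $\codomaineff_1\subseteq\fclonelimeff{\OR,\NAND,\tfrac12}$, then invoke Lemma~\ref{lem:OR} and collapse the generating set by transitivity. The difference is in how the unary weights are captured. You split $\codomaineff_1=\codomainupeff_1\cup\codomaindowneff_1$ and call on parts (ii) and (iii) of Lemma~\ref{lem:weights} separately, so that $\OR$ supplies the increasing weights and $\NAND$ the decreasing ones; this bypasses Corollary~\ref{non-monotone} entirely. The paper instead uses $\NAND$ only once, to manufacture the single decreasing weight $U(x)=\sum_y\NAND(x,y)$ (so $U(0)=2$, $U(1)=1$) by summation, and then applies Corollary~\ref{non-monotone}(i) --- whose proof divides an arbitrary decreasing weight $H$ by a power $G^k$ to make it increasing --- to recover all of $\codomaineff_1$ from $\OR$, $U$ and $\tfrac12$. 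Your route is marginally more economical in machinery, needing only Lemma~\ref{lem:weights}; the paper's route shows the slightly sharper fact that a single non-monotone unary weight, rather than all of $\codomaindowneff_1$, already suffices alongside $\OR$. You are also more explicit than the paper on two points it leaves implicit: the use of Observation~\ref{obs:26July} together with iterated applications of Lemma~\ref{lem:fclonelimeffTransitive} to adjoin the infinitely many unary weights finitely many at a time, and the fact that the non-efficient half of the statement relies on the (unstated but true) non-efficient analogue of Lemma~\ref{lem:OR}, which indeed follows from the same construction with the computability requirements dropped.
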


\begin{proof}
Let $F\in\codomaineff$.
Let $U(x)=\sum_y \NAND(x,y)$. Then $U(0)=2$ and $U(1)=1$ and $U\in\fclonelimeff{\OR,\NAND,\frac12}$.
By Corollary \ref{non-monotone} we have $\codomaineff_1\subseteq\fclonelimeff{\OR,U,\frac12}$, and
by Lemma~\ref{lem:OR},
$F\in\fclonelimeff{\OR,\codomaineff_1}$. So $F\in\fclonelimeff{\OR,\NAND,\frac12}$
by Lemma~\ref{lem:fclonelimeffTransitive}.
\end{proof}

\bibliographystyle{siam}
\bibliography{\jobname}

\end{document}